\documentclass[12pt]{article}
\usepackage{color}
\usepackage{graphicx}
\usepackage{amsmath}
\usepackage{amssymb}
\usepackage{mathrsfs}
\usepackage[numbers,sort&compress]{natbib}
\usepackage{amsthm}
\numberwithin{equation}{section}
\bibliographystyle{unsrt}

\usepackage{pgf}
\usepackage{CJK}
\usepackage{graphicx}
\usepackage{tikz}
\usepackage{stmaryrd}
\usepackage{graphicx}
\usepackage{hyperref}
\usepackage[latin1]{inputenc}
\usepackage[T1]{fontenc}
\usepackage[english]{babel}
\usepackage{listings}
\usepackage{xcolor,mathrsfs,url}
\usepackage{amssymb}
\usepackage{amsmath}
\usepackage{ifthen}
\newtheorem{theorem}{Theorem}
\newtheorem{lemma}{Lemma}
\newtheorem{corollary}{Corollary}
\newtheorem{Proposition}{Proposition}
\newtheorem{Assumption}{Assumption}
\usepackage {caption}
\usepackage{float}
\usepackage{subfigure}

\DeclareMathOperator*{\res}{Res}

\textwidth 14.5cm
\begin{document}

\title{ Long time asymptotic  behavior for  the  derivative   Schr$\ddot{o}$dinger equation with  nonzero boundary conditions }
\author{Yiling YANG$^1$, Qiaoyuan CHENG$^1$ and Engui FAN$^{1}$\thanks{\ Corresponding author and email address: faneg@fudan.edu.cn } }
\footnotetext[1]{ \  School of Mathematical Sciences, Fudan University, Shanghai 200433, P.R. China.}

\date{ }

\maketitle
\begin{abstract}
\baselineskip=17pt

In this paper, we  apply $\overline\partial$ steepest descent method  to study the Cauchy problem for the derivative nonlinear Schr$\ddot{o}$dinger equation with nonzero boundary conditions
\begin{align}
&iq_{t}+q_{xx}+i (|q|^2q)_{x}=0,  \nonumber\\
& q(x,0)=q_0(x), \nonumber
\end{align}	
 where  $\lim_{x\to\pm\infty}q_0(x)=q_\pm,\ \ |q_\pm|=1$.
Based on the spectral analysis of the Lax pair, we express    the  solution of the derivative  Schr$\ddot{o}$dinger equation
in terms of   solutions  of  a Riemann-Hilbert problem.
In a fixed space-time solitonic region $-3<x/t<-1$,
we   compute the long time asymptotic expansion of the solution $q(x,t)$,
which implies  soliton resolution conjecture and can
be characterized with  an $N(\Lambda)$-soliton whose parameters are modulated by
a sum of localized soliton-soliton
 interactions as one moves through the region; the  residual error order
 $\mathcal{O}( t^{-3/4})$ from a $\overline\partial$ equation.\\
{\bf Keywords:}    Derivative   Schr$\ddot{o}$dinger   equation,  Riemann-Hilbert problem,    $\overline\partial$   steepest descent method, soliton resolution,  asymptotic   stability.\\
{\bf AMS:} 35Q51; 35Q15; 37K15; 35C20.

\end{abstract}

\baselineskip=17pt

\newpage

\tableofcontents

\baselineskip=18pt
\section {Introduction}

\quad
The  study  on the long-time behavior of nonlinear wave equations which is solvable by the inverse scattering method was first  carried out by Manakov in 1974 \cite{Manakov1974}.
By using this method, Zakharov and Manakov   give the first result   for large-time asymptotic  of solutions for the  NLS equation  with  decaying initial data \cite{ZM1976}.
   The inverse scattering method    also    worked  for long-time behavior of integrable systems    such as  KdV,  Landau-Lifshitz  and the reduced Maxwell-Bloch   system \cite{SPC,BRF,Foka}.
     In 1993,  Deift and Zhou  developed a  nonlinear steepest descent method to rigorously obtain the long-time asymptotics behavior of the solution for the MKdV equation
by deforming contours to reduce the original Riemann-Hilbert (RH) problem   to a  model one  whose solution is calculated in terms of parabolic cylinder functions \cite{RN6}.
Since then    this method
has been widely  applied  to  the focusing NLS equation, KdV equation, Fokas-Lenells equation,  short-pulse equation and  Camassa-Holm equation  etc. \cite{RN9,RN10,Grunert2009,MonvelCH,xu2015,xusp}.

In recent years,   McLaughlin and   Miller further  presented a $\bar\partial$ steepest descent method which combine   steepest descent  with  $\bar{\partial}$-problem  rather than the asymptotic analysis
 of singular integrals on contours to analyze asymptotic of orthogonal polynomials with non-analytical weights  \cite{MandM2006,MandM2008}.
When  it  is applied  to integrable systems,   the $\bar\partial$ steepest descent method  also has  displayed some advantages,  such as   avoiding delicate estimates involving $L^p$ estimates  of Cauchy projection operators, and leading  the non-analyticity in the RH problem  reductions to a $\bar{\partial}$-problem in some sectors of the complex plane  which can be solved by being recast into an integral equation and by using Neumann series.   Dieng and  McLaughin used it to study the defocusing NLS equation  under essentially minimal regularity assumptions on finite mass initial data \cite{DandMNLS}; This $\bar\partial$ steepest descent method   was also successfully applied to prove asymptotic stability of N-soliton solutions to focusing NLS equation \cite{fNLS}; Jenkins et.al  studied  soliton resolution for the derivative nonlinear NLS equation  for generic initial data in a weighted Sobolev space \cite{Liu3}.  Their work provided  the   soliton resolution property  for  derivative NLS equation, which    decomposes  the solution into the sum of a finite number of separated solitons and a radiative parts when $t\to\infty$.  And  the dispersive part contains two components, one coming from the continuous spectrum and another from the interaction of the discrete and continuous spectrum.   For finite density initial data, Cussagna and  Jenkins studied the defocusing NLS equation \cite{SandRNLS}.

In this paper, we study the long time asymptotic behavior  for     the derivative nonlinear
 Schr$\ddot{o}$dinger (DNLS) equation with nonzero boundary conditions
\begin{align}
	&iq_{t}+q_{xx}+i\sigma(|q|^2q)_{x}=0,\label{DNLS} \\
	& q(x,0)=q_0(x),\hspace{0.5cm} \label{DNLS2}
\end{align}	
where  $\lim_{x\to\pm\infty}q_0(x)=q_\pm,\ \ |q_\pm|=1$.
Since the solution space of the  equation (\ref{DNLS})  with $ \sigma=1 $ and $ \sigma=-1 $ by the simple mapping $q(x,t)\to q(-x,t)$,
we only need to consider the case   $\sigma=-1$ in our paper.
 The DNLS equation as  a  completely integrable  system  was first proposed  by Kaup and Newell   \cite{KN1978}.

  The DNLS equation is often used to describe various nonlinear waves. For instance, DNLS equation governs  the evolution of small but finite amplitude
nonlinear Alfv$\acute{e}$n waves which propagates quasi-parallel to the magnetic field in space plasma physics \cite{R1971,Mj1976,Mi1976,Mj1989,Mj1997},
sub-picosecond pulses in single mode optical fibers \cite{DM1983,NM1981,GP}.  Moreover,  DNLS equation also describe  weak nonlinear electromagnetic waves in ferromagnetic \cite{N1991}, dielectric \cite{N1993}   and anti-ferromagnetic systems under external magnetic fields \cite{DV2002}.   Either zero boundary conditions   or nonzero boundary conditions   for the DNLS equation
 have well physically significant.  For problems of nonlinear Alfve waves, weak nonlinear electromagnetic waves in magnetic and dielectric media, waves propagating strictly parallel to the ambient magnetic fields are modeled by  zero boundary conditions,  while those oblique waves are modeled by the  nonzero boundary conditions.
 In optical fibers, pulses under bright background waves are modeled by the zero boundary conditions.
    Much work    on the DNLS equation    were also developed in \cite{ZH2007,KI1978,CL2004,C2006,L2007}.

     Zhang and  Yan presented the inverse scattering transform  of the DNLS equation (\ref{DNLS}) for both
     zero/nonzero boundary conditions  in terms of the matrix  RH  problems  \cite{ZGQ}.
For  Schwartz initial value $q_0\in\mathcal{S}(\mathbb{R})$,  Xu and Fan  derived the long-time asymptotic for (\ref{DNLS})  without soliton \cite{xf2013}
\begin{equation}
	q(x,t)=t^{-\frac{1}{2}}\alpha(\lambda_0)e^{\frac{ix^2}{4t}-i\nu(\lambda_0)\log t}+\mathcal{O}(t^{-1}\log t).
\end{equation}
   The long-time asymptotic for (\ref{DNLS})  with  step-like initial data
was further investigated \cite{Xf2014}.
Recently for generic initial data  in $H^{2,2} (\mathbb{R})$,   applying   $\bar\partial$ steepest descent method,   Jenkins et al  obtained
the following asymptotics for the  equation  (\ref{DNLS}) \cite{Liu3}
\begin{equation}
	q(x,t)=q_{sol}(x,t;D_I)+t^{-\frac{1}{2}}f(x,t)+\mathcal{O}(t^{-\frac{3}{4}}),
\end{equation}
where $q_{sol}(x,t;D_I)$ is the  soliton solutions of  the equation (\ref{DNLS}) with modulating reflectionless scattering
data. In our paper, for  finite density initial data $q_0-q_\pm \in H^{1,1}(\mathbb{R})$,    we  apply   $\bar\partial$ steepest descent method   to
    obtain  the  following    long-time asymptotic  of  the DNLS equation   (\ref{DNLS})
	\begin{align}
	q(x,t)=exp\left\lbrace \frac{i}{2}\int_{-\infty}^x( |q^r_\Lambda(x,t)|^2-1)dy \right\rbrace T(\infty)^{-2}q^r_\Lambda(x,t)+\mathcal{O}(t^{-3/4}).\label{resultu}
	\end{align}

This  paper is arranged as follows. In section \ref{sec2},
 we recall  some main  results on  the construction  process  of the  RH  problem with respect to  the initial problem of the DNLS equation  (\ref{DNLS})
  obtained in  \cite{CL2004,ZGQ},  which will be used
 to analyze   long-time asymptotics  of the DNLS equation in our paper.   In section \ref{sec3},   we   introduce
a function $T(z)$ to  define a new   RH problem  for  $M^{(1)}(z)$,  which  admits a regular discrete spectrum and  two  triangular  decompositions of the jump matrix.
   In section \ref{sec4},  by introducing a matrix-valued  function  $R(z)$,  we obtain  a mixed $\bar{\partial}$-RH problem  for  $M^{(2)}(z)$  by continuous extension of  $M^{(1)}(z)$.
     In section \ref{sec5},  we decompose  $M^{(2)}(z)$    into a
 model RH   problem  for  $M^{(r)}(z)$ and a  pure $\bar{\partial}$ Problem for  $M^{(3)}(z)$.
 The  $M^r(z)$  can be obtained  via  an modified reflectionless RH problem $M^{(r)}_\Lambda(z)$   for the soliton components which  is solved   in Section \ref{sec6}.
  In section \ref{sec7},   the error function  $E(z)$ between $M^{(r)}(z)$ and $M^{(r)}_\Lambda(z)$ can be computed  with a  small-norm RH problem.
 In Section \ref{sec8},   we analyze  the $\bar{\partial}$-problem  for $M^{(3)}(z)$.
   Finally, in Section \ref{sec9},   based on  the result obtained above,   a relation formula
   is found
\begin{align}
 M(z) = T(\infty)^{\sigma_3}M^{(3)}(z)E(z)M^{(r)}_\Lambda(z)R^{(2)}(z)^{-1}T(z)^{-\sigma_3},\nonumber
\end{align}
from which   we then obtain the   long-time   asymptotic behavior  for the DNLS equation (\ref{DNLS}) via a reconstruction formula.

\section {The spectral analysis and    a RH problem}\label{sec2}

\quad The DNLS equation (\ref{DNLS}) is completely integrable and    admits the Lax pair \cite{KN1978}
\begin{equation}
\Phi_x = X \Phi,\hspace{0.5cm}\Phi_t =T \Phi, \label{lax0}
\end{equation}
while
\begin{equation}
X=ik^2 \sigma_3+kQ,\nonumber
\end{equation}
\begin{equation}
T=-\left(2k^2+Q^2 \right)X-ikQ_x\sigma_3,\nonumber
\end{equation}
where $k\in \mathbb{C}$ is a spectral parameter and
\begin{equation}
	Q=\left(\begin{array}{cc}
		0 & q  \\
		-\bar{q} & 0
	\end{array}\right), \ \ \sigma_3=\left(\begin{array}{cc}
1 & 0   \\
0 & -1
\end{array}\right).
\end{equation}

By using  the boundary condition  (\ref{DNLS}),    the Lax pair (\ref{lax0})  becomes
\begin{equation}
	\Phi_{\pm,x}\sim X_\pm\Phi_\pm,\hspace{0.5cm}\Phi_{\pm,t}\sim T_\pm\Phi_\pm, \ \ \  \ x\rightarrow \pm \infty,\label{asymptoticlax}
\end{equation}
where
\begin{equation}
	X_\pm=ik^2 \sigma_3+kQ_\pm,\hspace{0.5cm}T_\pm=-\left(2k^2-1 \right)X_\pm,
\end{equation}
and
\begin{equation*}
	Q_\pm=\left(\begin{array}{cc}
		0 & q_\pm \\
		-\bar{q}_\pm & 0
	\end{array}\right).
\end{equation*}

The eigenvalues of the matrix $X_\pm$ are $\pm ik\lambda$, which  satisfy
\begin{equation}
	\lambda^2=k^2+1.
\end{equation}
To avoid multi-valued case  of   eigenvalue  $\lambda$,   we  introduce    a uniformization variable
\begin{equation}
	z= k+\lambda,
\end{equation}
and  obtain two single-valued functions
\begin{equation}
	k(z)=\frac{1}{2}(z-\frac{1}{z}),\hspace{0.5cm}\lambda(z)=\frac{1}{2}(z+\frac{1}{z}).\label{uniformization55}
\end{equation}
Define two domains $D^+$, $D^-$ and their boundary   $\Sigma$ on $z$-plane by
\begin{align*}
	&D^-=\{z:{\rm Re}z{\rm Im}z<0\},\hspace{0.5cm}D^+=\{z:{\rm Re}z{\rm Im}z>0\},\\
	&  \Sigma=\{z:{\rm Re}z{\rm Im}z=0\}=  \mathbb{R} \cup i\mathbb{R}\backslash \{0\},
\end{align*}
which   are shown in Figure \ref{fig:figure2}.
\begin{figure}[H]
	\centering
		\begin{tikzpicture}[node distance=2cm]
		\filldraw[yellow!40,line width=2] (2.4,0.01) rectangle (0.01,2.4);
		\filldraw[yellow!40,line width=2] (-2.4,-0.01) rectangle (-0.01,-2.4);
		\draw[->](-2.5,0)--(3,0)node[right]{$\mathbb{R}$};
		\draw[->](0,-2.5)--(0,3)node[above]{$i\mathbb{R}$};
		\draw[->](0,0)--(-0.8,0);
		\draw[->](-0.8,0)--(-1.8,0);
		\draw[->](0,0)--(0.8,0);
		\draw[->](0.8,0)--(1.8,0);
		\coordinate (A) at (0.5,1.2);
		\coordinate (B) at (0.6,-1.2);
		\coordinate (G) at (-0.6,1.2);
		\coordinate (H) at (-0.5,-1.2);
		\coordinate (I) at (0.16,0);
		\fill (A) circle (0pt) node[right] {$D^+$};
		\fill (B) circle (0pt) node[right] {$D_-$};
		\fill (G) circle (0pt) node[left] {$D_-$};
		\fill (H) circle (0pt) node[left] {$D^+$};
		\fill (I) circle (0pt) node[below] {$0$};
	\end{tikzpicture}
	\caption{ The domains  $D^-$, $D^+$  and boundary  $\Sigma=\mathbb{R}\cup i\mathbb{R}\backslash \{0\}$.}
	\label{fig:figure2}
\end{figure}
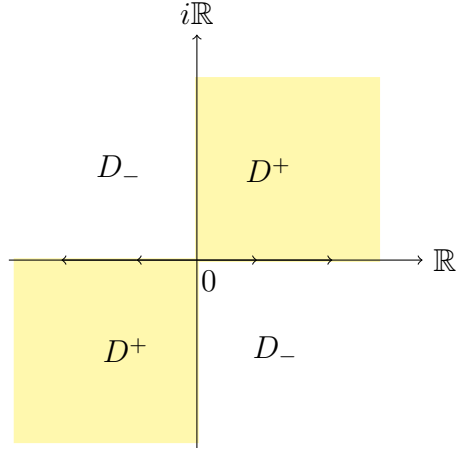

We  derive the solution of the asymptotic spectral problem (\ref{asymptoticlax})
\begin{equation}
	\Phi_\pm\sim
		Y_\pm e^{ik(z)\lambda(z)x\sigma_3},
\end{equation}
where
\begin{equation*}
	Y_\pm=\left(\begin{array}{cc}
		1 & \frac{iq_\pm}{z} \\
		\frac{i\bar{q}_\pm}{z} & 1
	\end{array}\right).
\end{equation*}
By making transformation
\begin{equation}
	\mu_\pm=\Phi_\pm e^{-ik\lambda x\sigma_3},\label{trans}
\end{equation}
then  we have
\begin{align}
&\mu_\pm \sim Y_\pm, \hspace{0.5cm} x \rightarrow \pm\infty,\nonumber\\
&\det[\Phi_\pm]=\det[\mu_\pm] =\det [Y_\pm]=1+z^{-2},\nonumber
\end{align}
and $\mu_\pm$  satisfy the Volterra integral equations
\begin{align}
	&\mu_\pm(z)=Y_\pm+\int_{\pm\infty}^{x}Y_\pm e^{ik\lambda(x-y){\hat{\sigma}}_3}[Y_\pm^{-1}\Delta X_\pm\mu_\pm(z)]dy, \ \ \ z\not= \pm i, \ \ \label{jost1}
\end{align}
\begin{align}
	&\mu_\pm(z)=Y_\pm+\int_{\pm\infty}^{x}\left[I+(x-y)X_\pm(z) \right] \Delta X_\pm\mu_\pm(z)dy,
	\ \ \ z=\pm i,\label{jost2}
\end{align}
where $\Delta X_\pm=k\left(Q-Q_\pm \right) $.

It can be shown that the eigenfunction $\mu_\pm$ admit  symmetry   \cite{ZGQ}.
\begin{Proposition}\label{sym}
	Jost functions  admit two  reduction conditions on
	the $z$-plane:
	
	The first symmetry reduction:
\begin{equation}
\mu_\pm(z)=\sigma_2\overline{\mu_\pm(\bar{z})}\sigma_2=\sigma_1\overline{\mu_\pm(-\bar{z})}\sigma_1.\label{symPhi1}
\end{equation}

The second symmetry reduction:
\begin{equation}
	\mu_\pm(z)=\frac{i}{z}\mu_\pm(-z^{-1})\sigma_3Q_\pm.\label{symPhi2}
\end{equation}
\end{Proposition}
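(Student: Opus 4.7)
The plan is to prove each symmetry by a uniqueness argument: for a candidate transformation, I check that it sends any solution of the $x$-Lax equation $\Phi_x = X\Phi$ to another solution, and that the transformed Jost function has the same asymptotic normalization $\mu \sim Y_\pm$ as $x\to\pm\infty$ that uniquely pins down $\mu_\pm$ through the Volterra equations (\ref{jost1})--(\ref{jost2}). Since these integral equations are solved by absolutely convergent Neumann series, matching both the differential equation and the boundary condition forces equality.

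For (\ref{symPhi1}) I would first verify the operator identities $\sigma_2 X(z)\sigma_2 = \overline{X(\bar z)}$ and $\sigma_1 X(z)\sigma_1 = \overline{X(-\bar z)}$. These reduce to the algebraic computations $\sigma_j \sigma_3 \sigma_j = -\sigma_3$, $\sigma_2 Q \sigma_2 = \bar Q$, $\sigma_1 Q \sigma_1 = -\bar Q$, combined with the scalar identities $k(\bar z) = \overline{k(z)}$ and $k(-\bar z) = -\overline{k(z)}$ read off from (\ref{uniformization55}). This shows $\sigma_2 \overline{\Phi_\pm(\bar z)}\sigma_2$ and $\sigma_1 \overline{\Phi_\pm(-\bar z)}\sigma_1$ solve the same Lax equation in $x$. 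For the asymptotic matching, writing $B = \sigma_3 Q_\pm$ so that $Y_\pm(z) = I + \tfrac{i}{z}B$, the identities $\sigma_2 \bar B \sigma_2 = -B$ and $\sigma_1 \bar B \sigma_1 = B$ give $\sigma_2 \overline{Y_\pm(\bar z)}\sigma_2 = Y_\pm(z)$ and $\sigma_1 \overline{Y_\pm(-\bar z)}\sigma_1 = Y_\pm(z)$, so uniqueness delivers (\ref{symPhi1}).

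For (\ref{symPhi2}) the key observation is that $k(-z^{-1}) = k(z)$ while $\lambda(-z^{-1}) = -\lambda(z)$, so $X$ is invariant under $z\mapsto -z^{-1}$ but the oscillatory factor $e^{ik\lambda x\sigma_3}$ in the Jost normalization flips sign. Hence $\Phi_\pm(-z^{-1})$ again satisfies $\Phi_x = X(z)\Phi$, and right-multiplication by any $x$-independent matrix remains a solution. Taking $M(z) = \tfrac{i}{z}\sigma_3 Q_\pm$, which is off-diagonal and therefore anticommutes with $\sigma_3$, exactly absorbs the sign flip of the exponent, so the product $\Phi_\pm(-z^{-1})M(z)$ is normalized by $e^{ik\lambda x\sigma_3}$ at infinity. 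It then remains to check $\tfrac{i}{z}Y_\pm(-z^{-1})\sigma_3 Q_\pm = Y_\pm(z)$; expanding the product and using $|q_\pm|=1$ so that $B^2 = I$ closes the identity. I expect this second symmetry to be the main place where one must exercise care: one has to simultaneously track the sign flip of $\lambda$, the anticommutation of off-diagonal matrices with $\sigma_3$, and exploit the nonzero-boundary hypothesis $|q_\pm|=1$, which is precisely what makes the involution $z\mapsto -z^{-1}$ act nontrivially on the Jost data at all.
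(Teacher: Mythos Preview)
Your argument is correct and is the standard way these symmetries are established: verify the operator identity on the Lax matrix $X$, check that the transformed solution matches the asymptotic normalization $Y_\pm e^{ik\lambda x\sigma_3}$, and invoke uniqueness of the Volterra solution. The algebraic checks you outline all go through as written; in particular, $B=\sigma_3 Q_\pm$ satisfies $B^2=I$ precisely because $|q_\pm|=1$, and the anticommutation of $B$ with $\sigma_3$ converts $e^{-ik\lambda x\sigma_3}$ into $e^{+ik\lambda x\sigma_3}$, which is exactly what compensates for $\lambda(-z^{-1})=-\lambda(z)$.

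The paper itself does not supply a proof of this proposition in the text; it simply quotes the symmetry reductions and refers to \cite{ZGQ} for the derivation. Your write-up is therefore more explicit than what the paper provides, but it is entirely in line with the approach in the cited reference and with the standard argument for such reductions.
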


And for $z\in \Sigma^0=\Sigma\setminus\left\lbrace \pm i \right\rbrace $, there exist scattering matrix which is a linear  relation between $\Phi_+$ and $\Phi_-$
\begin{equation}
	\Phi_+(x,t,z)=\Phi_-(x,t,z)S(z), \label{scattering}
\end{equation}
where
\begin{equation}
		S(z) =\left(\begin{array}{cc}
			a(z) &-\overline{b(\bar{z})}   \\[4pt]
			 b(z) & \overline{a(\bar{z})}
		\end{array}\right),\hspace{0.5cm}\det[S(z)]=1
\end{equation}
with symmetry reduction:
\begin{equation}
	S(z)=\sigma_1\overline{S(-\bar{z})}\sigma_1=(\sigma_3Q_-)^{-1}S\left( -z^{-1}\right) \sigma_3Q_+.\label{symS}
\end{equation}
And   the reflection coefficients are defined by
\begin{equation}
	\rho(z)=\frac{b(z)}{a(z)},\hspace{0.5cm}\tilde{\rho}(z)=-\overline{\rho(\bar{z})},\label{symrho}
\end{equation}
with symmetry reduction:
\begin{equation}
	\rho(z)=\overline{\tilde{\rho}(-\bar{z})}=\frac{\bar{q}_-}{q_-}\tilde{\rho}(-z^{-1}).
\end{equation}
Then
\begin{align}
	&a(z)=\frac{{\rm Wr}(\Phi_{+}^1,\Phi_{-}^2)}{1+z^{-2}},\hspace{0.5cm}b(z)=\frac{{\rm Wr}(\Phi_-^1,\Phi_{+}^1)}{1+z^{-2}}.\label{scatteringcoefficient2}
\end{align}
Although $a(z)$ and $b(z)$ has singularities at points $\pm i$, $|\rho(\pm i)|=1$.
The uniqueness and existences of Lax pair  from \cite{ZGQ}:
\begin{Proposition}\label{proasyM}
	If $ q-q_\pm\in L^{1,1}(\mathbb{R}_\pm)$, the fundamental eigenfunctions $\mu_\pm$
	defined by (\ref{jost1}) and (\ref{jost2})  exist and is the unique. Define $\mu_\pm=(\mu_{\pm}^1,\mu_{\pm}^2)$ with   $\mu_{\pm}^1$ and $ \mu_{\pm}^2$ denoting the first and second   column of $\mu_\pm$
	respectively. Then $\mu_+^1$and  $ \mu_-^2$ are analytical on the $D^+$ , and continuous in  $\overline{D^+}$; $\mu_-^1$ and $ \mu_+^2$ are analytical on the $D^-$, and continuous in    $\overline{D^-}$.  Moreover, form (\ref{scatteringcoefficient2}), $a(z)$ is analytical on the $D^+$ , and continuous in  $\overline{D^+}\setminus \left\lbrace \pm i \right\rbrace$. Further, $\lambda a(z)$ is analytical on the $D^+$ , and continuous in  $\overline{D^+}$. $b(z)$ and $\lambda b(z)$ are continuous in  $\Sigma^0$ and $\Sigma$ respectively.

\end{Proposition}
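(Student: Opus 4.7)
The plan has three parts, corresponding to the three assertions: existence/uniqueness of $\mu_\pm$, column-wise analyticity, and the analytic structure of $a(z)$ and $b(z)$.

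For existence and uniqueness, I would treat (\ref{jost1}) as a Volterra equation and solve it by Neumann series. Writing $\mu_\pm = \sum_{n\ge 0}\mu_\pm^{(n)}$ with $\mu_\pm^{(0)}=Y_\pm$ and $\mu_\pm^{(n+1)}(x,z)=\int_{\pm\infty}^{x}Y_\pm e^{ik\lambda(x-y)\hat\sigma_3}[Y_\pm^{-1}\Delta X_\pm \mu_\pm^{(n)}]\,dy$, the kernel's operator norm on any compact $z$-set is controlled by $\int |k(z)|\,\|Q(y)-Q_\pm\|\,dy$, which is finite under $q-q_\pm\in L^{1,1}(\mathbb{R}_\pm)$. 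The standard Volterra estimate $\|\mu_\pm^{(n)}\|\le C^n/n!\,\bigl(\int |Q-Q_\pm|\,dy\bigr)^n$ then yields absolute convergence and uniqueness; at the singular points $z=\pm i$ I would use (\ref{jost2}) with the polynomial kernel $I+(x-y)X_\pm$ and the same iteration argument, which requires the weighted $L^{1,1}$-assumption on $q-q_\pm$ to control the factor $(x-y)$.

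For analyticity of the individual columns, the key is the sign of $\operatorname{Im}(k\lambda)$. Since $k\lambda=\frac14(z^2-z^{-2})$, writing $z=re^{i\theta}$ gives $\operatorname{Im}(k\lambda)=\tfrac14(r^2+r^{-2})\sin(2\theta)$, so $\operatorname{Im}(k\lambda)>0$ on $D^+$ and $<0$ on $D^-$. Looking column by column at (\ref{jost1}), the relevant exponential factor in the integrand for the first column of $\mu_+$ is $e^{-2ik\lambda(x-y)}$ with $y>x$, which is bounded by $1$ for $z\in\overline{D^+}$; analytic dependence on $z$ of the Neumann iterates is then inherited from that of $Y_\pm$, $k(z)$ and $\lambda(z)$ on $D^+$, uniformly on compacta, so Weierstrass gives analyticity of $\mu_+^1$ on $D^+$. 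The same sign analysis produces $\mu_-^2$ analytic on $D^+$ and $\mu_-^1$, $\mu_+^2$ analytic on $D^-$; continuity up to $\Sigma$ follows since the bound $|e^{\pm 2ik\lambda(x-y)}|\le 1$ persists on the closure and dominated convergence applies.

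The statements about $a(z)$ and $b(z)$ then fall out of (\ref{scatteringcoefficient2}). Since $\mu_+^1$ and $\mu_-^2$ are both analytic on $D^+$ and continuous on $\overline{D^+}$, their Wronskian is too, so $(1+z^{-2})a(z)$ is analytic on $D^+$ and continuous on $\overline{D^+}$; dividing by $1+z^{-2}$ introduces possible singularities only at $z=\pm i$, which proves the first claim about $a$. For the improvement to $\lambda a(z)$, I would use the identity $\lambda(z)/(1+z^{-2})=z/2$ obtained from (\ref{uniformization55}), which cancels the offending zeros and leaves $\lambda a(z)=\tfrac{z}{2}\operatorname{Wr}(\mu_+^1,\mu_-^2)$, continuous on all of $\overline{D^+}$. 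The continuity of $b(z)$ on $\Sigma^0$ and of $\lambda b(z)$ on $\Sigma$ is handled identically from the second identity in (\ref{scatteringcoefficient2}).

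The delicate point is not the Volterra iteration itself but the bookkeeping near $z=\pm i$: this is exactly where $\det Y_\pm=1+z^{-2}$ vanishes and where (\ref{jost1}) degenerates, requiring the auxiliary integral equation (\ref{jost2}) and the $L^{1,1}$ weight in the hypothesis. I expect the main technical obstacle to be proving continuous (not merely pointwise) behavior of $\mu_\pm$ and of the scattering data in a neighborhood of $\pm i$, matching the two integral representations (\ref{jost1}) and (\ref{jost2}) consistently; this is what forces the factor $\lambda$ in the final statement about $\lambda a$ and $\lambda b$.
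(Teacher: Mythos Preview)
The paper does not give its own proof of this proposition: it is simply quoted from \cite{ZGQ}, so there is no in-paper argument to compare against. Your proposal follows the standard route (Neumann iteration of the Volterra equations (\ref{jost1})--(\ref{jost2}), sign analysis of $\operatorname{Im}(k\lambda)=\tfrac14(r^2+r^{-2})\sin 2\theta$ to separate the columns, and the algebraic identity $\lambda/(1+z^{-2})=z/2$ to regularize $a$ and $b$ at $\pm i$), which is exactly the method underlying the cited result; your plan is correct and nothing essential is missing.
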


The zeros of $a(z)$ on $\Sigma$   are known to
 occur and they correspond to spectral singularities.  They are excluded from our analysis in the this paper. To deal with our following work,
we assume our initial data satisfy this assumption.
\begin{Assumption}\label{initialdata}
	The initial data $q-q_\pm (x) \in L^{1,1}(\mathbb{R}^\pm)$     and it generates generic scattering data which satisfy that
	
	\textbf{1. }a(z) has no zeros on $\Sigma$.
	
	\textbf{2. }a(z) only has finite number of simple zeros.
	
	\textbf{3. } $\rho(z)$ and $\tilde{\rho}(z)$ belong to $W^{2,\infty}(\Sigma)\cap W^{1,2}(\Sigma)$.
\end{Assumption}
Suppose that $a(z)$ has $N_1$ simple zeros $z_1,...,z_{N_1}$ on $D^+\cap\{z\in\mathbb{C}:{\rm Im}z>0,|z|>1\}$, and $N_2$ simple
zeros $w_1,...,w_m$ on the circle $\{z=e^{i\varphi}: 0<\varphi<\frac{\pi}{2} \}$.   The  symmetries  (\ref{symS}) imply that
\begin{align}
&a(\pm z_n)=0 \Longleftrightarrow \overline{a(\pm \bar{z}_n)}=0   \Longleftrightarrow   \overline{a\left(\pm \frac{1}{z_n}\right)}=0\nonumber\\
&  \Longleftrightarrow   a\left(\pm \frac{1}{\bar{z}}\right)=0, \hspace{0.5cm}n=1,...,N_1,\nonumber
\end{align}
and on the circle
\begin{equation}
	a(\pm w_m)=0\Longleftrightarrow \overline{a(\pm \bar{w}_m)}=0, \hspace{0.5cm}m=1,...,N_2. \nonumber
\end{equation}
So the zeros of $a(z)$  come in pairs.
 It is convenient to define $\zeta_n=z_n$, $\zeta_{n+N_1}=-z_n$, $\zeta_{n+2N_1}=\bar{z}_n^{-1}$ and $\zeta_{n+3N_1}=-\bar{z}_n^{-1}$ for $n=1,\cdot\cdot\cdot,N_1$;  $\zeta_{m+4N_1}=w_m$ and $\zeta_{m+4N_1+N_2}=-w_m$ for $m=1,\cdot\cdot\cdot,N_2$. Therefore, the discrete spectrum is
\begin{equation}
	\mathcal{Z}=\left\{ \zeta_n, \  \bar{\zeta}_n\right\}_{n=1}^{4N_1+2N_2}, \label{spectrals}
\end{equation}
with $\zeta_n\in D^+$ and $\bar{\zeta}_n\in D^-$. And the distribution  of $	\mathcal{Z}$ on the $z$-plane   is shown  in Figure \ref{fig:figure1}.
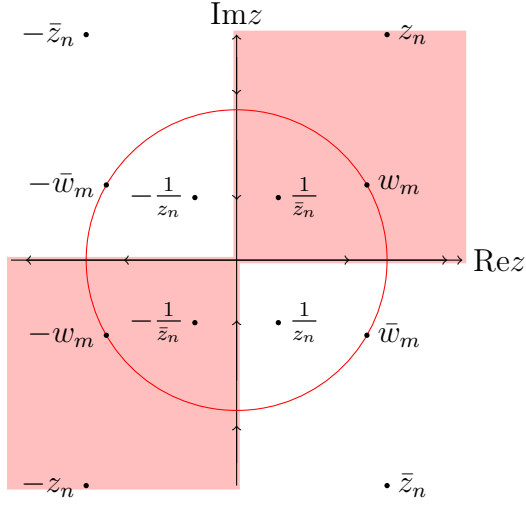
\begin{figure}[H]
	\centering
	\begin{tikzpicture}[node distance=2cm]
		\filldraw[pink,line width=3] (3,0.01) rectangle (0.01,3);
		\filldraw[pink,line width=3] (-3,-0.01) rectangle (-0.01,-3);
		\draw[->](-3,0)--(3,0)node[right]{Re$z$};
		\draw[->](0,-3)--(0,3)node[above]{Im$z$};
		\draw[red] (2,0) arc (0:360:2);
		\draw[->](0,0)--(-1.5,0);
		\draw[->](-1.5,0)--(-2.8,0);
		\draw[->](0,0)--(1.5,0);
		\draw[->](1.5,0)--(2.8,0);
		\draw[->](0,2.7)--(0,2.2);
		\draw[->](0,1.6)--(0,0.8);
		\draw[->](0,-2.7)--(0,-2.2);
		\draw[->](0,-1.6)--(0,-0.8);
			\coordinate (A) at (2,3);
		\coordinate (B) at (2,-3);
		\coordinate (C) at (-0.5546996232,0.8320505887);
		\coordinate (D) at (-0.5546996232,-0.8320505887);
		\coordinate (E) at (0.5546996232,0.8320505887);
		\coordinate (F) at (0.5546996232,-0.8320505887);
		\coordinate (G) at (-2,3);
		\coordinate (H) at (-2,-3);
		\coordinate (J) at (1.7320508075688774,1);
		\coordinate (K) at (1.7320508075688774,-1);
		\coordinate (L) at (-1.7320508075688774,1);
		\coordinate (M) at (-1.7320508075688774,-1);
		\fill (A) circle (1pt) node[right] {$z_n$};
		\fill (B) circle (1pt) node[right] {$\bar{z}_n$};
		\fill (C) circle (1pt) node[left] {$-\frac{1}{z_n}$};
		\fill (D) circle (1pt) node[left] {$-\frac{1}{\bar{z}_n}$};
		\fill (E) circle (1pt) node[right] {$\frac{1}{\bar{z}_n}$};
		\fill (F) circle (1pt) node[right] {$\frac{1}{z_n}$};
		\fill (G) circle (1pt) node[left] {$-\bar{z}_n$};
		\fill (H) circle (1pt) node[left] {$-z_n$};
		\fill (J) circle (1pt) node[right] {$w_m$};
		\fill (K) circle (1pt) node[right] {$\bar{w}_m$};
		\fill (L) circle (1pt) node[left] {$-\bar{w}_m$};
		\fill (M) circle (1pt) node[left] {$-w_m$};
	\end{tikzpicture}
	\caption{Distribution of the discrete spectrum $\mathcal{Z}$. The red one is  unit circle.}
	\label{fig:figure1}
\end{figure}

As shown in  \cite{CL2004},   denote   norming constant    $c_n=b_n/a'(z_n)$.    Then we have  residue conditions as
\begin{align}
 &\res_{z=\pm z_n}\left[\frac{\mu^1_{+}(z)}{a(z)}\right]=c_ne^{-2ik(\pm z_n)\lambda(\pm z_n)x}\mu^2_{-}(\pm z_n),\label{resrelation1}\\
	&\res_{z=\pm \bar{z}_n^{-1}}\left[\frac{\mu^1_{+}(z)}{a(z)}\right]=\pm \frac{\bar{q}_-}{q_-}\bar{z}_n^{-2}\bar{c}_ne^{-2ik(\pm \bar{z}_n^{-1})\lambda(\pm \bar{z}_n^{-1})x}\mu^2_{-}(\pm \bar{z}_n^{-1}),\\
	&\res_{z=\pm \bar{z}_n}\left[\frac{\mu^2_{+}(z)}{\overline{a(\bar{z})}}\right]=-\bar{c}_ne^{2ik(\pm \bar{z}_n)\lambda(\pm \bar{z}_n)x}\mu^1_{-}(\pm \bar{z}_n),\\
	&\res_{z=\pm z_n^{-1}}\left[\frac{\mu^2_{+}(z)}{\overline{a(\bar{z})}}\right]=\pm \frac{q_-}{\bar{q}_-}z_n^{-2}c_ne^{-2ik(\pm \bar{z}_n^{-1})\lambda(\pm z_n^{-1})x}\mu^2_{-}(\pm z_n^{-1}).
\end{align}

For $m=1,...,N_2$, there also have $c_{N_1+m}=b_{N_1+m}/a'(w_m)$ and
\begin{align}
	&\res_{z=\pm w_m}\left[\frac{\mu^1_{+}(z)}{a(z)}\right]=c_{N_1+m}e^{-2ik(\pm w_m)\lambda(\pm w_m)x}\mu^2_{-}(\pm w_m),\\
	&\res_{z=\pm \bar{w}_m}\left[\frac{\mu^2_{+}(z)}{\overline{a(\bar{z})}}\right]=-\bar{c}_{N_1+m}e^{2ik(\pm \bar{w}_m)\lambda(\pm \bar{w}_m)x}\mu^1_{-}(\pm \bar{w}_m).
\end{align}
For brevity, we introduce a new constant $C_n$ as: for  $n=1,...,N_1$, $C_n=C_{n+N_1}=c_n$, $C_{n+2N_1}=-C_{n+3N_1}=\frac{\bar{q}_-}{q_-}\bar{z}_n^{-2}\bar{c}_n$; for  $m=1,...,N_2$, $C_{m+4N_1}=C_{m+4N_1+N_2}=c_{m+N_1}$,
and  the collection
$\sigma_d=  \left\lbrace \zeta_n,C_n\right\rbrace^{4N_1+2N_2}_{n=1}  $
is called the \emph{scattering data}.

Now we are going to take into account the time evolution of scattering data. If $q$ also depends on t (i.e. $q$ = $q(x, t))$, we can obtain the functions $a$ and $b$ as above for all times $t \in R$. Taking account of the t-part in (\ref{lax0}), the t- derivative of $a$ and $b$ comes to
\begin{align}
	  a_t(z;t) =0,\hspace{0.5cm}  b_t(z;t) =-(2k^2-1)k\lambda b(z;t).
\end{align}
Then we can obtain time dependence of scattering data which can be expressed as the following replacement
\begin{align}
	&C(\zeta_n)\rightarrow C(t,\zeta_n)=c(0,\zeta_n)e^{-(2k(\zeta_n)^2-1)k(\zeta_n)\lambda(\zeta_n) t},\\
	&r(z)\rightarrow r(t,z)=r(0,z)e^{-(2k^2-1)k\lambda t}
\end{align}
In particular, if at time $t = 0$ the initial function $q(x, 0)$ produces $4N_1+2N_2$ simple zeros $\zeta_1$,...,$\zeta_{4N_1+2N_2}$ of $a(z; 0)$ and if $q$ evolves accordingly
 to the (\ref{DNLS}), then $q(x, t)$ will produce exactly the same N simple zeros at any other time $t \in   R$. The scattering data with time $t$   is given by
\begin{equation*}
	\left\lbrace  e^{-(2k^2-1)k\lambda t}r(z),\left\lbrace \zeta_n,e^{-(2k(\zeta_n)^2-1)k(\zeta_n)\lambda(\zeta_n) t}C_n\right\rbrace^{4N_1+2N_2}_{n=1}\right\rbrace,
\end{equation*}
where $\left\lbrace  r(z),\left\lbrace \zeta_n,C_n\right\rbrace^{4N_1+2N_2}_{n=1}\right\rbrace$ are obtained from the initial data $q(x, 0) =
q_0(x)$. Denote the  phase function
\begin{equation}
	\theta(z)=k(z)\lambda(z)\left[x/t-(2k(z)^2-1) \right],\label{theta}
\end{equation}
and for convenience we denote $\theta_n=\theta(\zeta_n)$.

To propose and solve the matrix RH problem in the following inverse problem, we finally give the asymptotic behaviors of the modified Jost solutions
and scattering matrix as $z\to\infty$ and $z\to 0$.
\begin{Proposition} \label{prop2}
	The Jost solutions posses  the following asymptotic behaviors
	\begin{align}
		&\mu_\pm(x,t,z)=e^{i\nu_\pm(x,t;q)\sigma_3}+O(z^{-1}),\hspace{0.5cm}z \rightarrow \infty,\label{asyvarphi1}\\
		&\mu_\pm(x,t,z)=\frac{i}{z}e^{i\nu_\pm(x,t;q)\sigma_3}\sigma_3Q_\pm+O(1),\hspace{0.5cm}z \rightarrow 0\label{asyvarphi2},
	\end{align}
	where
	\begin{equation}
		\nu_\pm(x,t;q)=\frac{1}{2}\int_{\pm\infty}^x (|q|^2-1)dy.
	\end{equation}
	The  scattering matrices admit   asymptotic behaviors
	\begin{align}
		&S(z)=e^{-i\nu_0(t;q)\sigma_3}+O(z^{-1}),\hspace{0.5cm}z \rightarrow \infty,\label{asympsca1}\\
		&S(z)={\rm diag}\left(\frac{q_-}{q_+},\frac{q_+}{q_-}\right)e^{i\nu_0(t;q)\sigma_3}+O(z),\hspace{0.5cm}z \rightarrow 0,\label{asympsca2}
	\end{align}
	where
	\begin{equation}
		\nu_0(t;q)=\frac{1}{2}\int_{-\infty}^{+\infty}(|q|^2-1)dy.\label{nu0}
	\end{equation}
	Further we have $\rho(0)=\tilde{\rho}(0)=0$.
\end{Proposition}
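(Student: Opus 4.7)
My plan is to establish the five statements of the proposition in three stages: (i) a formal WKB expansion of $\mu_\pm$ at $z=\infty$, read off from the $x$-part of the Lax pair; (ii) a transfer of that expansion to $z=0$ via the second symmetry (\ref{symPhi2}); and (iii) a derivation of the scattering-matrix asymptotics from the $x$-independent identity $S(z) = e^{-ik\lambda x\sigma_3}\mu_-(x,t,z)^{-1}\mu_+(x,t,z)e^{ik\lambda x\sigma_3}$ that follows from (\ref{scattering}) and (\ref{trans}). The analytic content is concentrated in the first step; the remainder is algebra.

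For the $z\to\infty$ expansion, I would rewrite (\ref{lax0}) as $\mu_{\pm,x} = ik^2\sigma_3\mu_\pm - ik\lambda\mu_\pm\sigma_3 + kQ\mu_\pm$ and substitute a series $\mu_\pm = \sum_{n\geq 0}z^{-n}\mu_\pm^{(n)}$. Matching orders of $z$ via (\ref{uniformization55}), the $z^2$-order forces $\mu_\pm^{(0)}$ to be diagonal; the $z^1$-order fixes the off-diagonal part of $\mu_\pm^{(1)}$ in terms of $Q\mu_\pm^{(0)}$; and the diagonal part of the $z^0$-order produces the first-order ODE $\mu_{\pm,x}^{(0)} = \tfrac{i}{2}(|q|^2-1)\sigma_3\mu_\pm^{(0)}$. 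Integrating and imposing the boundary condition $\mu_\pm\to Y_\pm\to I$ at $x=\pm\infty$ (in the $z\to\infty$ regime) gives $\mu_\pm^{(0)} = e^{i\nu_\pm\sigma_3}$, which is (\ref{asyvarphi1}); the $O(z^{-1})$ remainder is controlled by iterating the Volterra equation (\ref{jost1}). Inserting this result into the symmetry (\ref{symPhi2}) and using $-z^{-1}\to\infty$ as $z\to 0$ then yields (\ref{asyvarphi2}) at once.

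For the scattering matrix I would pass $z\to\infty$ and $z\to 0$ in the identity for $S(z)$ displayed above. At infinity, $\mu_-^{-1}\mu_+\to e^{i(\nu_+-\nu_-)\sigma_3} = e^{-i\nu_0\sigma_3}$; since this limit is diagonal it commutes with $e^{\pm ik\lambda x\sigma_3}$, giving (\ref{asympsca1}). At $z\to 0$ the two $(i/z)$ factors from (\ref{asyvarphi2}) cancel between $\mu_-^{-1}$ and $\mu_+$, and after simplifying the sandwich $\sigma_3 e^{i(\nu_+-\nu_-)\sigma_3}\sigma_3 = e^{-i\nu_0\sigma_3}$ one finds $\mu_-^{-1}\mu_+\to Q_-^{-1}e^{-i\nu_0\sigma_3}Q_+$. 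A short $2\times 2$ computation that invokes $|q_\pm|=1$ reduces this to $\mathrm{diag}(q_-/q_+,\,q_+/q_-)\,e^{i\nu_0\sigma_3}$, again diagonal; this is (\ref{asympsca2}). Finally, $\rho(0)=\tilde\rho(0)=0$ is immediate because the off-diagonal entries $b(z)$ and $-\overline{b(\bar z)}$ of $S(z)$ vanish at $z=0$, while $a(0) = (q_-/q_+)e^{i\nu_0}\neq 0$.

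The most delicate point, I expect, is the fact that the $z\to 0$ limit of $\mu_-^{-1}\mu_+$ is \emph{diagonal}. Because $k\lambda \sim -\tfrac{1}{4}z^{-2}$ near $z=0$, the conjugation by $e^{\pm ik\lambda x\sigma_3}$ oscillates wildly, so any off-diagonal component in the limit would contradict the known $x$-independence of $S(z)$. The cancellation that produces a diagonal result relies crucially on the nonzero-boundary constraint $|q_\pm|=1$: it comes out of the explicit algebraic form of $Q_-^{-1}e^{-i\nu_0\sigma_3}Q_+$ under this constraint. This algebraic check, together with careful tracking of the $O(1)$ correction in (\ref{asyvarphi2}) so that the error in (\ref{asympsca2}) is genuinely $O(z)$ rather than $O(1)$, is the only nontrivial calculation; the remaining work is routine WKB matching and symmetry.
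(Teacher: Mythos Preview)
Your argument is correct and is the standard route to these asymptotics. Note, however, that the paper does not actually supply a proof of this proposition: Section~\ref{sec2} opens by stating that it ``recall[s] some main results \dots obtained in \cite{CL2004,ZGQ}'', and Proposition~\ref{prop2} is simply stated, with the text moving directly on to the trace formula. So there is no proof in the paper to compare against; what you have written is essentially the proof one finds in the cited references.

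A couple of minor remarks on your writeup. First, your concern about the $O(1)$ correction in (\ref{asyvarphi2}) contaminating (\ref{asympsca2}) is unnecessary: writing $\mu_\pm = \tfrac{i}{z}A_\pm + B_\pm + O(z)$ with $A_\pm = e^{i\nu_\pm\sigma_3}\sigma_3 Q_\pm$ invertible, one has $\mu_-^{-1} = -izA_-^{-1} + O(z^2)$ and hence $\mu_-^{-1}\mu_+ = A_-^{-1}A_+ + O(z)$ automatically; no further cancellation is needed. Second, your observation that the diagonality of $Q_-^{-1}e^{-i\nu_0\sigma_3}Q_+$ relies on $|q_\pm|=1$ is exactly right and is the algebraic heart of the $z\to 0$ statement: without the unit-modulus constraint the off-diagonal entries of $S$ would not vanish at the origin and one would not obtain $\rho(0)=\tilde\rho(0)=0$.
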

Moreover, from trace formulae we have
\begin{equation}
	a(z)=\prod_{j=1}^{4N_1+2N_2}\frac{z-\zeta_j}{z-\bar{\zeta}_j}\exp\left\lbrace-\frac{1}{2\pi i}\int_{\Sigma}\frac{\log (1-\rho(s)\tilde{\rho}(s))}{s-z}ds \right\rbrace .\label{a}
\end{equation}
Then by taking $z\to 0$,  theta condition is obtained:
\begin{align}
	\arg\frac{q_-}{q_+}+2\nu_0=8\sum_{n=1}^{N_1}\arg(z_n)+4\sum_{m=1}^{N_2}\arg(w_m)+\frac{1}{2\pi}\int_{\Sigma}\frac{\log(1-\rho(s)\tilde{\rho}(s))}{s}ds+2j\pi, \hspace{0.5cm}
\end{align}
where $j$  is a  integer.

Define  a   sectionally meromorphic matrix
\begin{equation}
M(z;x,t)=\left\{ \begin{array}{ll}
\left( a(z)^{-1}\mu_+ ^1, \mu_-^2\right),   &\text{as } z\in D^+,\\[12pt]
\left( \mu_-^1,\overline{a(\bar{z})}^{-1}\mu_+^2\right)  , &\text{as }z\in D^-,\\
\end{array}\right.
\end{equation}
which solves the following (time-dependent) RHP.

\noindent\textbf{RHP0}.  Find a matrix-valued function $M(z)$ which satisfies:

$\blacktriangleright$ Analyticity: $M(z)$ is meromorphic in $\mathbb{C}\setminus \Sigma$ and has single poles
$\mathcal{Z}$;

$\blacktriangleright$ Symmetry: $M(z)=\sigma_2\overline{M(\bar{z})}\sigma_2$=$\sigma_1\overline{M(-\bar{z})}\sigma_1=\frac{i}{z}M(-1/z)\sigma_3Q_-$;

$\blacktriangleright$ Jump condition: $M(z)$ has continuous boundary values $M_\pm(z)$ on $\Sigma$ and
\begin{equation}
M_+(z )=M_-(z )V(z),\hspace{0.5cm}z \in \Sigma,
\end{equation}
where
\begin{equation}
V(z)=\left(\begin{array}{cc}
1-\tilde{\rho}(z)\rho(z) & -e^{2it\theta}\tilde{\rho}(z)\\
e^{-2it\theta}\rho(z) & 1
\end{array}\right);\label{jumpv}
\end{equation}

$\blacktriangleright$ Asymptotic behaviors:
\begin{align}
&M(z ) = e^{i\nu_-(x,t;q)\sigma_3}+\mathcal{O}(z^{-1}),\hspace{0.5cm}z \rightarrow \infty,\\
&M(z ) =\frac{i}{z}e^{i\nu_-(x,t;q)\sigma_3}\sigma_3Q_-+\mathcal{O}(1),\hspace{0.5cm}z \rightarrow 0;
\end{align}

$\blacktriangleright$ Residue conditions: $M$ has simple poles at each point in $ \mathcal{Z}\cup \bar{\mathcal{Z}}$ with:
\begin{align}
&\res_{z=\zeta_n}M(z)=\lim_{z\to \zeta_n}M(z)\left(\begin{array}{cc}
0 & 0\\
C_ne^{-2it\theta_n} & 0
\end{array}\right),\label{RES1}\\
&\res_{z=\bar{\zeta}_n}M(z)=\lim_{z\to \bar{\zeta}_n}M(z)\left(\begin{array}{cc}
0 & -\bar{C}_ne^{2it\bar{\theta}_n}\\
0 & 0
\end{array}\right)\label{RES2}.
\end{align}

From the asymptotic behavior in Proposition \ref{proasyM}, the reconstruction formula of $q(x,t)$ is given by
\begin{equation}
q(x,t)=\exp\left\lbrace \frac{i}{2}\int_{-\infty}^x(|q(x,t)|^2-1)dy \right\rbrace m(x,t),\label{recons q}
\end{equation}
where
\begin{align}
m(x,t)=\lim_{z\to \infty}\left[zM \right]_{12}.\label{recons u}
\end{align}
Take modulus on both sides of (\ref{recons q}) yields
\begin{align}
|q(x,t)|=|m(x,t)|,\nonumber
\end{align}
which is substituted back into  (\ref{recons q}) leads to
\begin{equation}
	q(x,t)=\exp\left\lbrace \frac{i}{2}\int_{-\infty}^x(|m(x,t)|^2-1)dy \right\rbrace m(x,t).\label{q}
\end{equation}

\section{Deformation to a mixed $\bar\partial$-RH problem}\label{sec3}

\quad
We find that  the long-time asymptotic  of RHP0  is affected by the growth and decay of the exponential function $e^{\pm2it\theta}$ appearing in both the jump relation and the residue conditions.
Therefore, in this section, we introduce  a new transform  $M(z)\to M^{(1)}(z)$,  which  make that the  $M^{(1)}(z)$ is well behaved as $t\to \infty$ along any characteristic line.

Let $\xi=\frac{x}{t}$,  to obtain asymptotic behavior  of $e^{2it\theta}$ as $t\to \infty$, we consider the real part of $2it\theta$:
\begin{equation}
\text{Re}(2it\theta)=-t\text{Im}z\text{Re}z\left[ \left( \xi+2\right)\left(1+|z|^{-4} \right) -\left( \text{Re}^2z-\text{Im}^2z\right)\left( 1+|z|^{-8}\right)   \right]  .\label{Reitheta}
\end{equation}
The signature  of $\text{Im}\theta$ are shown in Figure \ref{figtheta}.

\begin{figure}[H]
	\centering
	\subfigure[]{
		\includegraphics[width=0.25\linewidth]{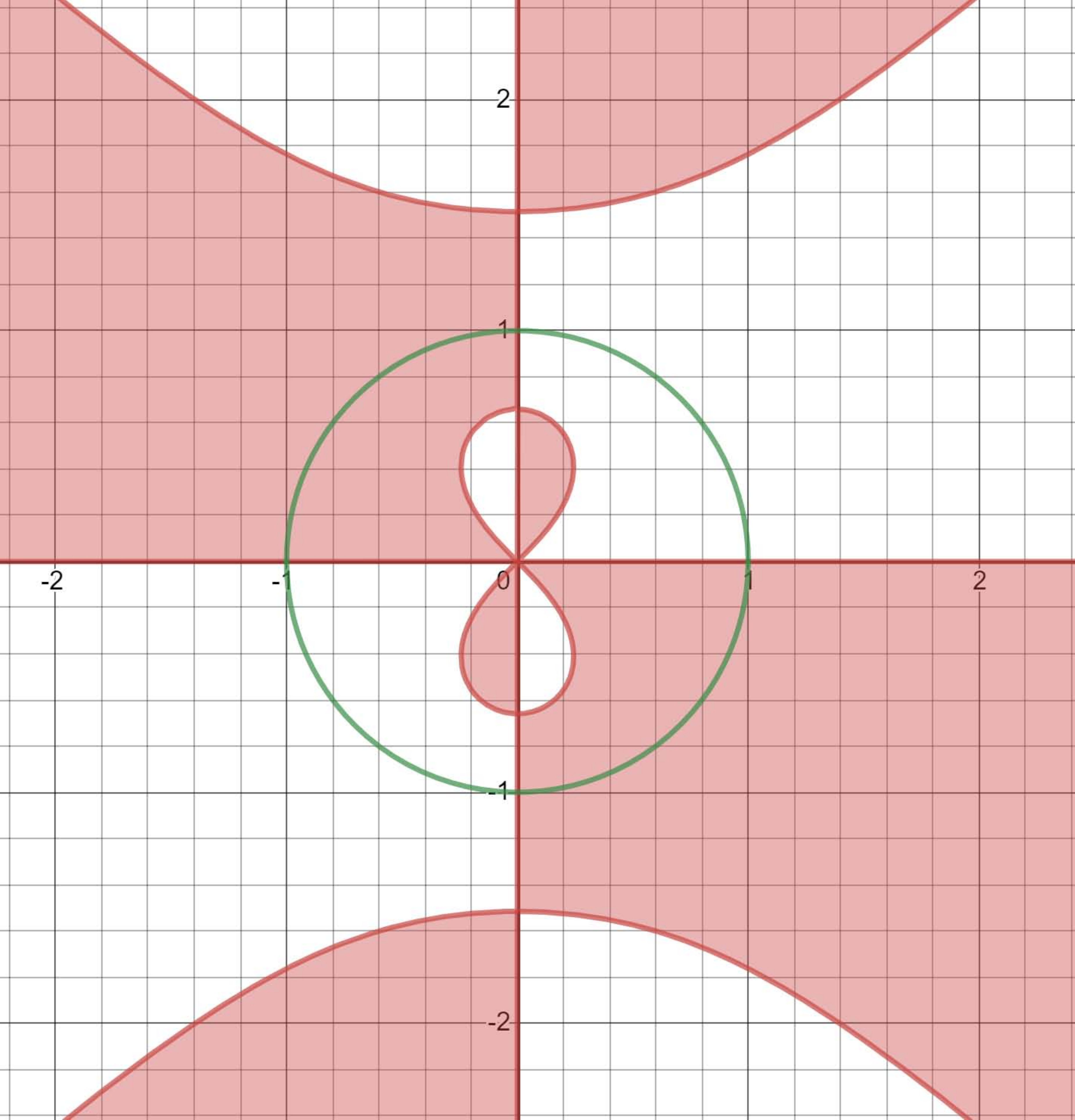}
	\label{fig:desmos-graph}}
	\subfigure[]{
		\includegraphics[width=0.25\linewidth]{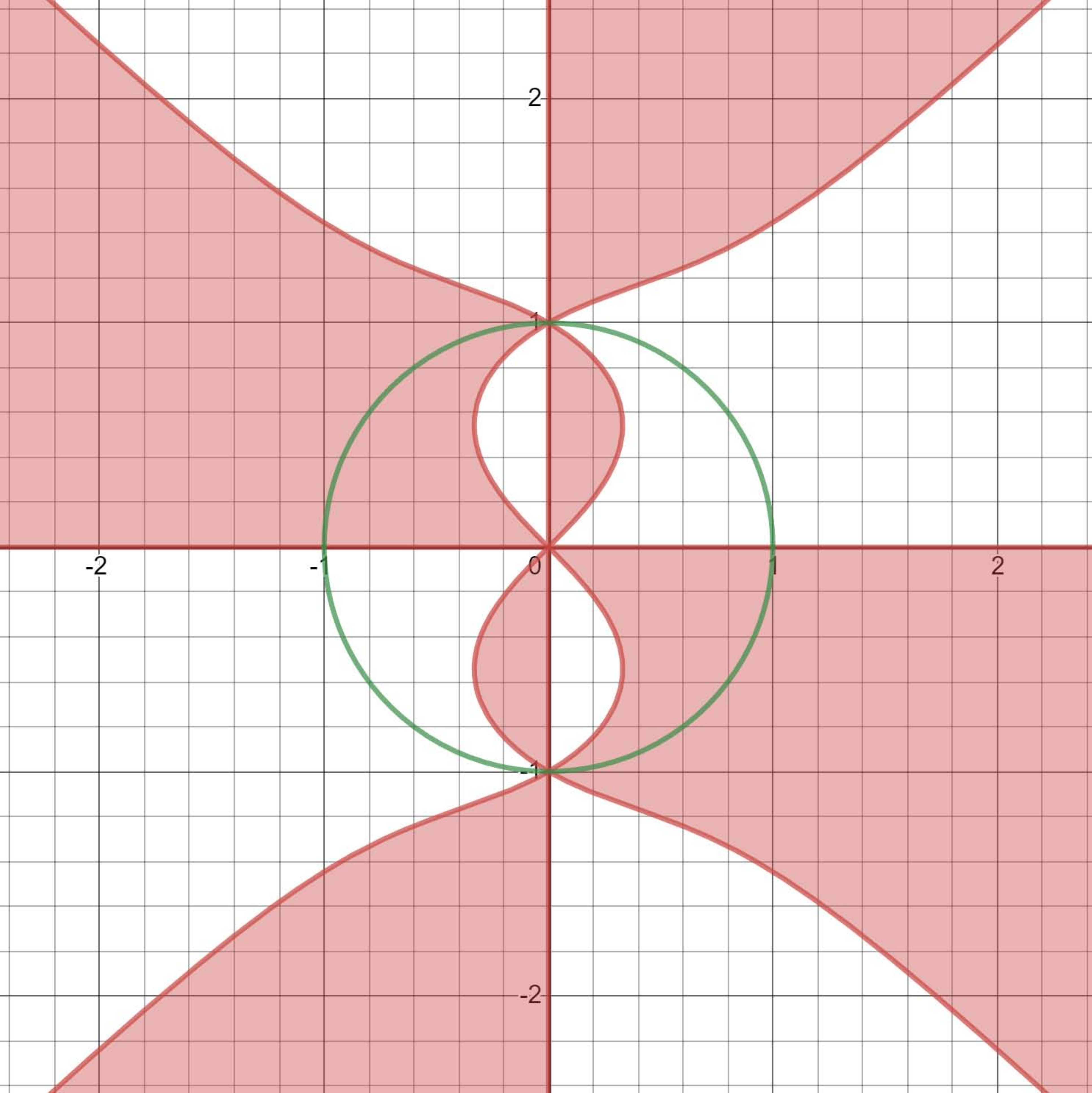}
		\label{fig:desmos-graph-1}}
	\subfigure[]{
		\includegraphics[width=0.251\linewidth]{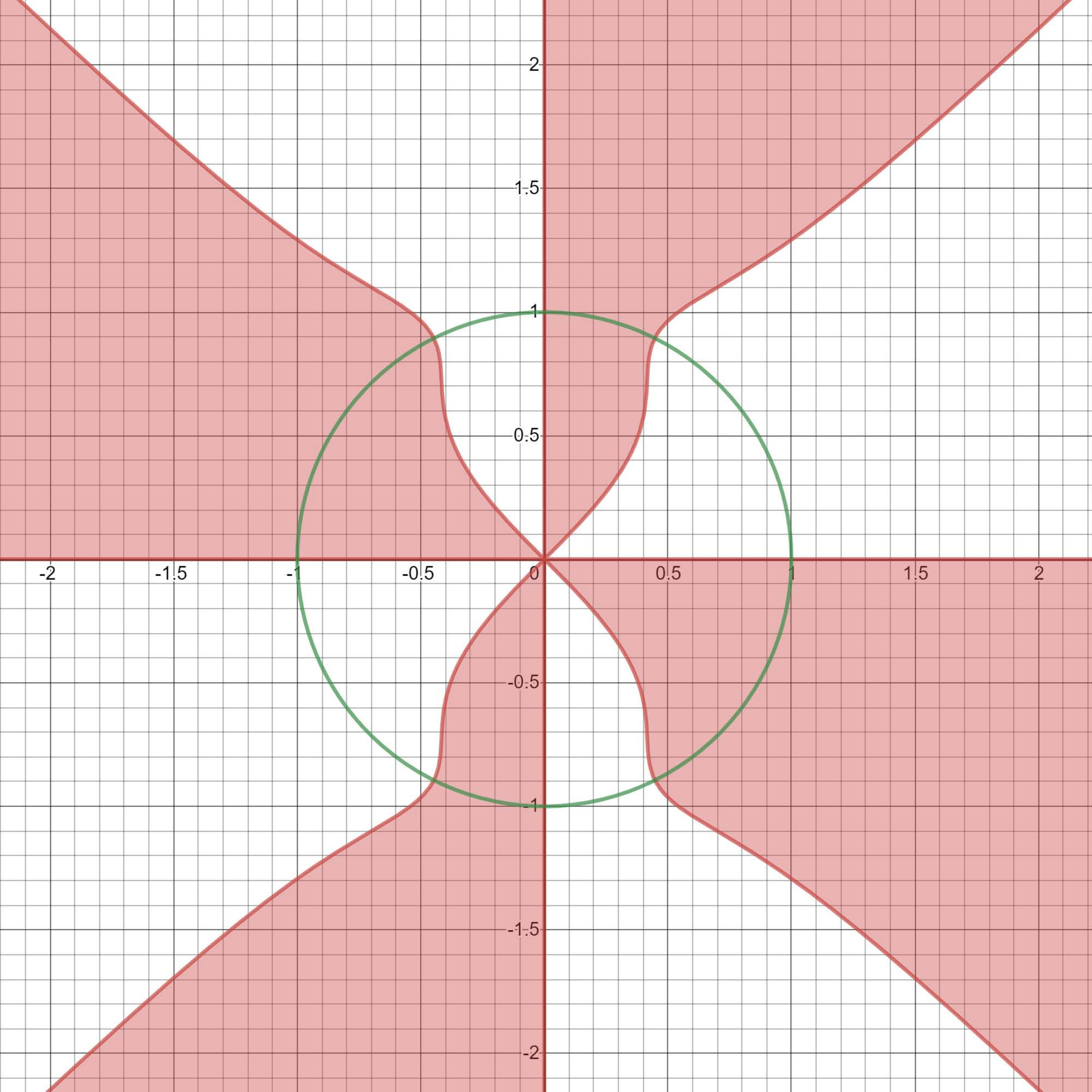}
		\label{fig:desmos-graph-5}}
	\subfigure[]{
		\includegraphics[width=0.25\linewidth]{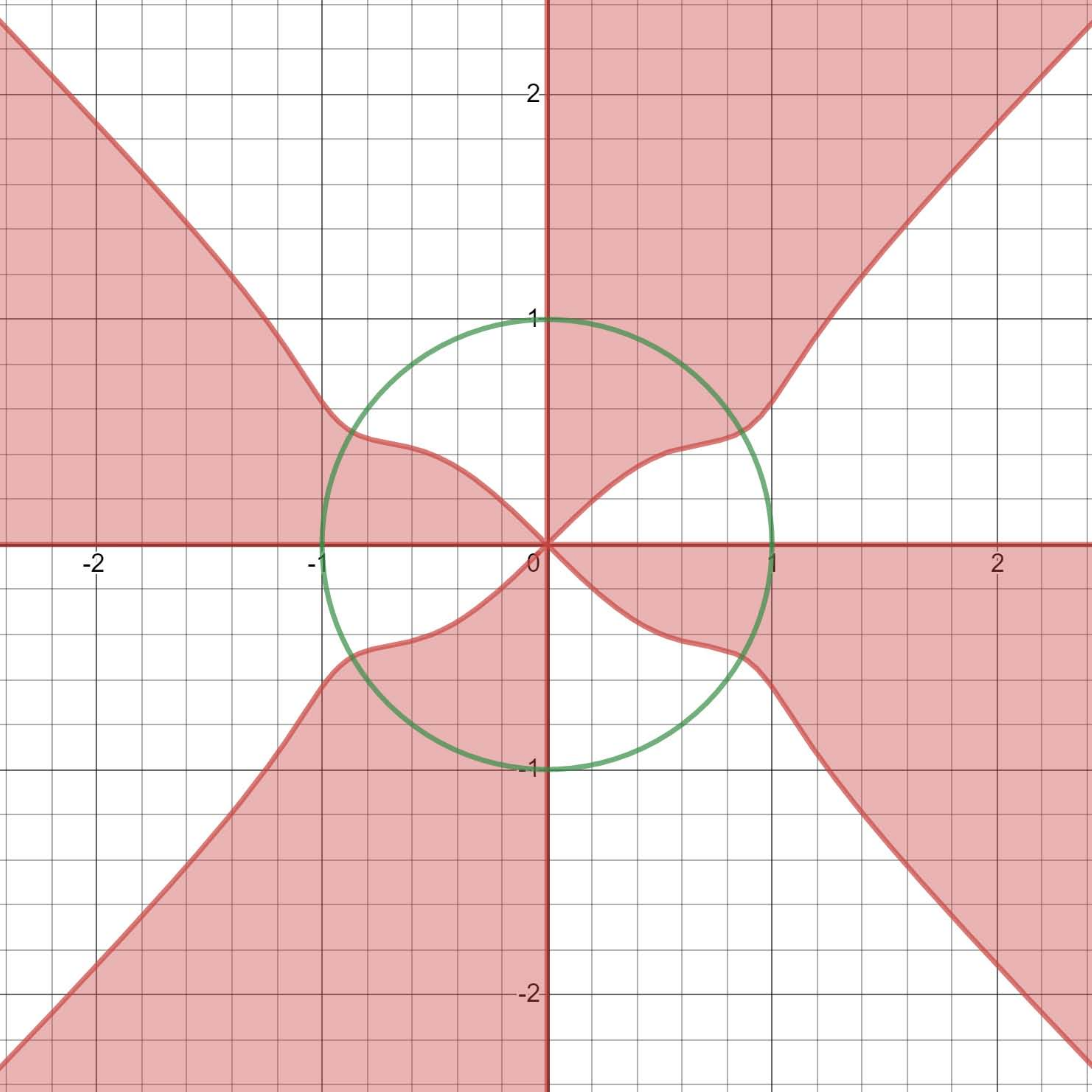}
		\label{fig:desmos-graph-2}}
	\subfigure[]{
		\includegraphics[width=0.25\linewidth]{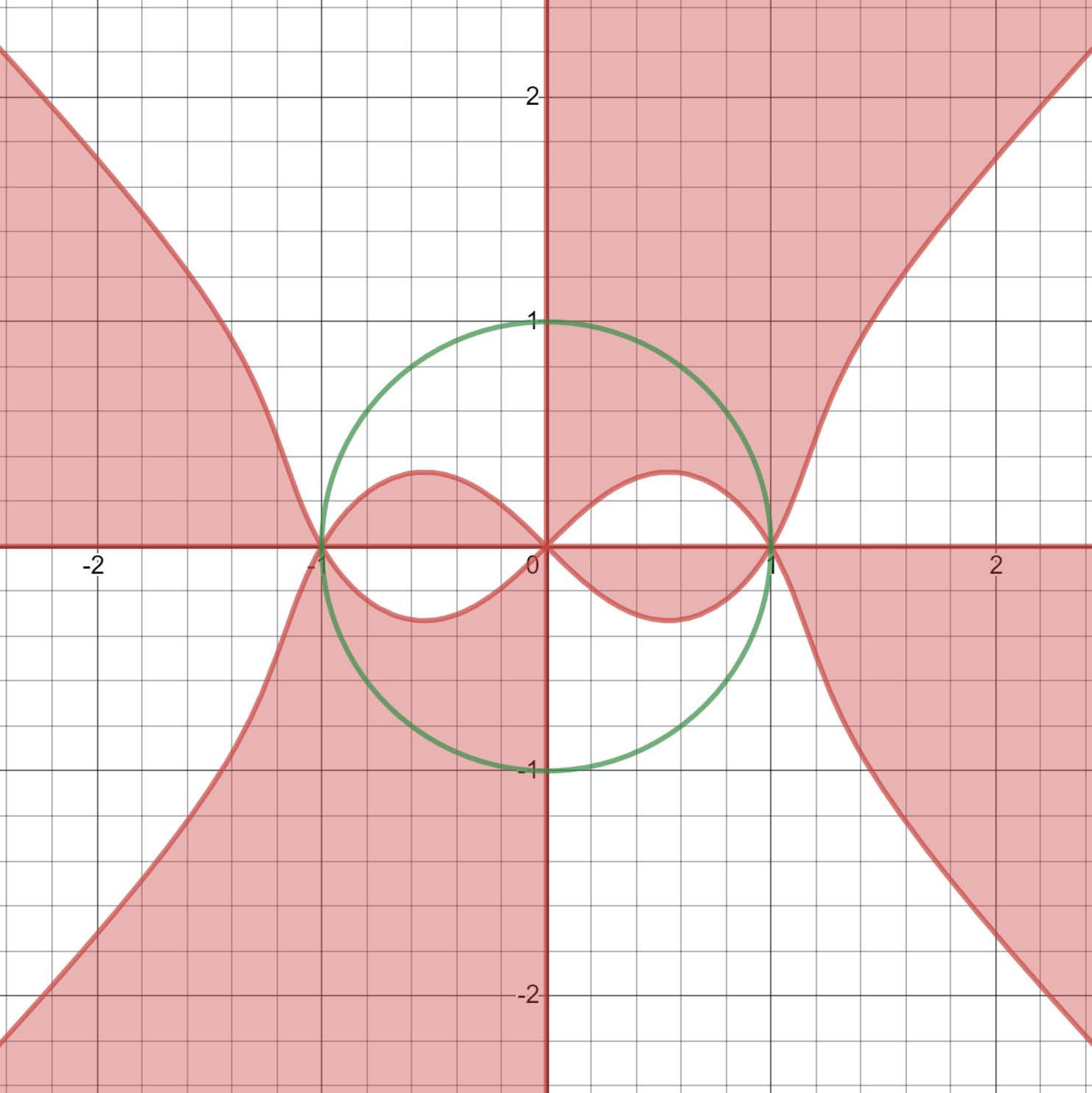}
		\label{fig:desmos-graph-3}}
	\subfigure[]{
		\includegraphics[width=0.25\linewidth]{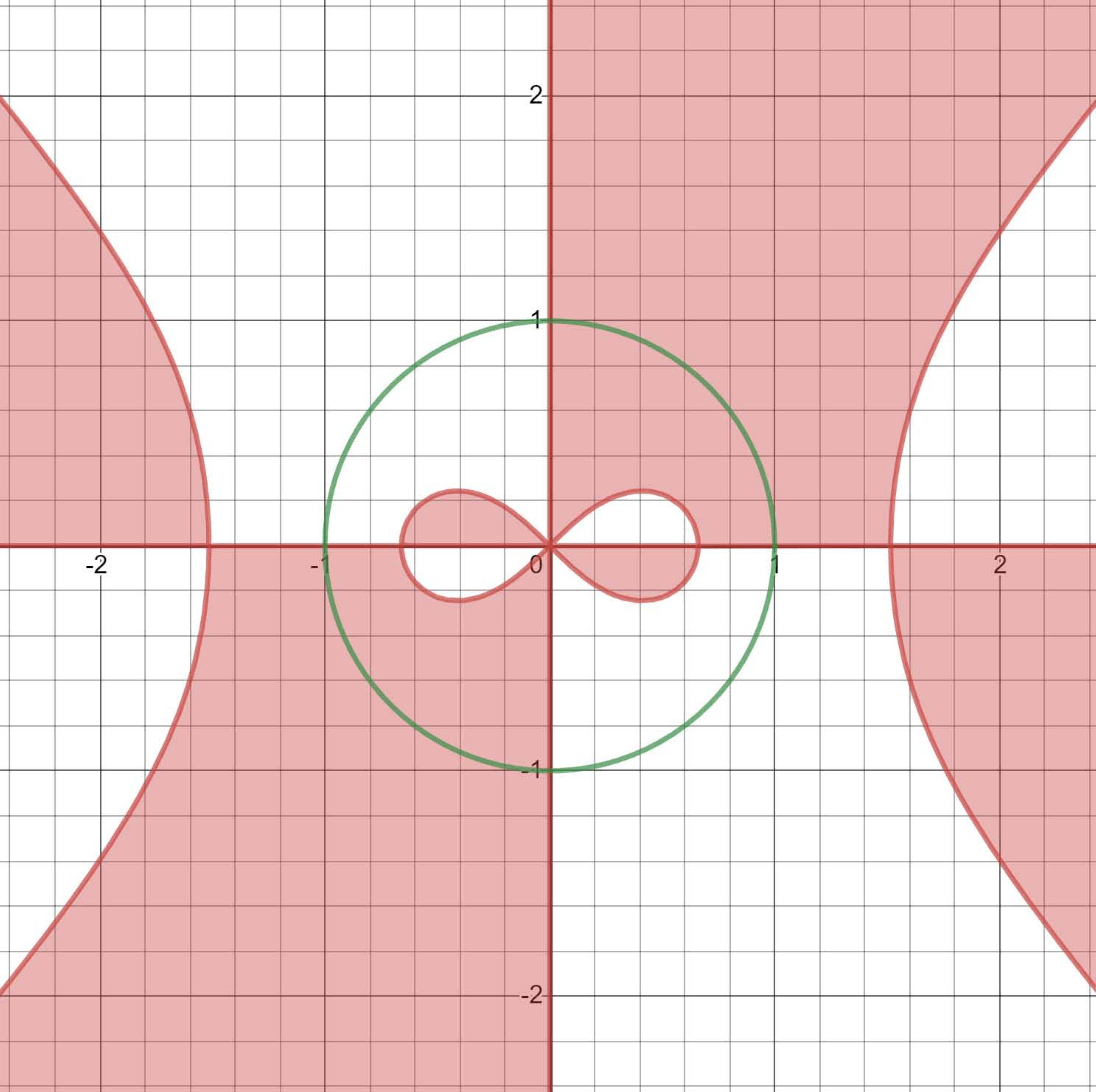}
		\label{fig:desmos-graph-4}}
\caption{In these figure we take $\xi=-4,-3,-2.6,-1.5,-1,0$
respectively to show all type of $\text{Im}\theta$. The green curve is  unit circle. In the red region,
$\text{Im}\theta>0$ while $\text{Im}\theta=0$ on the red  curve. And $\text{Im}\theta<0$ in the white region.}
\label{figtheta}
\end{figure}

In our paper, we only consider the case $-3<\xi<-1$ which is corresponding to Figure \ref{figtheta} (c), because of its well property.

 For brevity, we introduce some notations with respect to subscripts
\begin{align}
&\mathcal{N}\triangleq\left\lbrace 1,...,4N_1+2N_2\right\rbrace, \ \ \nabla=\left\lbrace n \in  \mathcal{N}  |\text{Im}\theta_n\leq 0\right\rbrace,\nonumber\\
&\Delta=\left\lbrace n \in  \mathcal{N} |\text{Im}\theta_n> 0\right\rbrace,\Lambda=\left\lbrace n \in  \mathcal{N}  |\text{Im}\theta_n= 0\right\rbrace.\label{devide}
\end{align}
For   $n\in\Delta$, the residue of $M(z)$ at $\zeta_n$ in (\ref{RES1})  are unbounded   as $t\to\infty$.
Similarly, for   $n\in\nabla$, the residue at $\zeta_n$   approach   to be zero as $t\to\infty$.
 Define
\begin{equation}
	\rho_0=\min_{n\in\Delta\cup\nabla\setminus\Lambda}|\text{Im}\theta_n|\neq0.\label{rho0}
\end{equation}
To distinguish different  type of zeros, we further give
\begin{align}
	&\nabla_1=\left\lbrace j \in \left\lbrace 1,...,N_1\right\rbrace  |\text{Im}\theta(z_j)\leq 0\right\rbrace,
	\Delta_1=\left\lbrace j \in \left\lbrace 1,...,N_1\right\rbrace  |\text{Im}\theta(z_j)> 0\right\rbrace,\nonumber\\
	&\nabla_2=\left\lbrace i \in \left\lbrace 1,...,N_2\right\rbrace  |\text{Im}\theta(w_i)\leq 0\right\rbrace,
	\Delta_2=\left\lbrace i \in \left\lbrace 1,...,N_2\right\rbrace  |\text{Im}\theta(w_i)> 0\right\rbrace,\nonumber\\
	&\Lambda_1=\left\lbrace j_0 \in \left\lbrace 1,...,N_1\right\rbrace  |\text{Im}\theta(z_{j_0})= 0\right\rbrace,\Lambda_2=\left\lbrace i_0 \in \left\lbrace 1,...,N_2\right\rbrace  |\text{Im}\theta(w_{i_0)}= 0\right\rbrace.\nonumber
\end{align}

For the poles $\zeta_n$ with $n\notin\Lambda$, we want to trap them for jumps along small closed circles
enclosing themselves respectively. The jump matrix  in  (\ref{jumpv})  also needs to  be restricted.  Recall the  well known factorizations of  $V(z)$:
\begin{align}
	V(z)&=\left(\begin{array}{cc}
	1 & -\tilde{\rho} e^{2it\theta}\\
	0 & 1
\end{array}\right)\left(\begin{array}{cc}
1 & 0 \\
\rho e^{-2it\theta} & 1
\end{array}\right)\\
&=\left(\begin{array}{cc}
	1 & \\
	\frac{\rho e^{-2it\theta}}{1-\rho\tilde{\rho}} & 1
\end{array}\right)(1-\rho\tilde{\rho})^{\sigma_3}\left(\begin{array}{cc}
1 & -\frac{\tilde{\rho} e^{2it\theta}}{1-\rho\tilde{\rho}} \\
0 & 1
\end{array}\right).
\end{align}
We will use  these factorizations to deform the jump contours so that exponentials $e^{\pm2it\theta}$ are decaying in corresponding regions  respectively.
Define  functions
\begin{align}
\delta (z)&=\exp\left(-\frac{1}{2\pi i}\int _{i\mathbb{R}}\left( \dfrac{1}{s-z}-\frac{1}{2s}\right) \log (1-\rho(s)\tilde{\rho}(s))ds\right);\\
T(z)&=T(z,\xi)=\prod_{n\in \Delta}\dfrac{z-\zeta_n}{\bar{\zeta}_n^{-1}z-1}\delta (z)\nonumber\\
&=\prod_{j\in \Delta_1}\dfrac{z^2-z_j^2}{\bar{z}_j^{-2}z^2-1}\dfrac{z^2-\bar{z}_j^{-2}}{z_j^{2}z^2-1}\prod_{i\in \Delta_2}\dfrac{z^2-w_i^2}{w_i^{2}z^2-1}\delta (z) \label{T}.
\end{align}
In  the above formulas, we choose the principal branch of power and logarithm functions.

\begin{Proposition}\label{proT}
	The function defined by (\ref{T}) has following properties:\\
	(a) $T$ is meromorphic in $\mathbb{C}\setminus \mathbb{R}$, and for each $n\in\Delta$, $T(z)$ has simple  zeros   $\zeta_n$ and  simple poles  $\bar{\zeta}_n$;\\
	(b) $T(z)=\overline{T^{-1}(\bar{z})}=T^{-1}(-z^{-1})$;\\
	(c) For $z\in \mathbb{R}$, as z approaching the real axis from above and below, $T$ has boundary values $T_\pm$, which satisfy:
	\begin{equation}
	T_+(z)=(1-\rho(z)\tilde{\rho}(z))T_-(z),\hspace{0.5cm}z\in i\mathbb{R};
	\end{equation}
	(d)  $\lim_{z\to \infty}T(z)\triangleq T(\infty)$, where
	\begin{equation}
		T(\infty)=\prod_{j\in \Delta_1} \bar{z}_j^2z_j^{-2}\prod_{i\in \Delta_2}\bar{w}_i^{2}\exp\left(\frac{1}{4\pi i}\int_{i\mathbb{R}}s^{-1}  \log (1-\rho(s)\tilde{\rho}(s))ds\right) ,
	\end{equation}
	with $ |T(\infty)|=1$;\\
	(e)  As $|z|\to \infty$ with $|arg(z)|\leq c<\pi$,
	\begin{equation}
	T(z)=T(\infty)\left( 1+z^{-1}\frac{1}{2\pi i}\int _{i\mathbb{R}}\log (1-\rho(s)\tilde{\rho}(s))ds+ \mathcal{O}(z^{-2})\right) \label{expT};
	\end{equation}
	(f)  $T(z)$ is continuous at $z=0$, and
	\begin{equation}
	\lim_{z\to 0}T(z)=T(0)=T(\infty)^{-1} \label{T0};
	\end{equation}
	(g)  $\frac{a(z)}{T(z)}$ is holomorphic in $D^+$. And its  absolute value  is bounded in $D^+\cap \left\lbrace z\in\mathbb{C}|\text{Re} z>0 \right\rbrace $. Additionally, the ratio extends as a continuous function on $i\mathbb{R}$.
\end{Proposition}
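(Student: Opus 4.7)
The plan is to decompose $T(z)=B(z)\delta(z)$ where
\[
B(z)=\prod_{n\in\Delta}\frac{z-\zeta_n}{\bar{\zeta}_n^{-1}z-1}=\prod_{n\in\Delta}\bar{\zeta}_n\,\frac{z-\zeta_n}{z-\bar{\zeta}_n}
\]
is a rational Blaschke-type product and $\delta(z)$ is the Cauchy integral on $i\mathbb{R}$ from its definition. With this decomposition (a) is immediate: $B$ has simple zeros at $\zeta_n\in\Delta\subset D^+$ and simple poles at $\bar{\zeta}_n\in D^-$, while $\delta$ is analytic off $i\mathbb{R}$; since the $\zeta_n$ lie strictly off $i\mathbb{R}$, $B$ is analytic across $i\mathbb{R}$, so the only jump of $T$ comes from $\delta$, and Plemelj gives (c) at once.

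For (b), I would exploit the symmetries of $\mathcal{Z}$ recorded in (\ref{spectrals}): $\Delta$ is invariant under $\zeta\mapsto -\zeta$ and $\zeta\mapsto \pm\bar{\zeta}^{-1}$ because $\theta$ is invariant under these maps, and this is precisely what recasts $B(z)$ in the factored form of (\ref{T}); direct substitution then yields $B(z)=\overline{B(\bar z)^{-1}}=B(-z^{-1})^{-1}$. The corresponding relations for $\delta$ follow from the reality of $\log(1-\rho\tilde\rho)$ on $i\mathbb{R}$ and from the change of variable $s\mapsto -s^{-1}$, combined with the inversion symmetry of $\rho\tilde\rho$ inherited from (\ref{symrho}). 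The asymptotic statements (d)--(f) are then obtained by expansion: the identity $(z-\zeta_n)/(\bar{\zeta}_n^{-1}z-1)=\bar{\zeta}_n+\bar{\zeta}_n(\bar{\zeta}_n-\zeta_n)z^{-1}+O(z^{-2})$ gives $B(\infty)=\prod_{n\in\Delta}\bar{\zeta}_n$, which collapses via the quadruple/pair structure of $\Delta$ to the displayed product in (d); the same pairing forces $\sum_{n\in\Delta}(\bar{\zeta}_n-\zeta_n)=0$, killing the $z^{-1}$ contribution of $B$, so that the coefficient in (e) comes entirely from Taylor-expanding $(s-z)^{-1}$ in the exponent of $\delta$. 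The equality $|T(\infty)|=1$ will then follow because each explicit factor has unit modulus and the remaining integral $\int_{i\mathbb{R}}s^{-1}\log(1-\rho\tilde\rho)\,ds$ is purely imaginary (the measure $s^{-1}ds$ being real along $i\mathbb{R}$). Finally (f) is obtained by applying $T(-z^{-1})=T(z)^{-1}$ from (b) at $z=\infty$.

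The last assertion (g) is the principal content. Proposition \ref{proasyM} identifies the zeros of $a$ in $D^+$ with the simple zeros at $\{\zeta_n:n\in\mathcal{N}\}$, while $T$ has simple zeros at $\{\zeta_n:n\in\Delta\}$ and no poles in $D^+$; the removable-singularity theorem then makes $a/T$ holomorphic on all of $D^+$. Boundedness on $D^+\cap\{\mathrm{Re}\,z>0\}$ follows by combining the trace formula (\ref{a}) for $a$ with the fact that the denominators $z-\bar{\zeta}_n$ of $T$ stay uniformly bounded away from zero on the closure of this quadrant (the poles $\bar{\zeta}_n$ lying in $D^-$). Continuity of $a/T$ up to $i\mathbb{R}$ then follows from continuity of $a$ on $\overline{D^+}\setminus\{\pm i\}$ granted by Proposition \ref{proasyM} together with the nonvanishing continuous boundary values of $T$ from $D^+$, which are guaranteed by Plemelj and the $W^{2,\infty}$ regularity of $\rho\tilde\rho$ from Assumption \ref{initialdata}. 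I expect the main obstacle to be the careful sign- and multiplicity-accounting in the quadruple/pair structure of $\Delta$ that is needed to make (d) and the cancellation in (e) come out exactly; everything else is a routine combination of Plemelj, residue tracking, and the scattering symmetries.
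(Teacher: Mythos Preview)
Your treatment of (a)--(f) is correct and in fact more detailed than the paper's own proof, which dismisses these as ``simple calculation'' together with Plemelj and Laurent expansion. The decomposition $T=B\delta$ and the use of the quadruple/pair structure of $\Delta$ to force $\sum_{n\in\Delta}(\bar\zeta_n-\zeta_n)=0$, killing the $z^{-1}$ contribution of $B$ in (e), is exactly right.

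There is, however, a genuine gap in your argument for (g). When you form $a(z)/T(z)$ via the trace formula (\ref{a}), the $i\mathbb{R}$-portion of the integral in $a$ cancels against $\delta$ (up to the constant $T(\infty)^{-1}$), but the $\mathbb{R}$-portion survives:
\[
\frac{a(z)}{T(z)}=T(\infty)^{-1}\prod_{j\in\nabla_1}\frac{z^2-z_j^2}{\bar z_j^{-2}z^2-1}\,\frac{z^2-\bar z_j^{-2}}{z_j^2 z^2-1}\prod_{i\in\nabla_2}\frac{z^2-w_i^2}{w_i^2 z^2-1}\exp\!\left(-\frac{1}{2\pi i}\int_{\mathbb{R}}\frac{\log(1-\rho\tilde\rho)}{s-z}\,ds\right).
\]
Your stated reason for boundedness (``denominators $z-\bar\zeta_n$ bounded away from zero'') controls only the rational factor; it says nothing about the exponential, whose modulus is not a priori bounded as $z$ approaches $\mathbb{R}$ from inside the first quadrant. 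The paper closes this with a Poisson-kernel estimate: for $s\in\mathbb{R}$ one has $1-\rho(s)\tilde\rho(s)=1+|\rho(s)|^2$ by (\ref{symrho}), so $\log(1-\rho\tilde\rho)\ge 0$ is real, and writing $z=x+iy$ the real part of the exponent is
\[
-\frac{y}{2\pi}\int_{\mathbb{R}}\frac{\log(1+|\rho(s)|^2)}{(s-x)^2+y^2}\,ds,
\]
which is bounded in absolute value by $\tfrac{1}{2}\|\log(1+|\rho|^2)\|_{L^\infty(\mathbb{R})}$ since $\|\,y/((s-x)^2+y^2)\,\|_{L^1_s(\mathbb{R})}=\pi$ uniformly in $z$. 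Without this step the boundedness claim in (g) is not established.
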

\begin{proof}
	 Properties (a), (b), (d) and (f)  can be obtain by simple calculation. And (c)  follows from the Plemelj formula. By the Laurent expansion (e)  immediately. For brevity, we  omit calculation. For (g), from (\ref{a}) we have
	 \begin{equation}
	 	\frac{a(z)}{T(z)}=T(\infty)^{-1}\prod_{j\in \nabla_1}\dfrac{z^2-z_j^2}{\bar{z}_j^{-2}z^2-1}\dfrac{z^2-\bar{z}_j^{-2}}{z_j^{2}z^2-1}\prod_{i\in \nabla_2}\dfrac{z^2-w_i^2}
{w_i^{2}z^2-1}\exp\left\lbrace -\frac{1}{2\pi i}\int _{\mathbb{R}}\frac{\log (1-\rho(s)\tilde{\rho}(s))}{s-z}\right\rbrace.\nonumber
	 \end{equation}
 So $\frac{a(z)}{T(z)}$ is holomorphic in $D^+$. And in above expression,  all factors except the last  integral is bounded for $z\in D^+$. From (\ref{symrho}), $1-\rho(s)\tilde{\rho}(s)=1+|\rho(s)|^2$. Let $z=x+yi$, then  the
 real part of the exponential
 is $		-\frac{y}{2\pi}\int _{\mathbb{R}}\frac{\log (1+|\rho(s)|^2)}{|s-z|^2}ds$ which can be bounded as follows:
	 \begin{align}
|\frac{y}{2\pi}\int _{\mathbb{R}}\frac{\log (1+|\rho(s)|^2)}{|s-z|^2}ds|&\leq\frac{1}{2\pi}\parallel
\log (1+|\rho(s)|^2)\parallel_{L^\infty(\mathbb{R})}\parallel \frac{y}{(s-x)^2+y^2}\parallel_{L^1(\mathbb{R})}\nonumber\\
&\lesssim \parallel \rho(s)\parallel_{L^\infty(\mathbb{R})}.\nonumber
	\end{align}

\end{proof}

Additionally, let $\varrho$ be a positive constant stratifying
\begin{equation}
	\varrho=\frac{1}{2}\min\left\lbrace\min_{ j\neq i\in \mathcal{N}}|\zeta_i-\zeta_j|, \min_{j\in \mathcal{N}}\left\lbrace |\text{Im}\zeta_j|, |\text{Re}\zeta_j|\right\rbrace ,\min_{j\in \mathcal{N}\setminus\Lambda,\text{Im}\theta(z)=0}|\zeta_j-z| \right\rbrace .
\end{equation}
By above definition, for every $n\in  \mathcal{N}  $, we define disks   $\mathbb{D}(\zeta_n,\varrho)$, such that  they   pairwise disjoint,
also    disjoint  with $\left\lbrace z\in \mathbb{C}|\text{Im} \theta(z)=0 \right\rbrace $ and $\Sigma$.
Introduce a piecewise matrix function
\begin{equation}
	G(z)=\left\{ \begin{array}{ll}
		\left(\begin{array}{cc}
			1 & 0\\
			-C_n(z-\zeta_n)^{-1}e^{-2it\theta_n} & 1
		\end{array}\right),   &\text{as } z\in\mathbb{D}(\zeta_n,\varrho),n\in\nabla\setminus\Lambda;\\[12pt]
		\left(\begin{array}{cc}
			1 & -C_n^{-1}(z-\zeta_n)e^{2it\theta_n}\\
			0 & 1
		\end{array}\right),   &\text{as } z\in\mathbb{D}(\zeta_n,\varrho),n\in\Delta;\\
		\left(\begin{array}{cc}
		1 & \bar{C}_n(z-\bar{\zeta}_n)^{-1}e^{2it\bar{\theta}_n}\\
		0 & 1
		\end{array}\right),   &\text{as } 	z\in\mathbb{D}(\bar{\zeta}_n,\varrho),n\in\nabla\setminus\Lambda;\\
		\left(\begin{array}{cc}
		1 & 0	\\
		\bar{C}_n^{-1}(z-\bar{\zeta}_n)e^{-2it\bar{\theta}_n} & 1
		\end{array}\right),   &\text{as } 	z\in\mathbb{D}(\bar{\zeta}_n,\varrho),n\in\Delta;\\
	I &\text{as } 	z \text{ in elsewhere};
	\end{array}\right..\label{funcG}
\end{equation}
Now we use $T(z)$ and $G(z)$  to define a new  matrix-valued   function $M^{(1)}(z)$.
\begin{equation}
M^{(1)}(z)=T(\infty)^{-\sigma_3}M(z)G(z)T(z)^{\sigma_3},\label{transm1}
\end{equation}
which then satisfies the following RH problem.

\noindent \textbf{RHP1}. Find a matrix-valued function  $  M^{(1)}(z )$ which satisfies:

$\blacktriangleright$ Analyticity: $M^{(1)}(z )$ is meromorphic in $\mathbb{C}\setminus \Sigma^{(1)}$, where
\begin{equation}
	\Sigma^{(1)}=\mathbb{R}\cup i\mathbb{R}\cup\left[\cup_{n\in\mathcal{N}\setminus\Lambda}\left( \partial \mathbb{D}(\bar{\zeta}_n,\varrho)\cup \partial\mathbb{D}(\zeta_n,\varrho)\right)  \right] ,
\end{equation}
is shown in Figure \ref{fig:zero};

$\blacktriangleright$ Symmetry: $M^{(1)}(z)=\sigma_2\overline{M^{(1)}(\bar{z})}\sigma_2$=$\sigma_1\overline{M^{(1)}(-\bar{z})}\sigma_1=\frac{i}{z}M^{(1)}(-1/z)\sigma_3Q_-$;

$\blacktriangleright$ Jump condition: $M^{(1)}$ has continuous boundary values $M^{(1)}_\pm$ on $\Sigma^{(1)}$ and
\begin{equation}
	M^{(1)}_+(z)=M^{(1)}_-(z)V^{(1)}(z),\hspace{0.5cm}z \in \Sigma^{(1)},
\end{equation}
where
\begin{equation}
	V^{(1)}(z)=\left\{\begin{array}{ll}\left(\begin{array}{cc}
		1 & -e^{2it\theta}\tilde{\rho}(z)T^{-2}(z) \\
		0 & 1
	\end{array}\right)
		\left(\begin{array}{cc}
			1 & 0\\
			e^{-2it\theta}\rho(z)T^2(z) & 1
		\end{array}\right),   &\text{as } z\in 	\mathbb{R};\\[12pt]
		\left(\begin{array}{cc}
		1 & 0\\
		\frac{e^{-2it\theta}\tilde{\rho}(z)T_+^{2}(z)}{1-\tilde{\rho}(z)\rho(z)} & 1
	\end{array}\right)\left(\begin{array}{cc}
	1 & -\frac{e^{2it\theta}\tilde{\rho}(z)T_-^{-2}(z)}{1-\tilde{\rho}(z)\rho(z)}\\
	0 & 1
\end{array}\right),   &\text{as } z\in i\mathbb{R};\\[12pt]
		\left(\begin{array}{cc}
		1 & 0\\
		-C_n(z-\zeta_n)^{-1}T^2(z)e^{-2it\theta_n} & 1
		\end{array}\right),   &\text{as } 	z\in\partial\mathbb{D}(\zeta_n,\varrho),n\in\nabla\setminus\Lambda;\\[12pt]
		\left(\begin{array}{cc}
			1 & -C_n^{-1}(z-\zeta_n)T^{-2}(z)e^{2it\theta_n}\\
			0 & 1
		\end{array}\right),   &\text{as } z\in\partial\mathbb{D}(\zeta_n,\varrho),n\in\Delta;\\
		\left(\begin{array}{cc}
			1 & \bar{C}_n(z-\bar{\zeta}_n)^{-1}T^{-2}(z)e^{2it\bar{\theta}_n}\\
			0 & 1
		\end{array}\right),   &\text{as } 	z\in\partial\mathbb{D}(\bar{\zeta}_n,\varrho),n\in\nabla\setminus\Lambda;\\
		\left(\begin{array}{cc}
			1 & 0	\\
			\bar{C}_n^{-1}(z-\bar{\zeta}_n)e^{-2it\bar{\theta}_n}T^2(z) & 1
		\end{array}\right),   &\text{as } 	z\in\partial\mathbb{D}(\bar{\zeta}_n,\varrho),n\in\Delta;\\
	\end{array}\right.;\label{jumpv1}
\end{equation}

$\blacktriangleright$ Asymptotic behaviors:
\begin{align}
	&M^{(1)}(z) = e^{i\nu_-(x,t;q)\sigma_3}+\mathcal{O}(z^{-1}),\hspace{0.5cm}z \rightarrow \infty,\\
	&M^{(1)}(z) =\frac{i}{z}e^{i\nu_-(x,t;q)\sigma_3}\sigma_3Q_-+\mathcal{O}(1),\hspace{0.5cm}z \rightarrow 0;
\end{align}

$\blacktriangleright$ Residue conditions: $M^{(1)}$ has simple poles at each point $\zeta_n$ and $\bar{\zeta}_n$ for $n\in\Lambda$ with:
\begin{align}
	&\res_{z=\zeta_n}M^{(1)}(z)=\lim_{z\to \zeta_n}M^{(1)}(z)\left(\begin{array}{cc}
		0 & 0\\
		C_ne^{-2it\theta_n}T^2(\zeta_n) & 0
	\end{array}\right),\\
	&\res_{z=\bar{\zeta}_n}M^{(1)}(z)=\lim_{z\to \bar{\zeta}_n}M^{(1)}(z)\left(\begin{array}{cc}
		0 & -\bar{C}_nT^{-2}(\bar{\zeta}_n)e^{2it\bar{\theta}_n}\\
		0 & 0
	\end{array}\right).
\end{align}

\begin{proof}
	Note that the triangular factors  (\ref{funcG})  trades 	poles $\zeta_n$  and $\bar{\zeta}_n$  to jumps on the disk boundaries $\partial \mathbb{D}(\zeta_n,\varrho)$ and $\partial \mathbb{D}(\bar{\zeta}_n,\varrho)$ respectively  for $n\in\mathcal{N}\setminus\Lambda$. Then by simple calculation we can obtain the residues condition and jump condition from (\ref{RES1}), (\ref{RES2}) (\ref{jumpv}), (\ref{funcG}) and (\ref{transm1}). The   analyticity and symmetry of $M^{(1)}(z)$ is directly from its definition, the Proposition \ref{proT}, (\ref{funcG}) and the properties of $M$. As for asymptotic behaviors, from $\lim_{z\to 0}G(z)=\lim_{z\to \infty}G(z)=I$ and Proposition \ref{proT} (f), we  obtain that $M^{(1)}(z)$ has same asymptotic behaviors as $M(z)$.
\end{proof}

\begin{figure}[H]
	\centering
	\subfigure[]{
		\includegraphics[width=0.5\linewidth]{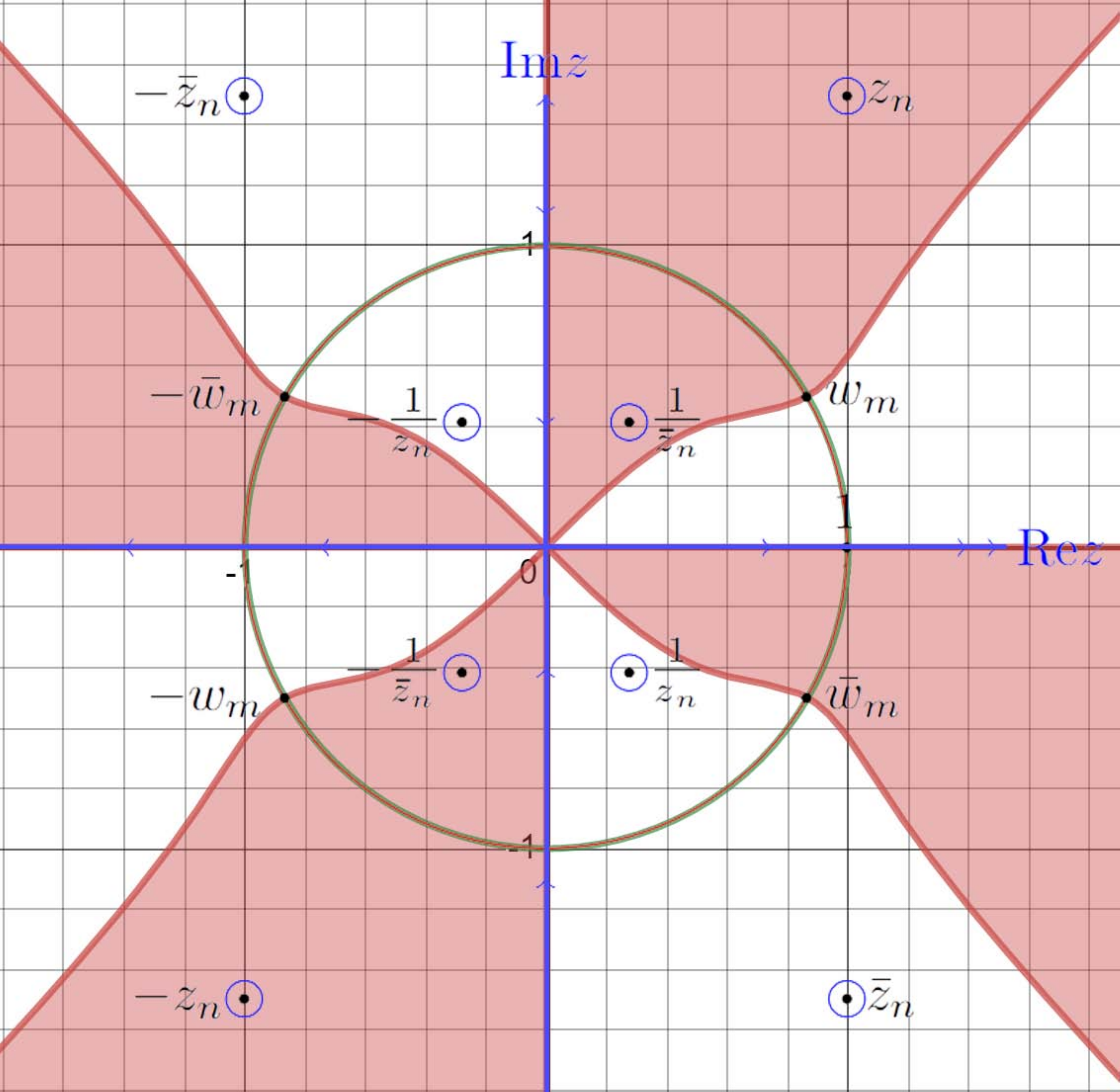}
	\label{fig:desmos-graph}}
\caption{The blue curve, including $\mathbb{R}$, $i\mathbb{R}$ and the small circles  constitute $\Sigma^{(1)}$.
 Because Im$\theta(w_m)=0$, it remain the pole of $M^{(1)}$. And Im$\theta(z_n)\neq0$, so we change it to
  jump on $\partial\mathbb{D}(\zeta_n,\varrho)$. }	\label{fig:zero}
\end{figure}

\section{Mixed $\bar{\partial}$-RH Problem }\label{sec4}

\quad  In this section,  we make continuous extension for  the jump matrix $V^{(1)}$  to remove the jump from $\Sigma$. Besides, the new problem is hoped to  takes advantage of the decay/growth of $e^{2it\theta(z)}$ for $z\notin\Sigma$. For this purpose, we   introduce new eight regions:
\begin{align}
	&\Omega_{2n+1}=\left\lbrace z\in\mathbb{C}|n\pi/2 \leq\arg z \leq n\pi/2+\varphi \right\rbrace ,\\
	&\Omega_{2n+2}=\left\lbrace z\in\mathbb{C}|(n+1)\pi/2 -\varphi\leq\arg z \leq (n+1)\pi/2 \right\rbrace,
\end{align}
where $n=0,1,2,3$ and $\varphi>0$ is an fixed sufficiently small angle  achieving following conditions:\\
1.  $\frac{2|\xi+2|}{|\xi+2|+1}<\cos 2\varphi<1$;\\
2.  each $\Omega_i$ doesn't intersect any of $\mathbb{D}(\zeta_n,\varrho)$ or $\mathbb{D}(\bar{\zeta}_n,\varrho)$.\\
Define  new contours as follow:
\begin{align}
&\Sigma_k=e^{(k-1)i\pi/4+\varphi}R_+,\hspace{0.5cm}k=1,3,5,7,\\
&\Sigma_k=e^{ki\pi/4-\varphi}R_+,\hspace{0.5cm}k=2,4,6,8,\\
&\tilde{\Sigma}=\Sigma_1\cup\Sigma_2...\cup\Sigma_{8},
\end{align}
which is the boundary of $\Omega_k$ respectively. In addition, let
\begin{align}
	&\Omega=\Omega_1\cup ... \cup\Omega_8.\\
	&\Sigma^{(2)}=\cup_{n\in\mathcal{N}\setminus\Lambda}\left( \partial\mathbb{D}(\bar{\zeta}_n,\varrho)\cup\partial\mathbb{D}(\zeta_n,\varrho)\right)  ,
\end{align}
which are shown in Figure \ref{figR2}.
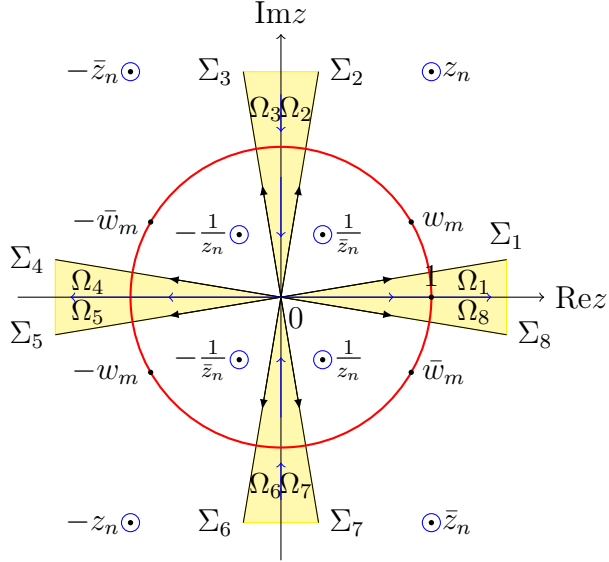
\begin{figure}[htp]
	\centering
		\begin{tikzpicture}[node distance=2cm]
		\draw[yellow, fill=yellow!40] (0,0)--(0.5,3)--(-0.5,3)--(0,0)--(0.5,-3)--(-0.5,-3)--(0,0);
		\draw[yellow, fill=yellow!40] (0,0)--(3,-0.5)--(3,0.5)--(0,0)--(-3,-0.5)--(-3,0.5)--(0,0);
		\draw(0,0)--(3,0.5)node[above]{$\Sigma_1$};
		\draw(0,0)--(0.5,3)node[right]{$\Sigma_2$};
		\draw(0,0)--(-0.5,3)node[left]{$\Sigma_3$};
		\draw(0,0)--(-3,0.5)node[left]{$\Sigma_4$};
		\draw(0,0)--(-3,-0.5)node[left]{$\Sigma_5$};
		\draw(0,0)--(-0.5,-3)node[left]{$\Sigma_6$};
		\draw(0,0)--(0.5,-3)node[right]{$\Sigma_7$};
		\draw(0,0)--(3,-0.5)node[right]{$\Sigma_8$};
		\draw[->](-3.5,0)--(3.5,0)node[right]{ Re$z$};
		\draw[->](0,-3.5)--(0,3.5)node[above]{ Im$z$};
		\draw[-latex](0,0)--(-1.5,-0.25);
		\draw[-latex](0,0)--(-1.5,0.25);
		\draw[-latex](0,0)--(1.5,0.25);
		\draw[-latex](0,0)--(1.5,-0.25);
		\draw[-latex](0,0)--(0.25,-1.5);
		\draw[-latex](0,0)--(0.25,1.5);
		\draw[-latex](0,0)--(-0.25,1.5);
		\draw[-latex](0,0)--(-0.25,-1.5);
		\coordinate (I) at (0.2,0);
		\coordinate (C) at (-0.2,2.2);
		\fill (C) circle (0pt) node[above] {\small $\Omega_3$};
		\coordinate (E) at (0.2,2.2);
		\fill (E) circle (0pt) node[above] {\small $\Omega_2$};
		\coordinate (D) at (2.2,0.2);
		\fill (D) circle (0pt) node[right] {\small$\Omega_1$};
		\coordinate (F) at (-0.2,-2.2);
		\fill (F) circle (0pt) node[below] {\small$\Omega_6$};
		\coordinate (J) at (-2.2,-0.2);
		\fill (J) circle (0pt) node[left] {\small$\Omega_5$};
		\coordinate (k) at (-2.2,0.2);
		\fill (k) circle (0pt) node[left] {\small$\Omega_4$};
		\coordinate (J) at (0.2,-2.2);
		\fill (J) circle (0pt) node[below] {\small$\Omega_7$};
		\coordinate (k) at (2.2,-0.2);
		\fill (k) circle (0pt) node[right] {\small$\Omega_8$};
		\fill (I) circle (0pt) node[below] {$0$};
		\draw[red,  thick] (2,0) arc (0:360:2);
		\draw[blue] (2,3) circle (0.12);
		\draw[blue][->](0,0)--(-1.5,0);
		\draw[blue][->](-1.5,0)--(-2.8,0);
		\draw[blue][->](0,0)--(1.5,0);
		\draw[blue][->](1.5,0)--(2.8,0);
		\draw[blue][->](0,2.7)--(0,2.2);
		\draw[blue][->](0,1.6)--(0,0.8);
		\draw[blue][->](0,-2.7)--(0,-2.2);
		\draw[blue][->](0,-1.6)--(0,-0.8);
		\coordinate (A) at (2,3);
		\coordinate (B) at (2,-3);
		\coordinate (C) at (-0.5546996232,0.8320505887);
		\coordinate (D) at (-0.5546996232,-0.8320505887);
		\coordinate (E) at (0.5546996232,0.8320505887);
		\coordinate (F) at (0.5546996232,-0.8320505887);
		\coordinate (G) at (-2,3);
		\coordinate (H) at (-2,-3);
		\coordinate (I) at (2,0);
		\draw[blue] (2,-3) circle (0.12);
		\draw[blue] (-0.55469962326,0.8320505887) circle (0.12);
		\draw[blue] (0.5546996232,0.8320505887) circle (0.12);
		\draw[blue] (-0.5546996232,-0.8320505887) circle (0.12);
		\draw[blue] (0.5546996232,-0.8320505887) circle (0.12);
		\draw[blue] (-2,3) circle (0.12);
		\draw[blue] (-2,-3) circle (0.12);
		\coordinate (J) at (1.7320508075688774,1);
		\coordinate (K) at (1.7320508075688774,-1);
		\coordinate (L) at (-1.7320508075688774,1);
		\coordinate (M) at (-1.7320508075688774,-1);
		\fill (A) circle (1pt) node[right] {$z_n$};
		\fill (B) circle (1pt) node[right] {$\bar{z}_n$};
		\fill (C) circle (1pt) node[left] {$-\frac{1}{z_n}$};
		\fill (D) circle (1pt) node[left] {$-\frac{1}{\bar{z}_n}$};
		\fill (E) circle (1pt) node[right] {$\frac{1}{\bar{z}_n}$};
		\fill (F) circle (1pt) node[right] {$\frac{1}{z_n}$};
		\fill (G) circle (1pt) node[left] {$-\bar{z}_n$};
		\fill (H) circle (1pt) node[left] {$-z_n$};
		\fill (I) circle (1pt) node[above] {$1$};
		\fill (J) circle (1pt) node[right] {$w_m$};
		\fill (K) circle (1pt) node[right] {$\bar{w}_m$};
		\fill (L) circle (1pt) node[left] {$-\bar{w}_m$};
		\fill (M) circle (1pt) node[left] {$-w_m$};
		\end{tikzpicture}
	\caption{The yellow region is $\Omega$. The blue circle constitute $\Sigma^{(2)}$ together. }
	\label{figR2}
\end{figure}

\begin{lemma}\label{Imtheta}
	Let $\xi=\frac{x}{t}\in(-3,-1)$, and $F(r)=r^2+\frac{1}{r^2}$ is a  real-valued function. Then for $z=re^{i\phi}$, the imaginary part of phase function (\ref{Reitheta}) satisfies
	\begin{align}
		&\text{Im }\theta(z)\leq \frac{1}{16}|\sin 2\phi|(|\xi+2|-1)F(r)^2,\hspace{0.5cm} \text{as }z\in\Omega_1, \Omega_3, \Omega_5, \Omega_7;\\
		&\text{Im }\theta(z)\geq  \frac{1}{16}|\sin 2\phi|(1-|\xi+2|)F(r)^2,\hspace{0.5cm} \text{as }z\in\Omega_2, \Omega_4, \Omega_6, \Omega_8.
	\end{align}
\end{lemma}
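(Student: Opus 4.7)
The plan is to pass to polar coordinates, extract a single clean formula for $\text{Im}\,\theta$, and reduce both inequalities to one algebraic estimate that is closed by Assumption~1.

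First I would set $z=re^{i\phi}$ in (\ref{Reitheta}), use $\text{Re}(2it\theta)=-2t\,\text{Im}\,\theta$, and apply the identities $2\,\text{Re}\,z\cdot\text{Im}\,z=r^{2}\sin 2\phi$, $\text{Re}^{2}z-\text{Im}^{2}z=r^{2}\cos 2\phi$, $r^{2}+r^{-2}=F(r)$ and $r^{4}+r^{-4}=F(r)^{2}-2$. A short computation then yields
\begin{equation*}
\text{Im}\,\theta(z)\;=\;\frac{\sin 2\phi}{4}\bigl[(\xi+2)F(r)\;-\;(F(r)^{2}-2)\cos 2\phi\bigr],
\end{equation*}
valid for every $z=re^{i\phi}\ne 0$.

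Next I would exploit the sign pattern of $\sin 2\phi\cdot\cos 2\phi$ across the eight sectors: strictly positive on the odd sectors $\Omega_{1},\Omega_{3},\Omega_{5},\Omega_{7}$ and strictly negative on the even sectors $\Omega_{2},\Omega_{4},\Omega_{6},\Omega_{8}$. Setting $\sigma=\mathrm{sgn}(\sin 2\phi)$ and $a=|\xi+2|\in(0,1)$, pulling $|\sin 2\phi|$ out gives
\begin{equation*}
\text{Im}\,\theta\;=\;\frac{|\sin 2\phi|}{4}\bigl[\sigma(\xi+2)F(r)\;\mp\;(F(r)^{2}-2)|\cos 2\phi|\bigr],
\end{equation*}
with the minus sign on the odd sectors and the plus sign on the even ones---exactly the configuration that turns the two statements of the lemma into dual upper and lower bounds. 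Dominating the linear term via $\sigma(\xi+2)\le a$ in the upper-bound case (and $\sigma(\xi+2)\ge -a$ in the lower-bound case) then reduces both statements to the single algebraic inequality
\begin{equation*}
(F(r)^{2}-2)|\cos 2\phi|\;-\;aF(r)\;\ge\;\tfrac{1-a}{4}F(r)^{2}.
\end{equation*}

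Finally I would invoke the angular bound $|\cos 2\phi|\ge\cos 2\varphi$ available on each $\Omega_{k}$ (because $2\phi$ stays within $2\varphi$ of a multiple of $\pi$), together with Assumption~1's uniform lower bound $\cos 2\varphi>\tfrac{2a}{a+1}$ and with $F(r)\ge 2$, to close the reduced inequality. The main obstacle is precisely this last algebraic step: for $F(r)$ large the quadratic term $(F^{2}-2)|\cos 2\phi|$ obviously dominates the linear $aF(r)$, but near $r=1$, where $F(r)=2$ and $F(r)^{2}-2$ attains its minimum, the two contributions become comparable and the argument becomes tight. I would view the reduced statement as a quadratic in $F(r)$ and close the estimate by verifying that its positive root lies below $2$ under the precise hypothesis on $\varphi$; this algebraic bookkeeping is the only non-mechanical part of the proof.
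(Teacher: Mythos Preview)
Your approach is essentially identical to the paper's: you both pass to polar coordinates, obtain the formula
\[
\text{Im}\,\theta(z)=\tfrac{1}{4}\sin 2\phi\bigl[(\xi+2)F(r)-(F(r)^{2}-2)\cos 2\phi\bigr],
\]
and then try to close an elementary inequality using $F(r)\ge 2$ together with the angular bound $|\cos 2\phi|\ge\cos 2\varphi>\tfrac{2a}{a+1}$, $a=|\xi+2|$. Your reduction to
\[
(F^{2}-2)\,|\cos 2\phi|\;-\;aF\;\ge\;\tfrac{1-a}{4}\,F^{2}
\]
is exactly the statement implicit in the paper's final line ``Substitute above inequality into (\ref{1}) we obtain the consequence immediately.''

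Where your write-up goes wrong is the diagnosis of the closing step. You assert that for $F$ large the quadratic $(F^{2}-2)|\cos 2\phi|$ ``obviously dominates'' the linear $aF$; true, but the right-hand side $\tfrac{1-a}{4}F^{2}$ is \emph{also} quadratic, so for large $F$ the inequality reduces to the coefficient comparison $|\cos 2\phi|\ge\tfrac{1-a}{4}$, which is \emph{not} implied by $|\cos 2\phi|>\tfrac{2a}{a+1}$ when $a$ is small. Concretely: at $a=0.1$, $|\cos 2\phi|=0.2$ (admissible, since $\tfrac{2a}{a+1}\approx 0.18$) and $F=10$, the left side is $\approx 18.6$ while the right side is $22.5$. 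Thus the difficulty is in the large-$F$ regime, not near $F=2$, and your plan to ``verify the positive root lies below $2$'' cannot succeed under Assumption~1 alone.

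The paper's own argument has the same gap: tracing the chain of inequalities there actually yields
\[
\text{Im}\,\theta(z)\;\le\;\tfrac{1}{16}\,|\sin 2\phi|\,(|\xi+2|-1)\,F(r)^{2}\cos 2\phi,
\]
with an extra factor $\cos 2\phi$ that cannot be dropped (the right side is negative and $\cos 2\phi\le 1$). This is harmless for the application in Section~\ref{sec8}, where only $|e^{2it\theta}|\le e^{-c\,t\,|\sin 2\phi|\,F(r)^{2}}$ for some positive $c=c(\xi,\varphi)$ is needed; one simply absorbs $\cos 2\phi\ge\cos 2\varphi>0$. So qualitatively your plan and the paper's proof coincide and are correct, but the bound should be stated with a generic constant (or with an explicit extra factor $\cos 2\varphi$) rather than the precise $\tfrac{1}{16}(1-|\xi+2|)$.
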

\begin{proof}
	We only prove the case $z\in\Omega_1,$ and the other regions are similarly.  From (\ref{Reitheta}) we have
	\begin{align}
		\text{Im }\theta(z)&=\frac{1}{2}\text{Im}z\text{Re}z\left[ \left( \xi+2\right)\left(1+|z|^{-4} \right) -\left( \text{Re}^2z-\text{Im}^2z\right)\left( 1+|z|^{-8}\right)   \right] \nonumber\\
		&=\frac{1}{4}r^2\sin 2\phi\left[\left( \xi+2\right)\left(1+r^{-4} \right) -r^2\cos  2\phi\left( 1+r^{-8}\right)   \right]  \nonumber\\
		&=\frac{1}{4}\sin 2\phi\left[\left( \xi+2\right)F(r) -\cos  2\phi\left( F(r)^2-2\right)   \right].\label{1}
	\end{align}
	 $F(r)\geq2$ leads to  $2\leq \frac{F(r)^2}{2}$. For $z\in\Omega_1$,  $\frac{2|\xi+2|}{|\xi+2|+1}<\cos 2\varphi<\cos 2\phi$, then we have
	\begin{equation}
		\frac{|\xi+2|}{\cos2 \phi}F(r)\leq \frac{|\xi+2|+1}{4}F(r)^2.
	\end{equation}
   Substitute above inequality into (\ref{1}) we obtain the consequence immediately.
\end{proof}
Introduce a small enough constant $1>\epsilon_0>0$ with $(1-\epsilon_0)\cos\varphi>\frac{1}{2}$. Let $X_1\in C_0^\infty\left(\mathbb{R},[0,1] \right) $, which is support in $(1-\epsilon_0,1+\epsilon_0)$. And $X_0$ has  support in $(-\epsilon_0,\epsilon_0)$ with $X_0(z)=X_1(1+z)$. In addition, we denote following functions for brief:
\begin{align}
	&p_1(z)=p_5(z)=\rho(z),\hspace{0.5cm}p_2(z)=p_6(z)=\dfrac{\tilde{\rho}(z) }{1-\rho(z)\tilde{\rho}(z)},\\
	&p_3(z)=p_7(z)=\dfrac{\rho(z) }{1-\rho(z)\tilde{\rho}(z)},\hspace{0.5cm}p_4(z)=p_{8}(z)=\tilde{\rho}(z).
\end{align}
Then the next step is to construct a matrix function $R^{(2)}$. We need to remove jump on $\mathbb{R}$ and $i\mathbb{R}$, and  have some mild control on $\bar{\partial}R^{(2)}$ sufficient to ensure that the $\bar{\partial}$-contribution to the long-time asymptotics of $q(x, t)$ is negligible.
So we choose $R^{(2)}(z)$ as
\begin{equation}
R^{(2)}(z)=\left\{\begin{array}{lll}
\left(\begin{array}{cc}
1 & R_j(z)e^{2it\theta}\\
0 & 1
\end{array}\right), & z\in \Omega_j,j=2,4,6,8;\\
\\
\left(\begin{array}{cc}
1 & 0\\
R_j(z)e^{-2it\theta} & 1
\end{array}\right),  &z\in \Omega_j,j=1,3,5,7;\\
\\
I,  &elsewhere;\\
\end{array}\right.\label{R(2)}
\end{equation}
where  the functions $R_j$, $j=1,2,..,8$, is defined in following Proposition.
\begin{Proposition}\label{proR}
	 $R_j$: $\bar{\Omega}_j\to C$, $j=1,2,..,8$ have boundary values as follow:
	\begin{align}
	&R_1(z)=\Bigg\{\begin{array}{ll}
	-\rho(z)T(z)^{2} & z\in \mathbb{R}^+,\\
	0  &z\in \Sigma_1,\\
	\end{array} ,\hspace{0.6cm}
	R_2(z)=\Bigg\{\begin{array}{ll}
	0  &z\in \Sigma_2,\\
	\dfrac{\tilde{\rho}(z) T_+(z)^2}{1-\rho(z)\tilde{\rho}(z)} &z\in  i\mathbb{R}^+,\\
	\end{array} \\
	&R_3(z)=\Bigg\{\begin{array}{ll}
	\dfrac{\rho(z) T_-(z)^2}{1-\rho(z)\tilde{\rho}(z)} &z\in i\mathbb{R}^+, \\
	0 &z\in \Sigma_3,\\
	\end{array} ,
	R_4(z)=\Bigg\{\begin{array}{ll}
	0  &z\in \Sigma_4,\\
	-\tilde{\rho}(z)T(z)^{-2} &z\in  \mathbb{R}^-,\\
	\end{array} \\
	&R_5(z)=\Bigg\{\begin{array}{ll}
	-\rho(z)T(z)^{2} &z\in  \mathbb{R}^-,\\
	0  &z\in \Sigma_5,
	\end{array} ,\hspace{0.5cm}
	R_6(z)=\Bigg\{\begin{array}{ll}
	0  &z\in \Sigma_6,\\
	\dfrac{\tilde{\rho}(z) T_+(z)^2}{1-\rho(z)\tilde{\rho}(z)} &z\in  i\mathbb{R}^-,\\
	\end{array} \\
	&R_7(z)=\Bigg\{\begin{array}{ll}
	\dfrac{\rho(z) T_-(z)^2}{1-\rho(z)\tilde{\rho}(z)} &z\in i\mathbb{R}^-, \\
	0  &z\in \Sigma_7,
	\end{array} ,\hspace{0.6cm}
	R_8(z)=\Bigg\{\begin{array}{ll}
	0  &z\in \Sigma_8,\\
	-\tilde{\rho}(z)T(z)^{-2} &z\in  \mathbb{R}^+.\\
	\end{array}
	\end{align}	
	 $R_j$  have following property:
	for $j=1,5,4,8,$
	\begin{align}
	&|\bar{\partial}R_j(z)|\lesssim|p_j'(|z|)|+|z|^{-1/2}, \text{for all $z\in \Omega_j$;}\label{dbarRj}
	\end{align}
	and for $j=2,3,6,7,$
	\begin{align}
	&|\bar{\partial}R_j(z)|\lesssim |z\mp i|,\text{for all $z\in \Omega_j$ in a small fixed neighborhood  of $\pm i$}, \label{Ri}\\
	&| \bar{\partial}R_j(z)|\lesssim|p_j'(i|z|)|+|z|^{-1/2}+|\bar{\partial}X_1(|z|)|,\text{for all $z\in \Omega_j$}.\label{dbarRk}
	\end{align}
	And
	\begin{equation}
	\bar{\partial}R_j(z)=0,\hspace{0.5cm}\text{if } z\in elsewhere.
	\end{equation}
\end{Proposition}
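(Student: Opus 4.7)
The plan is to construct each $R_j$ explicitly as a smooth angular interpolation between its prescribed values on the two edges of $\Omega_j$, and then estimate $\bar\partial R_j$ directly in polar coordinates $z=|z|e^{i\phi}$. The ansatz is $R_j(z)=\Phi_j(z)\,T(z)^{\pm 2}$, where the sign on $T$ matches the corresponding factor in the jump decomposition of $V^{(1)}$ used to build $R^{(2)}$, and $\Phi_j$ is an explicit angular interpolant. Throughout I will use the polar form $\bar\partial=\tfrac12 e^{i\phi}(\partial_{|z|}+i|z|^{-1}\partial_\phi)$, together with the fact that, by Proposition \ref{proT}(a), the poles and zeros of $T$ lie inside the disks $\mathbb{D}(\zeta_n,\varrho)$, which are disjoint from every $\Omega_j$, so $T$ is holomorphic on $\overline{\Omega_j}$ and bounded there by Proposition \ref{proT}(g).

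For the sectors $j=1,4,5,8$ adjacent to $\mathbb{R}$, $p_j\in\{\rho,\tilde\rho\}$ is bounded and satisfies $p_j(0)=0$ by Proposition \ref{prop2}. Pick a smooth angular cut-off $\mathcal{X}_j(\phi)$ that equals $1$ on the $\Sigma$-edge of $\Omega_j$ and $0$ on $\Sigma_j$, and set, for example in $\Omega_1$,
\begin{equation*}
R_1(z)=-p_1(|z|)\,T(z)^{2}\,\mathcal{X}_1(\phi).
\end{equation*}
Direct differentiation and $|T|\lesssim 1$ give $|\bar\partial R_1(z)|\lesssim |p_1'(|z|)|+|z|^{-1}|p_1(|z|)|$. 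The Sobolev embedding $H^1(\mathbb{R})\hookrightarrow C^{1/2}(\mathbb{R})$, available by Assumption \ref{initialdata}, together with $p_1(0)=0$, yields $|p_1(r)|\lesssim r^{1/2}$ for small $r$, while $\|p_1\|_\infty<\infty$ handles large $r$; in either regime $|p_1(r)|/r\lesssim r^{-1/2}$, which gives \eqref{dbarRj}. The three other sectors are analogous by the symmetries of $\rho$ and $\tilde\rho$.

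For the sectors $j=2,3,6,7$ adjacent to $i\mathbb{R}$, the factor $p_j$ contains $(1-\rho\tilde\rho)^{-1}$, which is singular at $\pm i$ since $|\rho(\pm i)|=1$. The key observation is that the boundary trace $p_j T^{\pm 2}$ on $i\mathbb{R}$ can be rewritten, using the jump identity $T_+=(1-\rho\tilde\rho)T_-$ from Proposition \ref{proT}(c), as $\tilde\rho\,T_+T_-$ (for $j=2,6$) or $\rho\,T_+T_-$ (for $j=3,7$), which extends boundedly to $\pm i$. I will split
\begin{equation*}
R_j(z)=(1-X_1(|z|))\,R_j^{\mathrm{bulk}}(z)+X_1(|z|)\,R_j^{\mathrm{loc}}(z),
\end{equation*}
where $R_j^{\mathrm{bulk}}$ is built as in the previous paragraph using $p_j(i|z|)$ (smooth on the support of $1-X_1$) and $R_j^{\mathrm{loc}}$ is an angular interpolant of the regularised trace in a fixed neighbourhood of $\pm i$. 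The bulk part contributes the first two terms of \eqref{dbarRk}, plus an $|\bar\partial X_1(|z|)|$ term from differentiating the radial cut-off, while a first-order Taylor expansion at $\pm i$ of the regularised trace yields $R_j^{\mathrm{loc}}(z)=O(|z\mp i|)$, giving \eqref{Ri}.

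The main obstacle is engineering the local piece $R_j^{\mathrm{loc}}$ near $z=\pm i$. The cancellation between the singular factor $(1-\rho\tilde\rho)^{-1}$ and the boundary behaviour of $T^{\pm 2}$ is invisible from $T$ alone (which is single-valued on each side of $i\mathbb{R}$) and becomes apparent only through its jump; producing a smooth off-axis extension that preserves this cancellation and vanishes linearly at $\pm i$ is the delicate part. Once this is in place, the explicit $\bar\partial$-formula, the bounds of Proposition \ref{proT}, and the regularity $\rho,\tilde\rho\in W^{2,\infty}\cap W^{1,2}$ from Assumption \ref{initialdata} assemble straightforwardly to give all three estimates \eqref{dbarRj}--\eqref{dbarRk}.
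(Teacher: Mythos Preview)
Your treatment of the sectors $j=1,4,5,8$ is essentially the paper's argument: an angular interpolant $R_j(z)=p_j(|z|)T(z)^{\pm2}\mathcal X_j(\phi)$, polar $\bar\partial$-formula, and the $|p_j(r)|\lesssim r^{1/2}$ bound (the paper gets this via Cauchy--Schwarz on $\rho'$; your Sobolev embedding is equivalent). For $j=2,3,6,7$ the bulk/local split using $X_1(|z|)$ also matches the paper's $R_{21}+R_{22}$.

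The gap is in the local piece near $\pm i$. Two points:

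First, the inference ``$R_j^{\mathrm{loc}}(z)=O(|z\mp i|)$ gives \eqref{Ri}'' is false: linear vanishing of a function does not force linear vanishing of its $\bar\partial$-derivative (take $f(z)=\bar z-\bar\imath$). What is needed is $\bar\partial R_j^{\mathrm{loc}}=O(|z\mp i|)$, and this requires a specific construction, not a Taylor argument. Moreover, the regularised trace $\tilde\rho\,T_+T_-$ does not vanish at $\pm i$ (only the singular factor $(1-\rho\tilde\rho)^{-1}$ has been absorbed), so even the premise is wrong.

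Second, writing the trace as $\tilde\rho\,T_+T_-$ is a correct algebraic observation, but it does not by itself produce an off-axis extension: $T_+$ and $T_-$ are boundary values from opposite sides of $i\mathbb R$ and have no common analytic continuation. The paper resolves this by rewriting the boundary value near $\pm i$ in terms of the \emph{scattering data}: set $g(z)=(a(z)/T(z))^2$, which is holomorphic in $D^+$ by Proposition~\ref{proT}(g) and hence contributes nothing to $\bar\partial$, and $f(z)=X_1(z)\,\overline{b}(z)/a(z)$, which is regular through $\pm i$ since $\lambda a,\lambda b$ extend continuously there. The local piece is then
\[
R_{22}(z)=f(|z|)\,g(z)\cos\!\big[k_0(\tfrac\pi2-\arg z)\big]\;-\;\tfrac{i|z|}{k_0}\,X_0\!\big(\tfrac{\arg z}{\delta_0}\big)\,f'(|z|)\,g(z)\sin\!\big[k_0(\tfrac\pi2-\arg z)\big],
\]
where the second term is a deliberate correction: it is engineered so that in $\bar\partial R_{22}$ the radial derivative of the first term is cancelled near the axis (via the factor $1-X_0$), leaving only contributions proportional to $\sin[k_0(\tfrac\pi2-\arg z)]$ or to $1-X_0$, both of which are $O(\tfrac\pi2-\arg z)=O(|z\mp i|)$ near $\pm i$. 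This cancellation mechanism is the missing idea in your outline; without it one only gets $\bar\partial R_j^{\mathrm{loc}}=O(1)$ near $\pm i$, which is not enough to control the factor $(1+z^{-2})^{-1}$ arising from $\det M^{(r)}$ in the later $\bar\partial$-estimates of Section~\ref{sec8}.
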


\begin{proof}
	Case I: $z\in \bar{\Omega}_j$, $j=1,5,4,8$.\\
	 Take $R_1(z)$ as an example with extensions
	\begin{equation}
		R_1(z)=p_1(|z|)T^2(z)\cos(k_0 \arg z),\hspace{0.5cm}k_0=\frac{2\pi}{\varphi}.
	\end{equation}
	The other cases are easily inferred. $p_1(|z|)=\rho(|z|)$ is bounded. Denote $z=re^{i\phi}$, then we have $\bar{\partial}=\frac{e^{i\phi}}{2}\left(\partial_r+\frac{i}{r} \partial_\phi\right) $. So
	\begin{align}
	\bar{\partial}R_1(z)=\frac{e^{i\phi}}{2}T^2(z)\left(p_1'(r)\cos(k_0\phi)-\frac{i}{r}p_1(r)k_0\sin(k_0\phi) \right) .
	\end{align}
	To bound second term we use Cauchy-Schwarz inequality and obtain
	\begin{equation}
	|p_1(r)|= |\rho(r)|= |\rho(r)-\rho(0)|=|\int_{0}^r\rho'(s)ds|\leq \parallel \rho'(s)\parallel_{L^2} r^{1/2}.
	\end{equation}
	And note that $T(z)$ is a bounded function in $\bar{\Omega}_1$. Then the boundedness of (\ref{dbarRj})  follows immediately.\\
	Case II: $z\in \bar{\Omega}_j$, $j=2,3,6,7.$\\	
	The details of the proof are only given  for $R_2$. Unlike the vanishing boundary condition case in \cite{fNLS}, the determinant of $M(z)$ is $1+z^{-2}$. So to bound the $\bar{\partial}$-derivative construct by $R^{(2)}$ in following section,   the property of $R^{(2)}$ at $\pm i$ needs to be control. For this purpose, we make small adjustments to the extensions of $R_2$ as
	\begin{equation}
		R_2(z)=R_{21}(z)+R_{22}(z),
	\end{equation}
	with a constant $\delta_0$ stratifying $\varphi>\delta_0\epsilon_0$ and
	\begin{align}
		R_{21}(z)&=[1-X_1(|z|)]p_2(i|z|)T^{-2}(z)\cos[k_0(\frac{\pi}{2}-\arg z)],\\
		R_{22}(z)&=f(|z|)g(z)\cos[k_0(\frac{\pi}{2}-\arg z)]\nonumber\\
		&-\frac{i|z|}{k_0}X_0(\frac{\arg z}{\delta_0})f'(|z|)g(z)\sin[k_0(\frac{\pi}{2}-\arg z)].
	\end{align}
	Among above function,
	\begin{align}
		f(z)=X_1(z)\frac{\bar{b}(z)}{a(z)},\hspace{0.5cm}g(z)=\left( \frac{a(z)}{T(z)}\right) ^2.
	\end{align}
	Then $f(z)\in W{2,\infty}$. Obviously, $R_{21}(z)\equiv0$ with $|z|$ in the support of $X_1$ and $R_{22}(z)\equiv0$  out the support of $X_1$. Note that
	\begin{align}
		|p_2(z)|=|\dfrac{\tilde{\rho}(z) }{1-\rho(z)\tilde{\rho}(z)}|=|\dfrac{\tilde{\rho}(z) }{1-|\rho(z)|^2}|\lesssim |\rho(z)|,\hspace{0.5cm}\text{for $z$ out of supp}(X_1).
	\end{align}
	Similarly in case I, $R_{21}(z)$ can be bounded as
	\begin{equation}
		|\bar{\partial}R_{21}(z)|\lesssim (1-X_1(|z|))\left( |p_2'(i|z|)|+|z|^{-1/2}\right)+|\bar{\partial}X_1(|z|)| .
	\end{equation}
	As for $R_{22}(z)$, $z=re^{i\phi}$,
	\begin{align}
		\bar{\partial}R_{22}(z)=&\frac{e^{i\phi}}{2}g(z) \cos[k_0(\frac{\pi}{2}-\varphi)]f'(ir)\left( 1-X_0(\frac{\varphi}{\delta_0})\right)  \nonumber\\
		&+ \sin[k_0(\frac{\pi}{2}-\varphi)]\left[\frac{ik_0}{r}f(ir)+\frac{1}{\delta_0 k_0}X_0'(\frac{\varphi}{\delta_0})f'(ir) \right] \nonumber\\
		&-\frac{i}{k_0}\sin[k_0(\frac{\pi}{2}-\varphi)]X_0(\frac{\arg z}{\delta_0})(rf'(ir))'.
	\end{align}
	So $|\bar{\partial}R_{22}(z)|$ is bounded, and we can write $|\bar{\partial}R_{22}(z)|\lesssim X_1(z)|z|^{-1/2}$. So (\ref{Ri}) is obtained. In addition, for $z\sim i$,
	\begin{equation}
		|\bar{\partial}R_{22}(z)|\lesssim |\sin[k_0(\frac{\pi}{2}-\varphi)]|+|1-X_0(\frac{\varphi}{\delta_0})|=\mathcal{O}(\varphi),
	\end{equation}
	from which (\ref{Ri})  follows immediately.
\end{proof}
In addition, from Proposition \ref{sym}, $R^{(2)}$ achieve the symmetry:
\begin{equation}
	R^{(2)}(z)=\sigma_2\overline{R^{(2)}(\bar{z})}\sigma_2=\sigma_1\overline{R^{(2)}(-\bar{z})}\sigma_1=\sigma_3Q_-R^{(2)}(-1/z)\sigma_3Q_-.
\end{equation}
We now  use $R^{(2)}$ to define the new transformation \begin{equation}
	M^{(2)}(z)=M^{(1)}(z)R^{(2)}(z)\label{transm2},
\end{equation}
which satisfies the following mixed $\bar{\partial}$-RH problem.

\noindent \textbf{RHP2}. Find a matrix valued function  $ M^{(2)}(z;x,t)$ with following properties:

$\blacktriangleright$ Analyticity:  $M^{(2)}(z;x,t)$ is continuous in $\mathbb{C}$,  sectionally continuous first partial derivatives in
$\mathbb{C}\setminus \left( \Sigma^{(2)}\cup \left\lbrace\zeta_n,\bar{\zeta}_n \right\rbrace_{n\in\Lambda} \right) $  and meromorphic out $\bar{\Omega}$;

$\blacktriangleright$ Symmetry: $M^{(2)}(z)=\sigma_2\overline{M^{(2)}(\bar{z})}\sigma_2$=$\sigma_1\overline{M^{(2)}(-\bar{z})}\sigma_1=\frac{i}{z}M^{(2)}(-1/z)\sigma_3Q_-$;

$\blacktriangleright$ Jump condition: $M^{(2)}$ has continuous boundary values $M^{(2)}_\pm$ on $\Sigma^{(2)}$ and
\begin{equation}
	M^{(2)}_+(z;x,t)=M^{(2)}_-(z;x,t)V^{(2)}(z),\hspace{0.5cm}z \in \Sigma^{(2)},
\end{equation}
where
\begin{equation}
	V^{(2)}(z)=\left\{ \begin{array}{ll}
		\left(\begin{array}{cc}
			1 & 0\\
			-C_n(z-\zeta_n)^{-1}T^2(z)e^{-2it\theta_n} & 1
		\end{array}\right),   &\text{as } 	z\in\partial\mathbb{D}(\zeta_n,\varrho),n\in\nabla\setminus\Lambda;\\[12pt]
		\left(\begin{array}{cc}
			1 & -C_n^{-1}(z-\zeta_n)T^{-2}(z)e^{2it\theta_n}\\
			0 & 1
		\end{array}\right),   &\text{as } z\in\partial\mathbb{D}(\zeta_n,\varrho),n\in\Delta;\\
		\left(\begin{array}{cc}
			1 & \bar{C}_n(z-\bar{\zeta}_n)^{-1}T^{-2}(z)e^{2it\bar{\theta}_n}\\
			0 & 1
		\end{array}\right),   &\text{as } 	z\in\partial\mathbb{D}(\bar{\zeta}_n,\varrho),n\in\nabla\setminus\Lambda;\\
		\left(\begin{array}{cc}
			1 & 0	\\
			\bar{C}_n^{-1}(z-\bar{\zeta}_n)e^{-2it\bar{\theta}_n}T^2(z) & 1
		\end{array}\right),   &\text{as } 	z\in\partial\mathbb{D}(\bar{\zeta}_n,\varrho),n\in\Delta;\\
	\end{array}\right.;\label{jumpv2}
\end{equation}

$\blacktriangleright$ Asymptotic behaviors:
\begin{align}
	&M^{(2)}(z) = e^{i\nu_-(x,t;q)\sigma_3}+\mathcal{O}(z^{-1}),\hspace{0.5cm}z \rightarrow \infty,\\
	&M^{(2)}(z) =\frac{i}{z}e^{i\nu_-(x,t;q)\sigma_3}\sigma_3Q_-+\mathcal{O}(1),\hspace{0.5cm}z \rightarrow 0;
\end{align}

$\blacktriangleright$ $\bar{\partial}$-Derivative: For $z\in\mathbb{C}$
we have
\begin{align}
\bar{\partial}M^{(2)}=M^{(2)}\bar{\partial}R^{(2)},
\end{align}
where
\begin{equation}
\bar{\partial}R^{(2)}=\left\{\begin{array}{lll}
\left(\begin{array}{cc}
0 & \bar{\partial}R_j(z)e^{2it\theta}\\
0 & 0
\end{array}\right), & z\in \Omega_j,j=1,3,5,7,\\
\\
\left(\begin{array}{cc}
0 & 0\\
\bar{\partial}R_j(z)e^{-2it\theta} & 0
\end{array}\right),  &z\in \Omega_j,j=2,4,6,8,\\
\\
0  &elsewhere;\\
\end{array}\right. \label{DBARR2}
\end{equation}

$\blacktriangleright$ Residue conditions: $M^{(2)}$ has simple poles at each point $\zeta_n$ and $\bar{\zeta}_n$ for $n\in\Lambda$ with:
\begin{align}
	&\res_{z=\zeta_n}M^{(2)}(z)=\lim_{z\to \zeta_n}M^{(2)}(z)\left(\begin{array}{cc}
		0 & 0\\
		C_ne^{-2it\theta_n}T^2(\zeta_n) & 0
	\end{array}\right),\\
	&\res_{z=\bar{\zeta}_n}M^{(2)}(z)=\lim_{z\to \bar{\zeta}_n}M^{(2)}(z)\left(\begin{array}{cc}
		0 & -\bar{C}_nT^{-2}(\bar{\zeta}_n)e^{2it\bar{\theta}_n}\\
		0 & 0
	\end{array}\right).
\end{align}

\section{ Decomposition of the mixed $\bar{\partial}$-RH problem }\label{sec5}
\quad To solve RHP2,  we decompose it into a model   RH  problem  for $M^{(r)}(z)$  with $\bar\partial R^{(2)}\equiv0$   and a pure $\bar{\partial}$-Problem with nonzero $\bar{\partial}$-derivatives.
For the first step, we establish  a   RH problem  for the  $M^{(r)}(z)$   as follows.

\noindent\textbf{RHP3}. Find a matrix-valued function  $ M^{(r)}(z)$ with following properties:

$\blacktriangleright$ Analyticity: $M^{(r)}(z)$ is  meromorphic  in $\mathbb{C}\setminus \Sigma^{(2)}$;

$\blacktriangleright$ Jump condition: $M^r$ has continuous boundary values $M^{(r)}_\pm$ on $\Sigma^{(2)}$ and
\begin{equation}
M^{(r)}_+(z)=M^{(r)}_-(z)V^{2}(z),\hspace{0.5cm}z \in \Sigma^{(2)};\label{jump5}
\end{equation}

$\blacktriangleright$ Symmetry: $M^{(r)}(z)=\sigma_2\overline{M^{(r)}(\bar{z})}\sigma_2$=$\sigma_1\overline{M^{(r)}(-\bar{z})}\sigma_1=\frac{i}{z}M^{(r)}(-1/z)\sigma_3Q_-$;

$\blacktriangleright$ $\bar{\partial}$-Derivative:  $\bar{\partial}R^{(2)}=0$, for $ z\in \mathbb{C}$;

$\blacktriangleright$ Asymptotic behaviors:
\begin{align}
	&M^{(r)}(z) = e^{i\nu_-(x,t;q)\sigma_3}+\mathcal{O}(z^{-1}),\hspace{0.5cm}z \rightarrow \infty,\\
	&M^{(r)}(z) =\frac{i}{z}e^{i\nu_-(x,t;q)\sigma_3}\sigma_3Q_-+\mathcal{O}(1),\hspace{0.5cm}z \rightarrow 0;\label{asymbehv6}
\end{align}

$\blacktriangleright$ Residue conditions: $M^{(r)}$ has simple poles at each point $\zeta_n$ and $\bar{\zeta}_n$ for $n\in\Lambda$ with:
\begin{align}
	&\res_{z=\zeta_n}M^{(r)}(z)=\lim_{z\to \zeta_n}M^{(r)}(z)\left(\begin{array}{cc}
		0 & 0\\
		C_ne^{-2it\theta_n}T^2(\zeta_n) & 0
	\end{array}\right),\\
	&\res_{z=\bar{\zeta}_n}M^{(r)}(z)=\lim_{z\to \bar{\zeta}_n}M^{(r)}(z)\left(\begin{array}{cc}
		0 & -\bar{C}_nT^{-2}(\bar{\zeta}_n)e^{2it\bar{\theta}_n}\\
		0 & 0
	\end{array}\right).\label{resMr}
\end{align}

The unique existence  and asymptotic  of  $M^{(r)}(z)$  will shown in   section \ref{sec6}.

We now use $M^{(r)}(z)$ to construct  a new matrix function
\begin{equation}
M^{(3)}(z)=M^{(2)}(z)M^{(r)}(z)^{-1},\label{transm3}
\end{equation}
which   removes   analytical component  $M^{(r)}(z)$    to get  a  pure $\bar{\partial}$-problem.

\noindent\textbf{$\bar{\partial}$-problem4}. Find a matrix-valued function  $M^{(3)}(z)$  with following properties:

$\blacktriangleright$ Analyticity: $M^{(3)}(z)$ is continuous   and has sectionally continuous first partial derivatives in $\mathbb{C}$.

$\blacktriangleright$ Asymptotic behavior:
\begin{align}
&M^{(3)}(z) \sim I+\mathcal{O}(z^{-1}),\hspace{0.5cm}z \rightarrow \infty;\label{asymbehv7}
\end{align}

$\blacktriangleright$ $\bar{\partial}$-Derivative: We have
$$\bar{\partial}M^{(3)}(z)=M^{(3)}(z)W,\ \ z\in \mathbb{C},$$
where
\begin{equation}
W=M^{(r)}(z)\bar{\partial}R^{(2)}(z)M^{(r)}(z)^{-1}.
\end{equation}

\begin{proof}
	By using  properties  of  the   solutions   $M^{(2)}(z)$ and $M^{(r)}(z)$  for  RHP3  and $\bar{\partial}$-problem 4,
 the analyticity is obtained   immediately. And for its Asymptotic behavior, from $M^{(r)}(z)^{-1}=(1+z^{-2})\sigma_2M^{(r)}(z)^T\sigma_2$ we have
 \begin{align}
 	\lim_{z\to 0}M^{(3)}(z)=&\lim_{z\to 0}\dfrac{(zM^{(2)}(z))\sigma_2(zM^{(r)}(z)^T)\sigma_2}{1+z^2}\nonumber\\
 	=&ie^{i\nu_-(x,t;q)\sigma_3}\sigma_3Q_-\sigma_2(ie^{i\nu_-(x,t;q)\sigma_3}\sigma_3Q_-)^T\sigma_2=I.
 \end{align}
Since $M^{(2)}(z)$ and $M^{(r)}(z)$ achieve same jump matrix, we have
	\begin{align*}
	M_-^{(3)}(z)^{-1}M_+^{(3)}(z)&=M_-^{(2)}(z)^{-1}M_-^{(r)}(z)M_+^{(r)}(z)^{-1}M_+^{(2)}(z)\\
	&=M_-^{(2)}(z)^{-1}V^{(2)}(z)^{-1}M_+^{(2)}(z)=I,
	\end{align*}
	which implies $ M^{(3)}(z)$ has no jumps and is everywhere continuous.  We also can show  that $   M^{(3)}(z)$ has no pole.  For
 $\lambda \in \left\lbrace \zeta_n,\bar{\zeta}_n \right\rbrace_{n\in\Lambda} $,  let $\mathcal{N}_\lambda$ denote the  nilpotent matrix which appears in the left side of the
corresponding residue condition of RHP4  and  RHP5,
 we have the Laurent expansions in $z-\lambda$
	\begin{align}
&M^{(2)}(z)=a(\lambda) \left[ \dfrac{\mathcal{N}_\lambda}{z-\lambda}+I\right] +\mathcal{O}(z-\lambda),\nonumber\\
&	M^{(r)}(z)=A(\lambda) \left[ \dfrac{\mathcal{N}_\lambda}{z-\lambda}+I\right] +\mathcal{O}(z-\lambda),\nonumber
\end{align}
	where $a(\lambda)$ and $A(\lambda)$ are the constant  matrix in their respective expansions.
Then
	\begin{align}
	M^3(z)&=\left\lbrace a(\lambda) \left[ \dfrac{\mathcal{N}_\lambda}{z-\lambda}+I\right]\right\rbrace \left\lbrace\left[ \dfrac{-\mathcal{N}_\lambda}{z-\lambda}+I\right]\sigma_2A(\lambda)^T\sigma_2\right\rbrace + \mathcal{O}(z-\lambda)\nonumber\\
	&=\mathcal{O}(1),
	\end{align}
	which  implies that  $M^{(3)}(z)$ has removable singularities at $\lambda$.
 And the $\bar{\partial}$-derivative of  $M^{(3)}(z)$ come  from    $M^{(3)}(z)$  due to
   analyticity of $M^{(r)}(z)$. In addition, unlike the zero boundary case, we must check its property at $\pm i$. The symmetries of $M^{(2)}(z)$ and $M^{(r)}(z)$ imply that
 \begin{align}
 	M^{(2)}(z)=&\left(\begin{array}{cc}
 		\gamma & \pm q_-\gamma\\
 		\pm \bar{q}_-\bar{\gamma}& \bar{\gamma}
 	\end{array}\right)+\mathcal{O}(z\mp i),\\
 M^{(r)}(z)=&\dfrac{\pm i}{2(z\mp i)}\left(\begin{array}{cc}
 	\bar{\iota}& \mp q_-\iota\\
 	\mp \bar{q}_-\bar{\iota}& \iota
 \end{array}\right)+\mathcal{O}(1),
 \end{align}
for two constants $\gamma$ and $\iota$. Then the singular part of $M^{(3)}(z)$ vanishes at $z=\pm i$ by simple calculation  immediately.
\end{proof}
The unique existence  and asymptotic  of  $M^{(3)}(z)$  will shown in   section \ref{sec7}.

\section{ Asymptotic of $\mathcal{N}(\Lambda)$-soliton solutions } \label{sec6}
\quad In this section, we build a reflectionless  RH  problem and  show that  its solution can  approximated  with  $M^{(r)}$.

First we  show the existence and uniqueness of solution of the above RHP3 which is  related   with original RH problem 0.
\begin{Proposition}
	  The solution $M^{(r)}(z)$ of the RH problem 3 with scattering data $\left\lbrace  r(z),\left\lbrace \zeta_n,C_n\right\rbrace_{n\in\Lambda}\right\rbrace$  exists  and is unique.
By an explicit transformation, $M^{(r)}(z)$ is equivalent  to a reflectionless solution of the original RHP0 with modified scattering data $\left\lbrace  0,\left\lbrace \zeta_n,\mathring{c}_n\right\rbrace_{n\in\Lambda}\right\rbrace$, where
	\begin{equation}
		\mathring{c}_n(x,t)=C_n\exp\left\lbrace-\frac{1}{i\pi}\int_{\mathbb{R}}\log(1-|\rho(s)|^2)\left(\frac{1}{s-\zeta_n}-\frac{1}{2s} \right)  \right\rbrace .
	\end{equation}
\end{Proposition}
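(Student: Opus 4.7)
The plan is to unwind the chain of transformations $M\to M^{(1)}\to M^{(2)}$ on the solution $M^{(r)}$ of RHP3, converting it into a reflectionless analog of the original RHP0, and then solve that reflectionless problem by the standard algebraic ansatz. Concretely, I would introduce
\begin{equation*}
    M^{(r)}_\Lambda(z) = T(\infty)^{\sigma_3}M^{(r)}(z)G(z)^{-1}T(z)^{-\sigma_3},
\end{equation*}
where $G$ is the triangular factor of (\ref{funcG}). The first task is to verify that this removes the small-circle jumps: across each $\partial\mathbb{D}(\zeta_n,\varrho)$ for $n\in\mathcal{N}\setminus\Lambda$, the jump of $G^{-1}$ is designed precisely to cancel $V^{(2)}$, while inside each disk the apparent singularities produced by $T^{-\sigma_3}$ at $\Delta$-zeros are removable thanks to the triangular structure of $G$. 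Combined with the $\Sigma$-continuity of $M^{(r)}$, this leaves $M^{(r)}_\Lambda$ meromorphic on $\mathbb{C}$ with simple poles only at $\{\zeta_n,\bar\zeta_n\}_{n\in\Lambda}$. A Laurent expansion at each $\zeta_n$ with $n\in\Lambda$, where $G\equiv I$ and $T$ is analytic and nonzero, converts the RHP3 residue condition carrying $C_nT^2(\zeta_n)e^{-2it\theta_n}$ into a new residue of the form $\mathring{c}_ne^{-2it\theta_n}$, and the symmetries together with the asymptotics at $z\to\infty$ and $z\to 0$ are inherited from $M^{(r)}$, $G$ and $T$ using parts (b), (d), (f) of Proposition \ref{proT}. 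Thus $M^{(r)}_\Lambda$ solves a reflectionless instance of RHP0 with scattering data $\{0,\{\zeta_n,\mathring{c}_n\}_{n\in\Lambda}\}$.

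The explicit formula for $\mathring{c}_n$ emerges from evaluating $T(\zeta_n)^2T(\infty)^{-2}$ with the help of Proposition \ref{proT}: since $\Lambda\cap\Delta=\emptyset$, the rational Blaschke product is analytic and nonzero at $\zeta_n$, and the analytic piece $\delta(\zeta_n)^2/\delta(\infty)^2$ produces a Cauchy integral that, after invoking the two symmetry reductions of $\rho$ in Proposition \ref{sym} and combining with the discrete Blaschke contribution, is rewritten as the $\mathbb{R}$-integral of $\log(1-|\rho(s)|^2)$ against the kernel $\tfrac{1}{s-\zeta_n}-\tfrac{1}{2s}$ stated in the proposition. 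The analogous residue at $\bar\zeta_n$ and the theta condition of Proposition \ref{prop2} then fix the remaining normalization.

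Existence and uniqueness are established at the level of the reflectionless RHP for $M^{(r)}_\Lambda$ and transferred back through the explicit transformation. Uniqueness follows from a Vanishing-Lemma argument: if $M_1,M_2$ are two solutions, $M_1M_2^{-1}$ is entire (the simple poles cancel because the left and right residue factors share the same rank-one nilpotent structure), tends to $I$ at infinity, and has determinant one by the determinantal consequence of the symmetries, hence equals $I$. Existence proceeds by the standard rational ansatz $M^{(r)}_\Lambda(z)=C_0+\sum_{n\in\Lambda}\!\left[\alpha_n/(z-\zeta_n)+\beta_n/(z-\bar\zeta_n)\right]$: substituting into the residue conditions produces a closed linear Cauchy-type system for the coefficients, invertible because of the Hermitian positive-definiteness forced by the symmetries. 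The main technical obstacle is the clean derivation of the $\mathring{c}_n$ formula, where the contour/integrand rearrangement that converts the $i\mathbb{R}$-integral of $\log(1-\rho\tilde\rho)$ appearing in $\delta$ into the $\mathbb{R}$-integral of $\log(1-|\rho|^2)$ requires simultaneously invoking both symmetry reductions and tracking signs through a contour deformation, the step most susceptible to error.
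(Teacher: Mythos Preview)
Your overall strategy coincides with the paper's: undo the conjugations by $T^{\pm\sigma_3}$ and the interpolant $G$ so that $M^{(r)}$ is sent to a reflectionless instance of RHP0, and then invoke the algebraic solvability of that finite-pole problem. The paper carries out exactly this reversal (its map is denoted $N(z)$), computes the new residue constants $\mathring{c}_n$, and defers existence/uniqueness of the reflectionless problem to a citation; your more explicit Liouville/Cauchy-system argument for the latter is fine and in the same spirit.

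There are, however, two concrete slips in your transformation. First, the factor order in $M^{(r)}G^{-1}T^{-\sigma_3}$ is wrong: inverting $M^{(1)}=T(\infty)^{-\sigma_3}MGT^{\sigma_3}$ gives $M=T(\infty)^{\sigma_3}M^{(1)}T^{-\sigma_3}G^{-1}$, i.e.\ the diagonal conjugation must act \emph{before} $G^{-1}$. With your order the circle jump does not cancel; for $n\in\nabla\setminus\Lambda$ one finds $T^{\sigma_3}V^{(2)}G^{-1}T^{-\sigma_3}$ has $(2,1)$ entry $C_n(z-\zeta_n)^{-1}e^{-2it\theta_n}(T^{-2}-1)\neq 0$, whereas $T^{\sigma_3}V^{(2)}T^{-\sigma_3}G^{-1}=I$. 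Second, your claim that the $T^{-\sigma_3}$-singularities at $\zeta_n$, $n\in\Delta$, are rendered removable by the triangular structure of $G$ is false: even with the correct order, $T^{-\sigma_3}G^{-1}$ has a genuine simple pole in its first column at each such $\zeta_n$, and for $n\in\nabla\setminus\Lambda$ the pole of $G^{-1}$ survives as well. The paper's proof acknowledges this explicitly (``restore the jump \ldots\ to residue for $n\notin\Lambda$''), so the target reflectionless RHP0 carries poles at \emph{all} $\zeta_n$, not only those in $\Lambda$; the paper also inserts an additional Blaschke factor $\bigl(\prod_{n\in\Delta}\tfrac{z-\zeta_n}{\bar\zeta_n^{-1}z-1}\bigr)^{-\sigma_3}$ on the right, which your formula omits. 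Once you correct the factor order and drop the removability claim (or equivalently add the Blaschke correction), your plan matches the paper's argument.
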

\begin{proof}
	To transform $M^{(r)}(z)$ to the soliton-solution  of RHP0, the jumps and poles need to be restored. We reverses the triangularity effected in (\ref{transm1}) and (\ref{transm2}):
	\begin{equation}
		N(z)=\left(\prod_{n\in \Delta}\zeta_n \right)^{-\sigma_3} M^{(r)}(z)T^{-\hat{\sigma}_3}G^{-1}(z)\left( \prod_{n\in \Delta}\dfrac{z-\zeta_n}{\bar{\zeta}_n^{-1}z-1}\right) ^{-\sigma_3},
	\end{equation}
with $G(z)$ defined in (\ref{funcG}). First we verify $N(z)$ satisfying RHP0. This transformation to $N(z)$ preserves the normalization conditions at the origin and infinity obviously. And comparing with (\ref{transm1}), this transformation  restore the jump on   $\mathbb{D}(\bar{\zeta}_n,\varrho)$ and $\mathbb{D}(\zeta_n,\varrho)$ to residue for $n\notin\Lambda$.  As for $n\in\Lambda$, take $\zeta_n$ as an example. Substitute (\ref{resMr}) into the transformation:
\begin{align}
	\res_{z=\zeta_n}N(z)=&\left(\prod_{n\in \Delta}\zeta_n \right)^{-\sigma_3}\res_{z=\zeta_n}M^{(r)}(z)T^{-\hat{\sigma}_3}G(z)^{-1}\left( \prod_{n\in \Delta}\dfrac{z-\zeta_n}{\bar{\zeta}_n^{-1}z-1}\right) ^{-\sigma_3}\nonumber\\
	=&\lim_{z\to \zeta_n}-\left(\prod_{n\in \Delta}\zeta_n \right)^{-\sigma_3}M^{(r)}(z)\left(\begin{array}{cc}
		0 & 0\\
		C_ne^{-2it\theta_n}T^2(\zeta_n) & 0
	\end{array}\right)\left( \prod_{n\in \Delta}\dfrac{z-\zeta_n}{\bar{\zeta}_n^{-1}z-1}\right) ^{-\sigma_3}\nonumber\\
		=&\lim_{z\to \zeta_n}N(z)\left(\begin{array}{cc}
			0 & 0\\
			\mathring{c}_ne^{-2it\theta_n} & 0
		\end{array}\right).
\end{align}
Its analyticity and symmetry follow from the Proposition of $M^{(r)}(z)$, $T(z)$ and $G(z)$ immediately. So $N(z)$ is solution of RHP0 with absence of reflection, whose unique exact solution  exists and can be obtained as described similarly in \cite{SandRNLS}. So $M^{(r)}(z)$ unique exists.
\end{proof}

Although $M^{(r)}(z)$  admits  uniqueness and  existence,  we can't  give its explicit expression. The jump matrix  is uniformly near identity and doesn't
contribute to the asymptotic behavior of the solution.
\begin{lemma}\label{lemmav2}
	The jump matrix $ V^{(2)}(z)$ in (\ref{jumpv2}) satisfies
	\begin{align}
	&\parallel V^{(2)}-I\parallel_{L^\infty(\Sigma^{(2)})}=\mathcal{\mathcal{O}}(e^{- 2\rho_0t} ),\label{7.1}
\end{align}
where    $\rho_0$ is defined  by (\ref{rho0}).
\end{lemma}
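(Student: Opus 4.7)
The jump matrix $V^{(2)}$ is supported on the finitely many small circles $\partial\mathbb{D}(\zeta_n,\varrho)\cup\partial\mathbb{D}(\bar{\zeta}_n,\varrho)$ with $n\in\mathcal{N}\setminus\Lambda$, and on each such circle $V^{(2)}-I$ is a single off-diagonal entry of the form $(\text{polynomial/rational})\cdot(\text{constant exponential in }t)$. My plan is to bound the off-diagonal entry pointwise on each circle, and then take the maximum over the finitely many circles.

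The decisive feature is that the exponentials $e^{\pm2it\theta_n}$ and $e^{\pm2it\bar{\theta}_n}$ appearing in (\ref{jumpv2}) are evaluated at the poles $\zeta_n$ or $\bar{\zeta}_n$, hence are constants in $z$. I would split into four cases following the four lines of (\ref{jumpv2}):
\begin{align*}
&n\in\nabla\setminus\Lambda,\ z\in\partial\mathbb{D}(\zeta_n,\varrho):& |e^{-2it\theta_n}|&=e^{2t\,\mathrm{Im}\,\theta_n}\le e^{-2\rho_0 t},\\
&n\in\Delta,\ z\in\partial\mathbb{D}(\zeta_n,\varrho):& |e^{2it\theta_n}|&=e^{-2t\,\mathrm{Im}\,\theta_n}\le e^{-2\rho_0 t},\\
&n\in\nabla\setminus\Lambda,\ z\in\partial\mathbb{D}(\bar{\zeta}_n,\varrho):& |e^{2it\bar{\theta}_n}|&=e^{2t\,\mathrm{Im}\,\theta_n}\le e^{-2\rho_0 t},\\
&n\in\Delta,\ z\in\partial\mathbb{D}(\bar{\zeta}_n,\varrho):& |e^{-2it\bar{\theta}_n}|&=e^{-2t\,\mathrm{Im}\,\theta_n}\le e^{-2\rho_0 t},
\end{align*}
where in each case I use $\mathrm{Im}\,\bar{\theta}_n=-\mathrm{Im}\,\theta_n$ and the definition (\ref{rho0}) of $\rho_0$; since $n\notin\Lambda$ we have $|\mathrm{Im}\,\theta_n|\ge\rho_0>0$, so the decay is genuine and uniform in $n$.

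It then remains to check that the prefactors are $\mathcal{O}(1)$ uniformly in $z$ on each circle and uniformly in $n$. The constants $C_n^{\pm1}$ and $|z-\zeta_n|^{\pm1}=\varrho^{\pm1}$ are obvious. For $T(z)^{\pm2}$ I invoke Proposition \ref{proT}: by construction of $\varrho$ the circles are disjoint from the real and imaginary axes and avoid every zero/pole of $T$, so $T(z)^{\pm2}$ is continuous and bounded on the compact set $\Sigma^{(2)}$. Multiplying the uniform exponential bound by the uniform bound on the prefactors and taking the maximum over the finitely many $n$ and over $z\in\Sigma^{(2)}$ yields (\ref{7.1}). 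The proof is essentially bookkeeping; the only mild care required is matching the four sign patterns in (\ref{jumpv2}) with the sign of $\mathrm{Im}\,\theta_n$ in each case, which is what I have laid out above.
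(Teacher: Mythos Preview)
Your argument is correct and follows essentially the same approach as the paper: the paper bounds one of the four cases explicitly (the one with $n\in\nabla\setminus\Lambda$, $z\in\partial\mathbb{D}(\zeta_n,\varrho)$) and leaves the others implicit, while you have written out all four sign patterns. Your additional remark that the circles avoid the zeros and poles of $T$ (so that $T^{\pm2}$ is bounded on $\Sigma^{(2)}$) makes explicit a point the paper absorbs into the symbol $\lesssim$.
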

\begin{proof}
	Take $z\in\partial\mathbb{D}(\zeta_n,\varrho),$ $n\in\nabla\setminus\Lambda$ as an example.
	\begin{align}
		\parallel V^{(2)}-I\parallel_{L^\infty(\partial\mathbb{D}(\zeta_n,\varrho))}&=|C_n(z-\zeta_n)^{-1}T^2(z)e^{-2it\theta_n}|\nonumber\\
		&\lesssim \varrho^{-1}e^{-\text{Re}(2it\theta_n)}\lesssim e^{2t\text{Im}(\theta_n)}\leq e^{-2\rho_0t}.
	\end{align}
The last step follows from that for $n\in\nabla\setminus\Lambda$, $ \text{Im}\theta_n<0$.
\end{proof}
\begin{corollary}\label{v2p}
	For $1\leq p\leq +\infty$, the jump matrix $V^{(2)}(z)$ satisfies
	\begin{equation}
		\parallel V^{(2)}-I\parallel_{L^p(\Sigma^{(2)})}\leq K_pe^{- 2\rho_0t} ,
	\end{equation}
for some constant $K_p\geq 0$ depending on $p$.
\end{corollary}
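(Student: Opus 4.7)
The plan is to bootstrap the $L^\infty$ estimate of Lemma \ref{lemmav2} to all $L^p$ norms using the fact that the contour $\Sigma^{(2)}$ has finite, $t$-independent total length. First I would observe that
\begin{equation*}
\Sigma^{(2)}=\bigcup_{n\in\mathcal{N}\setminus\Lambda}\bigl(\partial\mathbb{D}(\bar{\zeta}_n,\varrho)\cup\partial\mathbb{D}(\zeta_n,\varrho)\bigr)
\end{equation*}
is a finite disjoint union of circles of the fixed radius $\varrho$ defined in Section \ref{sec3}, so its total arclength
\begin{equation*}
|\Sigma^{(2)}|=4\pi\varrho\,|\mathcal{N}\setminus\Lambda|
\end{equation*}
is a constant depending only on the scattering data and $\xi$, independent of $t$ and $x$.

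For $p=\infty$ the inequality is the statement of Lemma \ref{lemmav2} itself, so it remains to treat $1\leq p<\infty$. I would apply the elementary estimate valid on any finite-measure space,
\begin{equation*}
\|V^{(2)}-I\|_{L^p(\Sigma^{(2)})}\leq |\Sigma^{(2)}|^{1/p}\,\|V^{(2)}-I\|_{L^\infty(\Sigma^{(2)})},
\end{equation*}
and substitute the uniform bound $\|V^{(2)}-I\|_{L^\infty(\Sigma^{(2)})}\leq C\,e^{-2\rho_0 t}$ extracted from the $\mathcal{O}(e^{-2\rho_0 t})$ statement in Lemma \ref{lemmav2}. Setting $K_p:=C\,|\Sigma^{(2)}|^{1/p}$ yields the claim.

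I do not expect any real obstacle here; the corollary is a mechanical consequence of Lemma \ref{lemmav2} and the compactness of $\Sigma^{(2)}$. The only point that requires verification is that the constant implicit in the $\mathcal{O}(e^{-2\rho_0 t})$ estimate of Lemma \ref{lemmav2}, the cardinality of $\mathcal{N}\setminus\Lambda$, and the radius $\varrho$ are all independent of $t$, so that the resulting $K_p$ is genuinely $t$-independent as required.
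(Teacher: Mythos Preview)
Your proof is correct and is precisely the intended argument: the paper states the corollary without proof immediately after Lemma \ref{lemmav2}, since the $L^p$ bound follows mechanically from the $L^\infty$ bound and the finite, $t$-independent arclength of $\Sigma^{(2)}$.
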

This  estimation of $V^{(2)}$ inspires us to consider to completely ignore the jump condition on $M^{(r)}(z)$, because there is only exponentially small error (in t). We decompose $M^{(r)}(z)$ as
\begin{equation}
	M^{(r)}(z)=E(z)M^{(r)}_\Lambda(z),\label{transMr}
\end{equation}
where $E(z)$ is a error function, which is a solution of a small-norm RH problem and we discuss it in Section \ref{sec8}. $M^{(r)}_\Lambda(z)$ solves RHP3 with $V^{(2)}\equiv0$.

\noindent\textbf{RHP5.}  Find a matrix-valued function  $ M^{(r)}_\Lambda(z;x,t)$ with following properties:

$\blacktriangleright$ Analyticity: $M^{(r)}_\Lambda(z;x,t)$ is analytical  in $\mathbb{C}\setminus \left\lbrace\zeta_n,\bar{\zeta}_n \right\rbrace_{n\in\Lambda} $;

$\blacktriangleright$ Symmetry: $M^{(r)}_\Lambda(z)=\sigma_2\overline{M^{(r)}_\Lambda(\bar{z})}\sigma_2$=$\sigma_1\overline{M^{(r)}_\Lambda(-\bar{z})}\sigma_1=\frac{i}{z}M^{(r)}_\Lambda(-1/z)\sigma_3Q_-$;

$\blacktriangleright$ Asymptotic behaviors:
\begin{align}
	&M^{(r)}_\Lambda(z;x,t) = e^{i\nu_-(x,t;q)\sigma_3}+\mathcal{O}(z^{-1}),\hspace{0.5cm}z \rightarrow \infty,\\
	&M^{(r)}_\Lambda(z;x,t) =\frac{i}{z}e^{i\nu_-(x,t;q)\sigma_3}\sigma_3Q_-+\mathcal{O}(1),\hspace{0.5cm}z \rightarrow 0;\label{asymbehv8}
\end{align}

$\blacktriangleright$ Residue conditions: $M^{(r)}_\Lambda$ has simple poles at each point $\zeta_n$ and $\bar{\zeta}_n$ for $n\in\Lambda$ with:
\begin{align}
	&\res_{z=\zeta_n}M^{(r)}_\Lambda(z)=\lim_{z\to \zeta_n}M^{(r)}_\Lambda(z)\left(\begin{array}{cc}
		0 & 0\\
		C_ne^{-2it\theta_n}T^2(\zeta_n) & 0
	\end{array}\right),\\
	&\res_{z=\bar{\zeta}_n}M^{(r)}_\Lambda(z)=\lim_{z\to \bar{\zeta}_n}M^{(r)}_\Lambda(z)\left(\begin{array}{cc}
		0 & -\bar{C}_nT^{-2}(\bar{\zeta}_n)e^{2it\bar{\theta}_n}\\
		0 & 0
	\end{array}\right).\label{resMrsol}
\end{align}

\begin{Proposition}\label{unim}	 The RHP5   exists an  unique solution.  Moreover, $M^{(r)}_\Lambda(z)$ is equivalent  to a reflectionless solution of the original RHP0 with modified scattering data $\left\lbrace  0,\left\lbrace \zeta_n,\mathring{c}_n\right\rbrace_{n\in\Lambda}\right\rbrace$ as follows:\\
	\textbf{Case I}: if $\Lambda=\varnothing$, then
	\begin{equation}
		M^{(r)}_\Lambda(z)=e^{i\nu_-(x,t;q^r_\Lambda)\sigma_3}+\frac{i}{z}e^{i\nu_-(x,t;q^r_\Lambda)\sigma_3}\sigma_3Q_-;\label{msol1}
	\end{equation}
	\textbf{Case I}: if $\Lambda\neq\varnothing$ with $\Lambda_1=\left\lbrace z_{j_k}\right\rbrace_{k=1}^{n_1} $ and $\Lambda_2=\left\lbrace w_{i_s}\right\rbrace_{s=1}^{n_2}$, then
	\begin{align}
		M^{(r)}_\Lambda(z)&=e^{i\nu_-(x,t;q^r_\Lambda)\sigma_3}+\frac{i}{z}e^{i\nu_-(x,t;q^r_\Lambda)\sigma_3}\sigma_3Q_-\nonumber\\
		&+\sum_{s=1}^{n_2}\left[\left(\begin{array}{cc}
			\frac{\alpha_s}{z-w_{i_s}} & \frac{\overline{\kappa_s}}{z-\bar{w}_{i_s}}\\
			\frac{\kappa_s}{z-w_{i_s}} & \frac{\overline{\alpha_s}}{z-\bar{w}_{i_s}}
		\end{array}\right)+\left(\begin{array}{cc}
		-\frac{\alpha_s}{z+w_{i_s}} & \frac{\overline{\kappa_s}}{z+\bar{w}_{i_s}}\\
		\frac{\kappa_s}{z+w_{i_s}} & -\frac{\overline{\alpha_s}}{z+\bar{w}_{i_s}}
	\end{array}\right) \right]\nonumber\\
&+\sum_{k=1}^{n_1}\left[\left(\begin{array}{cc}
	\frac{\beta_k}{z-z_{j_k}} & \frac{\overline{\varsigma_k}}{z-\bar{z}_{j_k}}\\
	\frac{\varsigma_k}{z-z_{j_k}} & \frac{\overline{\beta_k}}{z-\bar{z}_{j_k}}
\end{array}\right)+\left(\begin{array}{cc}
	-\frac{\beta_k}{z+z_{j_k}} & \frac{\overline{\varsigma_k}}{z+\bar{z}_{j_k}}\\
	\frac{\varsigma_k}{z+z_{j_k}} & -\frac{\overline{\beta_k}}{z+\bar{z}_{j_k}}
\end{array}\right) \right]\nonumber\\
&+\sum_{k=1}^{n_1}i\left[\left(\begin{array}{cc}
	\frac{-\overline{q_-\beta_k}}{\bar{z}_{j_k}z-1} & \frac{-q_-\varsigma_k}{z_{j_k}z-1}\\
	 \frac{-\overline{q_-\varsigma_k}}{\bar{z}_{j_k}z-1} & \frac{q_-\beta_k}{z_{j_k}z-1}
\end{array}\right)+\left(\begin{array}{cc}
	\frac{\overline{q_-\beta_k}}{\bar{z}_{j_k}z+1} & \frac{-q_-\varsigma_k}{z_{j_k}z+1}\\
	\frac{-\overline{q_-\varsigma_k}}{\bar{z}_{j_k}z+1} & \frac{-q_-\beta_k}{z_{j_k}z+1}
\end{array}\right)\right]  ,\label{msol2}
	\end{align}
where $\beta_k=\beta_k(x,t)$, $\varsigma_k=\varsigma_k(x,t)$, $\alpha_s=\alpha_s(x,t)$ and $\kappa_s=\kappa_s(x,t)$   with linearly dependant equations:
\begin{align}
	 c_{j_k}^{-1}T(z_{j_k})^{-2}e^{-2i\theta(z_{j_k})t}\beta_k&=\frac{i}{z_{j_k}}e^{i\nu_-(x,t;q^r_\Lambda)}q_-+\sum_{h=1}^{n_2}\left(\frac{\overline{\kappa_h}}{z_{j_k}-\bar{w}_{i_h}}+\frac{\overline{\kappa_h}}{z_{j_k}+\bar{w}_{i_h}} \right) \nonumber\\
	&+\sum_{l=1}^{n_1}\left( \frac{\overline{\varsigma_l}}{z_{j_k}-\bar{z}_{j_l}} +\frac{\overline{\varsigma_l}}{z_{j_k}+\bar{z}_{j_l}}-\frac{iq_-\varsigma_l}{z_{j_l}z_{j_k}-1}-\frac{iq_-\varsigma_l}{z_{j_l}z_{j_k}+1}\right) , \\	
	 c_{j_k}^{-1}T(z_{j_k})^{-2}e^{-2i\theta(z_{j_k})t}\varsigma_k&=\frac{i}{z_{j_k}}e^{-i\nu_-(x,t;q^r_\Lambda)}\bar{q}_-+\sum_{h=1}^{n_2}\left(\frac{\overline{\alpha_h}}{z_{j_k}-\bar{w}_{i_h}}-\frac{\overline{\alpha_h}}{z_{j_k}+\bar{w}_{i_h}} \right) \nonumber\\
	&+\sum_{l=1}^{n_1}\left( \frac{\overline{\beta_l}}{z_{j_k}-\bar{z}_{j_l}} -\frac{\overline{\beta_l}}{z_{j_k}+\bar{z}_{j_l}}+\frac{iq_-\beta_l}{z_{j_l}z_{j_k}-1}-\frac{iq_-\beta_l}{z_{j_l}z_{j_k}+1}\right) ,
\end{align}
and
\begin{align}	
	 c_{i_s+N_1}^{-1}T(w_{i_s})^{-2}e^{-2i\theta(w_{i_s})t}\alpha_k&=\frac{i}{w_{i_s}}e^{i\nu_-(x,t;q^r_\Lambda)}q_-+\sum_{h=1}^{n_2}\left(\frac{\overline{\kappa_h}}{w_{i_s}-\bar{w}_{i_h}}+\frac{\overline{\kappa_h}}{w_{i_s}+\bar{w}_{i_h}} \right) \nonumber\\
	&+\sum_{l=1}^{n_1}\left( \frac{\overline{\varsigma_l}}{w_{i_s}-\bar{z}_{j_l}} +\frac{\overline{\varsigma_l}}{w_{i_s}+\bar{z}_{j_l}}-\frac{iq_-\varsigma_l}{z_{j_l}w_{i_s}-1}-\frac{iq_-\varsigma_l}{z_{j_l}w_{i_s}+1}\right),\\
	 c_{i_s+N_1}^{-1}T(w_{i_s})^{-2}e^{-2i\theta(w_{i_s})t}\kappa_k&=\frac{i}{w_{i_s}}e^{-i\nu_-(x,t;q^r_\Lambda)}\bar{q}_-+\sum_{h=1}^{n_2}\left(\frac{\overline{\alpha_h}}{w_{i_s}-\bar{w}_{i_h}}-\frac{\overline{\alpha_h}}{w_{i_s}+\bar{w}_{i_h}} \right) \nonumber\\
	&+\sum_{l=1}^{n_1}\left( \frac{\overline{\beta_l}}{w_{i_s}-\bar{z}_{j_l}} -\frac{\overline{\beta_l}}{w_{i_s}+\bar{z}_{j_l}}+\frac{iq_-\beta_l}{z_{j_l}w_{i_s}-1}-\frac{iq_-\beta_l}{z_{j_l}w_{i_s}+1}\right),
\end{align}
 for $k=1,...,n_1$, $s=1,...,n_2$ respectively.
\end{Proposition}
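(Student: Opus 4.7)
The plan is to first reduce RHP5 to a purely algebraic problem via the standard reflectionless ansatz, then use the three symmetries to pin down the precise form of the ansatz and the linear system its coefficients satisfy, and finally argue existence and uniqueness by relating RHP5 to a reflectionless version of the original RHP0.

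\emph{Step 1 (Uniqueness and equivalence to a reflectionless RHP0).} Since RHP5 has no jumps, any two solutions $M_1,M_2$ differ by an entire function on $\mathbb{C}\setminus\{\zeta_n,\bar\zeta_n\}_{n\in\Lambda}$ which cancels all prescribed residues, and the determinant identity $\det M^{(r)}_\Lambda(z)=1+z^{-2}$ (forced by the asymptotic normalizations at $\infty$ and $0$) implies $M_2^{-1}=(1+z^{-2})^{-1}\sigma_2 M_2^T\sigma_2$, so $M_1M_2^{-1}$ is bounded near every $\zeta_n,\bar\zeta_n$. The symmetries at $z=\pm i$ force removable behavior there; by Liouville's theorem $M_1\equiv M_2$. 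Inverting the transformations \eqref{transm1} and \eqref{transm2} with $\rho,\tilde\rho\equiv 0$ on $\Sigma$, one shows that if a solution $M^{(r)}_\Lambda$ exists, it corresponds bijectively to a reflectionless solution of RHP0 whose residue constants are modulated by the Blaschke-like factor $T(\zeta_n)^2$ and the singular-integral contribution in $T$; a direct computation (as in the proof of the analogous statement for $M^{(r)}$) identifies the modified constants as $\mathring c_n$.

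\emph{Step 2 (Ansatz for Case II).} Because $M^{(r)}_\Lambda$ is meromorphic with only simple poles on $\Lambda\cup\bar\Lambda$ and prescribed behavior at $\infty$ and $0$, I write it as the sum of $e^{i\nu_-\sigma_3}+\tfrac{i}{z}e^{i\nu_-\sigma_3}\sigma_3 Q_-$ (which supplies the asymptotic data) plus a rational remainder $P(z)$ that vanishes at $\infty$. The three symmetries dictate the pole pattern of $P(z)$: the conjugation symmetry pairs $\zeta_n$ with $\bar\zeta_n$; the reflection symmetry $M(z)=\sigma_1\overline{M(-\bar z)}\sigma_1$ forces simultaneous poles at $-\zeta_n$ and $-\bar\zeta_n$ with coefficients obtained by complex-conjugating and flipping a sign; and the inversion symmetry $M(z)=\tfrac{i}{z}M(-1/z)\sigma_3 Q_-$ acts trivially on circle poles ($|w_{i_s}|=1$) but for $z_{j_k}$ with $|z_{j_k}|\ne 1$ it generates further poles at $\pm 1/\bar z_{j_k}$, whose residues are determined algebraically from those at $\pm z_{j_k}$. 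This is exactly the structure exhibited in the statement: pairs of residues $(\alpha_s,\kappa_s)$ per unit-circle zero and pairs $(\beta_k,\varsigma_k)$ per off-circle zero, together with the mandatory companion terms at $\pm 1/\bar z_{j_k}$ with explicit coefficients $\mp iq_-\varsigma_k/(z_{j_k}z-1)$, etc.

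\emph{Step 3 (Linear system and closed-form Case I).} Substituting the ansatz into the residue conditions \eqref{resMrsol} and matching the constant term of the Laurent expansion at each $\zeta_n$ yields precisely the linear equations stated for $\alpha_s,\kappa_s,\beta_k,\varsigma_k$; the diagonal coefficients arise from the asymptotic terms evaluated at $\zeta_n$ (this is where the factors $\tfrac{i}{z_{j_k}}e^{\pm i\nu_-}q_-$ and $\tfrac{i}{w_{i_s}}e^{\pm i\nu_-}q_-$ enter), the off-diagonal coefficients from evaluating each companion rational term at the other $\zeta_n$, and the left-hand side factor $c^{-1}T^{-2}e^{-2i\theta t}$ from the residue condition itself (using $\mathring c_n = c_n \cdot \text{factor}$). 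If $\Lambda=\varnothing$ the remainder $P(z)\equiv 0$, giving the closed form \eqref{msol1}. For $\Lambda\ne\varnothing$ the solvability of the linear system is guaranteed by the uniqueness proved in Step 1 together with the fact that the system is square: the Fredholm alternative promotes uniqueness to existence.

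\emph{Main obstacle.} The delicate point is keeping the three symmetries consistent while writing the ansatz: the inversion symmetry $z\mapsto -1/z$ produces companion poles at $\pm 1/\bar z_{j_k}$ that are \emph{not} independent residues but whose coefficients are algebraically tied to $\beta_k,\varsigma_k$ via $Q_-$. Verifying that the displayed ansatz \eqref{msol2} satisfies all three symmetries simultaneously, and that after substitution into the residue conditions the cross-terms combine into the stated compact form $-iq_-\varsigma_l/(z_{j_l}z_{j_k}\mp 1)$, is the principal bookkeeping hurdle; everything else is standard reflectionless-soliton analysis.
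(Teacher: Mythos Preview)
Your proposal is correct and follows essentially the same route as the paper's (very brief) proof: uniqueness via Liouville's theorem, the partial-fraction ansatz dictated by the three symmetries of $M^{(r)}_\Lambda$, and the linear system obtained by substituting the ansatz into the residue conditions \eqref{resMrsol}. The paper's argument is little more than a sketch (``The uniqueness of solution follows from the Liouville's theorem\ldots the symmetries of $M^{(r)}_\Lambda(z)$ means that it admits a partial fraction expansion\ldots substitute \eqref{msol2} into \eqref{resMrsol}''); you have supplied the natural details, and in particular you address existence explicitly---via the finite-dimensional rank argument that uniqueness of the RHP forces the homogeneous linear system to have trivial kernel---which the paper leaves implicit.
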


\begin{proof}
	The uniqueness of solution follows from the Liouville's theorem. Case I can be simple obtain. As for Case II, the symmetries  of $M^{(r)}_\Lambda(z)$  means that
it  admits a partial fraction expansion of following form as above. And to obtain $\beta_k$, $\varsigma_k$,$\alpha_s$ and $\kappa_s$, we substitute  (\ref{msol2}) into (\ref{resMrsol}) and obtain four linearly dependant equations set above.
\end{proof}
\begin{corollary}\label{sol}
When $\rho(s)\equiv0$, the scattering matrices $S(z)\equiv I$, which means $q_-=q_+$. Denote $q^r_\Lambda(x,t)$ is the $\mathcal{N}(\Lambda)$-soliton with   scattering data $\left\lbrace  0,\left\lbrace \zeta_n,\mathring{c}_n\right\rbrace_{n\in\Lambda}\right\rbrace$. By the reconstruction formula (\ref{recons u}) and (\ref{q}), the solution $q^r_\Lambda(x,t)$  of (\ref{DNLS}) with  scattering data $\left\lbrace  0,\left\lbrace \zeta_n,\mathring{c}_n\right\rbrace_{n\in\Lambda}\right\rbrace$   is given by:
\begin{align}
	q^r_\Lambda(x,t)=e^{i\nu_-(x,t;q^r_\Lambda)}\lim_{z\to \infty}z\left[M^{(r)}_\Lambda \right]_{12} .\label{qr}
\end{align}
Then in case I,
\begin{equation}
	u^r_\Lambda(x,t)=\lim_{z\to \infty}z|\left[M ^r_\Lambda\right]_{12}|= 1.
\end{equation}
So $\nu_-(x,t;q^r_\Lambda)=0$ and
\begin{align}
	q^r_\Lambda(x,t)=q_-.\label{q1}
\end{align}
And in case II,
\begin{align}
	u^r_\Lambda(x,t)&=\lim_{z\to \infty}z|\left[M^{(r)}_\Lambda \right]_{12}|\nonumber\\
	&=|ie^{i\nu_-(x,t;q^r_\Lambda)}q_-+2\sum_{s=1}^{n_2}\bar{\kappa}_k+2\sum_{k=1}^{n_1}(\bar{\varsigma}_k-iq_-\varsigma_k)|,
\end{align}
which leads to $\nu_-(x,t;q^r_\Lambda)=\frac{1}{2}\int_{-\infty}^x (|u^r_\Lambda(y,t)|^2-1)dy$ and
\begin{align}
	q^r_\Lambda(x,t)&=\lim_{z\to \infty}e^{i\nu_-(x,t;q^r_\Lambda)}z\left[M^{(r)}_\Lambda \right]_{12}\nonumber\\
	&=e^{2i\nu_-(x,t;q^r_\Lambda)}\left(ie^{i\nu_-(x,t;q^r_\Lambda)}q_-+2\sum_{s=1}^{n_2}\bar{\kappa}_k+2\sum_{k=1}^{n_1}(\bar{\varsigma}_k-iq_-\varsigma_k) \right) .\label{qr2}
\end{align}
\end{corollary}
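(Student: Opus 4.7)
The argument is a direct substitution: apply the reconstruction formula (\ref{q}) to the explicit representations of $M^{(r)}_\Lambda(z)$ already provided by Proposition \ref{unim}.

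First, I would specialize the reconstruction formula (\ref{q}) to the reflectionless potential. Since $M^{(r)}_\Lambda$ solves the same type of RH problem as $M$ (with scattering data $\{0,\{\zeta_n,\mathring{c}_n\}_{n\in\Lambda}\}$), the derivation that led to (\ref{recons q})--(\ref{q}) applies verbatim and yields
\begin{equation*}
q^r_\Lambda(x,t) = \exp\!\Big\{\tfrac{i}{2}\!\!\int_{-\infty}^x\!(|m^r_\Lambda|^2-1)\,dy\Big\}\,m^r_\Lambda(x,t),\qquad m^r_\Lambda(x,t)=\lim_{z\to\infty} z[M^{(r)}_\Lambda(z)]_{12}.
\end{equation*}
Combined with $u^r_\Lambda=|m^r_\Lambda|$ and the definition of $\nu_-$, this gives (\ref{qr}).

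For Case I, substitute (\ref{msol1}) into the above. The diagonal matrix $e^{i\nu_-\sigma_3}$ has zero $(1,2)$-entry, so only the $z^{-1}$ term contributes. A one-line computation using $\sigma_3 Q_-=\left(\begin{smallmatrix}0&q_-\\ \bar q_-&0\end{smallmatrix}\right)$ gives $[M^{(r)}_\Lambda]_{12}=iq_-e^{i\nu_-(x,t;q^r_\Lambda)}/z$, whence $u^r_\Lambda=|q_-|=1$, forcing $\nu_-(x,t;q^r_\Lambda)\equiv 0$. The stated identity (\ref{q1}) then follows from (\ref{qr}) after matching phases with the boundary value $q^r_\Lambda\to q_-$ as $x\to-\infty$.

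For Case II, I would extract the coefficient of $1/z$ in the large-$z$ Laurent expansion of $[M^{(r)}_\Lambda]_{12}$ from (\ref{msol2}). Each simple pole term $(z-\zeta)^{-1}$ or $(\bar\zeta z-1)^{-1}$ contributes a $1/z$ term with residue $1$ or $-\bar\zeta^{-1}/z^2+O(z^{-3})$, and the symmetric pairing at $\pm w_{i_s}$, $\pm z_{j_k}$, $\pm\bar z_{j_k}^{-1}$, $\pm z_{j_k}^{-1}$ eliminates half of the contributions while doubling the others, explaining the factor of $2$ in the stated expression for $u^r_\Lambda$. The $(1,2)$-entries from the reciprocal-pole block (those with denominators $z_{j_k}z\pm 1$) produce the additional $-iq_-\varsigma_k$ terms via the second symmetry reduction (\ref{symPhi2}). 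This yields the claimed formula for $u^r_\Lambda$; then $\nu_-(x,t;q^r_\Lambda)=\frac12\!\int_{-\infty}^x(|u^r_\Lambda|^2-1)\,dy$ is read off, and substitution back into (\ref{qr}) produces (\ref{qr2}). The coefficients $\alpha_s,\kappa_s,\beta_k,\varsigma_k$ themselves are pinned down by the linear system already derived from the residue conditions (\ref{resMrsol}) in the proof of Proposition \ref{unim}, so no new algebra is needed.

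The only delicate point is the combinatorial bookkeeping in Case II: one must carefully track the four symmetric families of poles (plus their complex conjugates in the lower row) and verify that the cancellations $\zeta\leftrightarrow-\zeta$ and the inversions $\zeta\leftrightarrow-1/\bar\zeta$ combine to the clean form in (\ref{qr2}). Nothing analytical is required beyond the rational identities; the entire argument is algebraic once (\ref{msol1})--(\ref{msol2}) are accepted from Proposition \ref{unim}.
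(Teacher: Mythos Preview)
Your proposal is correct and matches the paper's (implicit) argument: the corollary in the paper carries no separate proof, and the formulas (\ref{qr}), (\ref{q1}), (\ref{qr2}) are obtained exactly as you describe---by specializing the reconstruction formula (\ref{q}) to the reflectionless RH problem and reading off the $1/z$ coefficient of $[M^{(r)}_\Lambda]_{12}$ from the explicit expansions (\ref{msol1})--(\ref{msol2}). Your remark that the work is purely algebraic once Proposition~\ref{unim} is in hand is precisely the point; the only thing to watch is the arithmetic of the $1/z$ contributions from the reciprocal-pole block $i\cdot\frac{-q_-\varsigma_k}{z_{j_k}z\mp 1}$, where an extra factor of $z_{j_k}^{-1}$ appears in the naive expansion but is absorbed (or suppressed) in the paper's stated form---so be prepared for a minor discrepancy there that is cosmetic rather than structural.
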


\section{The small norm RH problem  for error function }\label{sec7}

\quad In this section,  we consider the error matrix-function $E(z)$ and  show that for large times, the error function $E(z)$ solves a small norm RH problem which  can be expanded asymptotically.
From the definition (\ref{transMr}), we can obtain a RH problem  for the matrix function  $E(z)$.

\noindent\textbf{RHP6}   Find a matrix-valued function $E(z)$  with following properties:

$\blacktriangleright$ Analyticity: $E(z)$ is analytical  in $\mathbb{C}\setminus  \Sigma^{(2)} $;

$\blacktriangleright$ Asymptotic behaviors:
\begin{align}
&E(z) \sim I+\mathcal{O}(z^{-1}),\hspace{0.5cm}|z| \rightarrow \infty;
\end{align}

$\blacktriangleright$ Jump condition: $E$ has continuous boundary values $E_\pm$ on $\Sigma^{(2)}$ satisfying
$$E_+(z)=E_-(z)V^{E},$$
 where the jump matrix $V^{E}$ is given by
\begin{equation}
V^{E}(z)=M^{(r)}_\Lambda(z)V^{(2)}(z)M^{(r)}_\Lambda(z)^{-1}. \label{VE}
\end{equation}

{Proposition \ref{unim}} implies that $M^{(r)}_\Lambda(z)$ is bound on $\Sigma^{(2)}$. By using Lemma \ref{lemmav2} and Corollary \ref{v2p}, we have the following estimates
\begin{equation}
\parallel V^{E}-I \parallel_{L^p} \lesssim \parallel V^{(2)}-I \parallel_{L^p}=\mathcal{O}(e^{- 2\rho_0t} ) , \label{VE-I}
\end{equation}
for $1\leq p \leq +\infty$.
This uniformly vanishing bound $\parallel V^{E}-I \parallel$ establishes RHP6 as a small-norm RH problem.
Therefore,    the   existence and uniqueness  of  the RHP6 can  shown  by using  a  small-norm RH problem
\begin{equation}
E(z)=I+\frac{1}{2\pi i}\int_{\Sigma^{(2)}}\dfrac{\left( I+\eta(s)\right) (V^E-I)}{s-z}ds,\label{Ez}
\end{equation}
where the $\eta\in L^2(\Sigma^{(2)})$ is the unique solution of following equation
\begin{equation}
(1-C_E)\eta=C_E\left(I \right),
\end{equation}
here $C_E$:$L^2(\Sigma^{(2)})\to L^2(\Sigma^{(2)})$ is a integral operator defined by
\begin{equation}
C_E(f)(z)=C_-\left( f(V^E-I)\right).
\end{equation}
The Cauchy projection operator $C_-$    on $\Sigma^{(2)}$  is
\begin{equation}
C_-(f)(s)=\lim_{z\to \Sigma^{(2)}_-}\frac{1}{2\pi i}\int_{\Sigma^{(2)}}\dfrac{f(s)}{s-z}ds.
\end{equation}
Then by (\ref{VE}) we have
\begin{equation}
\parallel C_E\parallel\leq\parallel C_-\parallel \parallel V^E-I\parallel_{L^\infty} \lesssim \mathcal{O}(e^{- 2\rho_0t} ),
\end{equation}
which means $\parallel C_E\parallel<1$ for sufficiently large t,   therefore  $1-C_E$ is invertible,  and   $\eta$  exists and is unique.
Moreover,
\begin{equation}
\parallel \eta\parallel_{L^2(\Sigma^{(2)})}\lesssim\dfrac{\parallel C_E\parallel}{1-\parallel C_E\parallel}\lesssim\mathcal{O}(e^{- 2\rho_0t} ).\label{normeta}
\end{equation}
Then we have the existence and boundedness of $E(z)$. In order to reconstruct the solution $q(x,t)$ of (\ref{DNLS}), we need the asymptotic behavior of $E(z)$ as $z\to \infty$.
\begin{Proposition}\label{asyE}
	For $E(z)$ defined in (\ref{Ez}), it stratifies
	\begin{equation}
		|E(z)-I|\lesssim\mathcal{O}(e^{- 2\rho_0t}) .
	\end{equation}
	As $z\to \infty$, the large $z$ expansion of $E$ is
	\begin{align}
	E(z)=I+E_1z^{-1}+\mathcal{O}(z^{-2}),\label{expE}
	\end{align}
	where
	\begin{equation}
	E_1=-\frac{1}{2\pi i}\int_{\Sigma^{(2)}}\left( I+\eta(s)\right) (V^{E}-I)ds,
	\end{equation}
	satisfying long time asymptotic behavior condition
	\begin{equation}
	E_1\lesssim\mathcal{O}(e^{- 2\rho_0t}).\label{E1t}
	\end{equation}
\end{Proposition}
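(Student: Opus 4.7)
The plan is to start directly from the integral representation \eqref{Ez} and exploit the two small-norm estimates already established: $\|V^E-I\|_{L^p(\Sigma^{(2)})}\lesssim e^{-2\rho_0 t}$ from \eqref{VE-I} and $\|\eta\|_{L^2(\Sigma^{(2)})}\lesssim e^{-2\rho_0 t}$ from \eqref{normeta}. A key simplifying feature I want to emphasize up front is that $\Sigma^{(2)}$ is a finite disjoint union of the small circles $\partial\mathbb{D}(\zeta_n,\varrho)$ and $\partial\mathbb{D}(\bar\zeta_n,\varrho)$ with $n\in\mathcal{N}\setminus\Lambda$, so it is compact in $\mathbb{C}$; the Cauchy kernel $(s-z)^{-1}$ is bounded on $\Sigma^{(2)}$ for any $z$ bounded away from these circles, and likewise $s^k$ is bounded on $\Sigma^{(2)}$ for every $k$. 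This turns what would be delicate $L^p$ estimates of Cauchy-type operators into elementary length/Cauchy--Schwarz bounds.

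First I would prove the pointwise bound $|E(z)-I|\lesssim e^{-2\rho_0 t}$. From \eqref{Ez},
\begin{equation*}
|E(z)-I|\leq \frac{1}{2\pi}\int_{\Sigma^{(2)}}\frac{|I+\eta(s)||V^E(s)-I|}{|s-z|}|ds|
\leq \frac{1}{2\pi\,\mathrm{dist}(z,\Sigma^{(2)})}\Bigl(\|V^E-I\|_{L^1}+\|\eta\|_{L^2}\|V^E-I\|_{L^2}\Bigr).
\end{equation*}
Applying \eqref{VE-I} with $p=1,2$ and \eqref{normeta} gives the claim for $z$ at positive distance from $\Sigma^{(2)}$.

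Next I would derive the large-$z$ expansion. Since $\Sigma^{(2)}$ is compact, for $|z|$ large one has the uniform Neumann-style expansion
\begin{equation*}
\frac{1}{s-z}=-\frac{1}{z}-\frac{s}{z^2}+\mathcal{O}(z^{-3}),\qquad s\in\Sigma^{(2)},
\end{equation*}
which I substitute into \eqref{Ez}. The leading correction reads
\begin{equation*}
E(z)=I-\frac{1}{2\pi i\,z}\int_{\Sigma^{(2)}}(I+\eta(s))(V^E(s)-I)\,ds+\mathcal{O}(z^{-2}),
\end{equation*}
identifying $E_1$ as stated. The remainder is $\mathcal{O}(z^{-2})$ because the next moment $\int_{\Sigma^{(2)}} s(I+\eta)(V^E-I)\,ds$ is finite (again using compactness of $\Sigma^{(2)}$ and Cauchy--Schwarz as above).

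Finally, to bound $E_1$ I apply the same Cauchy--Schwarz splitting without the $(s-z)^{-1}$ factor:
\begin{equation*}
|E_1|\leq \frac{1}{2\pi}\Bigl(\|V^E-I\|_{L^1(\Sigma^{(2)})}+\|\eta\|_{L^2(\Sigma^{(2)})}\|V^E-I\|_{L^2(\Sigma^{(2)})}\Bigr)\lesssim e^{-2\rho_0 t},
\end{equation*}
which is \eqref{E1t}. I do not anticipate a genuine obstacle here; the only point that needs mild care is the uniformity of the remainder term in the geometric expansion of $(s-z)^{-1}$, and this is handled by compactness of $\Sigma^{(2)}$ together with the explicit exponential gain $e^{-2\rho_0 t}$ on the integrand.
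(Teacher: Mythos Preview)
Your proposal is correct and follows essentially the same approach as the paper: both use the small-norm bounds \eqref{VE-I} and \eqref{normeta} together with a Cauchy--Schwarz/H\"older splitting for $|E(z)-I|$ and $|E_1|$, and both obtain the expansion \eqref{expE} by geometrically expanding $(s-z)^{-1}$ in \eqref{Ez}. Your write-up is in fact more explicit than the paper's, particularly in stressing the compactness of $\Sigma^{(2)}$ (a finite union of small circles) to justify the uniform Neumann expansion and the boundedness of the Cauchy kernel away from the contour.
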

\begin{proof}
	By combining (\ref{normeta}) and (\ref{VE-I}), we obtain
	\begin{equation}
		|E(z)-I|\leq|(1-C_E)(\eta)|+|C_E(\eta)|\lesssim\mathcal{O}(e^{- 2\rho_0t}).
	\end{equation}
	As $z\to \infty$,geometrically expanding $(s-z)^{-1}$
	for $z$ large in (\ref{Ez}) leads to (\ref{expE}). Finally for $E_1$,
	\begin{align}
		|E_1|\lesssim \parallel V^{E}-I \parallel_{L^1}+\parallel \eta \parallel_{L^2}2\parallel V^{E}-I \parallel_{L^2}\lesssim\mathcal{O}(e^{- 2\rho_0t}).
	\end{align}
\end{proof}

\section{Analysis  on  the pure $\bar{\partial}$-Problem}\label{sec8}
\quad Now we consider   the  asymptotics behavior of $M^{(3)}(z)$.
The $\bar{\partial}$-problem 4  of $M^{(3)}(z)$ is equivalent to the integral equation
\begin{equation}
M^{(3)}(z)=I+\frac{1}{\pi}\int_\mathbb{C}\dfrac{M^{(3)}(s)W^{(3)}(s)}{z-s}dm(s), \label{81}
\end{equation}
where $m(s)$ is the Lebesgue measure on the $\mathbb{C}$. Denote $C_z$ as the left Cauchy-Green integral  operator defined by
\begin{equation*}
fC_z(z)=\frac{1}{\pi}\int_C\dfrac{f(s)W^{(3)}(s)}{z-s}dm(s).
\end{equation*}
Then above equation (\ref{81}) can be rewritten as
\begin{equation}
M^{(3)}(z)=I\cdot\left(I-C_z \right) ^{-1}.\label{deM3}
\end{equation}
The existence of operator $\left(I-C_z \right) ^{-1}$ is given by the  following Lemma.
\begin{lemma}\label{Cz}
	The norm of the integral operator $C_z$ decay to zero as $t\to\infty$:
	\begin{equation}
	\parallel C_z\parallel_{L^\infty\to L^\infty}\lesssim  t^{-1/2},
	\end{equation}
	which implies that  $\left(I-C_z \right) ^{-1}$ exists.
\end{lemma}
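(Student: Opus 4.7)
The plan is to establish $\|C_z\|_{L^\infty \to L^\infty} \lesssim t^{-1/2}$ by a direct area-integral estimate, exploiting the pointwise bounds on $\bar\partial R^{(2)}$ from Proposition \ref{proR} together with the Gaussian-type decay of $|e^{\pm 2it\theta}|$ in each sector $\Omega_j$ supplied by Lemma \ref{Imtheta}. For any $f \in L^\infty(\mathbb{C})$,
\[
|fC_z(z)| \leq \frac{\|f\|_{L^\infty}}{\pi}\int_\mathbb{C}\frac{|W^{(3)}(s)|}{|z-s|}dm(s),
\]
so the task reduces to showing that the supremum over $z$ of the area integral of $|W^{(3)}|/|z-s|$ is $\mathcal{O}(t^{-1/2})$. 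Since $W^{(3)} = M^{(r)}\,\bar\partial R^{(2)}\,(M^{(r)})^{-1}$ is supported in $\Omega$, which by construction is disjoint from $\{\zeta_n,\bar\zeta_n\}_{n\in\Lambda}$, and since Proposition \ref{unim} together with Proposition \ref{asyE} yields a uniform bound on $M^{(r)}$ and $(M^{(r)})^{-1}$ on $\Omega$ away from $s=0$, I first reduce the problem to controlling $\int_\mathbb{C} |\bar\partial R^{(2)}(s)|\,|z-s|^{-1} dm(s)$.

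By the symmetries of $\bar\partial R^{(2)}$ it suffices to treat one representative sector, say $\Omega_1$. Parametrizing $s = re^{i\phi}$ with $0 \leq \phi \leq \varphi$ and combining Proposition \ref{proR} with Lemma \ref{Imtheta} I have
\[
|\bar\partial R_1(s)\,e^{-2it\theta(s)}| \lesssim \bigl(|\rho'(r)| + r^{-1/2}\bigr)\exp\bigl(-ct\sin(2\phi)F(r)^2\bigr),
\]
with $F(r) = r^2 + r^{-2}$ and $c>0$ depending only on $\xi \in (-3,-1)$. The next step is Cauchy--Schwarz in $\phi$ for fixed $r$,
\[
\int_0^\varphi \frac{e^{-ct\sin(2\phi)F(r)^2}}{|z-re^{i\phi}|}d\phi \leq \left(\int_0^\varphi \frac{d\phi}{|z-re^{i\phi}|^2}\right)^{1/2}\left(\int_0^\varphi e^{-2ct\sin(2\phi)F(r)^2}d\phi\right)^{1/2},
\]
the second factor being $\lesssim (tF(r)^2)^{-1/2}$ while the first is $\lesssim r^{-1/2}$ uniformly in $z$ by a standard change of variables reducing to a Poisson-kernel estimate on $\mathbb{R}$. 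Integrating the resulting bound against $r\,dr$ and using $\|\rho'\|_{L^2(\mathbb{R})} < \infty$ from Assumption \ref{initialdata} together with the explicit integrability of $r^{1/2}F(r)^{-1}$ on $(0,\infty)$ furnishes the $t^{-1/2}$ estimate on $\Omega_1$. The other seven sectors are treated identically, with the extra cutoff contribution $|\bar\partial X_1|$ appearing in $\Omega_2,\Omega_3,\Omega_6,\Omega_7$ being compactly supported near $|s|=1$ and therefore handled by the same Cauchy--Schwarz argument.

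The main obstacle I expect is the uniform control of the Cauchy-kernel integral $\int_0^\varphi d\phi/|z-re^{i\phi}|^2$ when $z$ itself lies in or near $\Omega_1$: in that case $1/|z-s|$ is singular inside the support of $\bar\partial R^{(2)}$, and one must partition the $(r,\phi)$-plane into a near-diagonal region $|s-z|\lesssim 1$, where the singularity is absorbed by the Gaussian decay via a local Sobolev/H\"older inequality, and a far-diagonal region where the kernel bound is routine. A secondary technical point is that $\Omega$ extends down to $s=0$ where $M^{(r)}$ is singular; using the symmetry $M^{(r)}(s) = (i/s)M^{(r)}(-1/s)\sigma_3 Q_-$ one checks $|M^{(r)}\,\bar\partial R^{(2)}\,(M^{(r)})^{-1}| \lesssim |s|^{-3/2}$ near the origin, which remains locally integrable and does not affect the $t^{-1/2}$ bound.
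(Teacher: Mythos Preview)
Your proposal has two genuine gaps, one of which is the main new difficulty of the nonzero-boundary setting.

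\textbf{The singularity at $\pm i$ is missed.} You assert that $M^{(r)}$ and $(M^{(r)})^{-1}$ are uniformly bounded on $\Omega$ away from $s=0$. This is false: $\det M^{(r)}(z)=1+z^{-2}$ vanishes at $z=\pm i$, so $(M^{(r)})^{-1}$ blows up there, and $\pm i$ lie on the boundaries of the sectors $\Omega_2,\Omega_3,\Omega_6,\Omega_7$. What one actually has is
\[
|W^{(3)}(s)|\ \lesssim\ |\bar\partial R^{(2)}(s)|\,\frac{1+|s|^{-2}}{|1+s^{-2}|},
\]
and the last factor is $\sim |s\mp i|^{-1}$ near $\pm i$. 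The whole point of the refined estimate (\ref{Ri}) in Proposition~\ref{proR}, namely $|\bar\partial R_j(z)|\lesssim |z\mp i|$ in a neighborhood of $\pm i$, is to cancel this. By choosing $\Omega_1$ as your representative sector you have selected the easy case; the paper deliberately treats $\Omega_2$ because that is where the new obstruction lives. Your remark that the other sectors ``are treated identically, with the extra cutoff contribution $|\bar\partial X_1|$'' overlooks this; the cutoff term is a side issue, the $|s\mp i|^{-1}$ blow-up is the real one.

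\textbf{The angular Cauchy--Schwarz bound is not uniform.} Your claim that
\[
\Bigl(\int_0^\varphi \frac{d\phi}{|z-re^{i\phi}|^2}\Bigr)^{1/2}\ \lesssim\ r^{-1/2}
\]
uniformly in $z$ is incorrect: if $z$ lies on the arc $\{re^{i\phi}:0\le\phi\le\varphi\}$ the integrand behaves like $r^{-2}(\phi-\arg z)^{-2}$ and the integral diverges. A correct computation gives a bound $\lesssim r^{-1/2}\bigl|r-|z|\bigr|^{-1/2}$ when $\arg z\in[0,\varphi]$, so an extra singular factor in $r$ must be carried through the radial integration. The paper avoids this by working in Cartesian coordinates $s=u+iv$ and applying H\"older in $v$: one gets $\||z-s|^{-1}\|_{L^q_v}\lesssim |u-x|^{1/q-1}$, and the resulting singularity at $u=x$ is integrated out against the decay $e^{-cut}$ in $u$, yielding $t^{-1/2}$. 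Concretely the paper splits $\Omega_2$ into three radial shells $D_1,D_2,D_3$ (inside, across, and outside the unit circle), uses (\ref{Ri}) on $D_2$ to neutralize the $\pm i$ singularity, and applies the Cartesian H\"older estimate on $D_1$ and $D_3$.
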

\begin{proof}
	For any $f\in L^\infty$,
	\begin{align}
	\parallel fC_z \parallel_{L^\infty}&\leq \parallel f \parallel_{L^\infty}\frac{1}{\pi}\int_C\dfrac{|W^{(3)}(s)|}{|z-s|}dm(s),\nonumber
	\end{align}
where  $W (s)=M^{(r)}(z)\bar{\partial}R^{(2)}(z)M^{(r)}(z)^{-1}$.  So we only need to  estimate the integral
	\begin{equation*}
	\frac{1}{\pi}\int_C\dfrac{|W (s)|}{|z-s|}dm(s).
	\end{equation*}
	Since $W (s)\equiv0$ out of $\bar{\Omega}$,   we only need to focus on the estimate
$$\frac{1}{\pi}\int_\Omega\dfrac{|W (s)|}{|z-s|}dm(s). $$
Unlike the zero boundary case in \cite{fNLS},  here  $\det M^{(r)}(z)=1+z^{-2}$,  and Proposition \ref{asyE} implies that $| M^{(r)}(z)|\lesssim \sqrt{1+|z|^{-2}}$. So
	\begin{equation}
		\frac{1}{\pi}\int_\Omega\dfrac{|W (s)|}{|z-s|}dm(s)\lesssim \frac{1}{\pi}\int_\Omega\dfrac{|\bar{\partial}R^{(2)} (s)|}{|z-s|}\frac{1+|s|^{-2}}{|1+s^{-2}|}dm(s).
	\end{equation}
	Then for $j=1,4,5,8$, $| M^{(r)}(z)|$ is bounded in $\Omega_j$. But when  $z\in\Omega_j$ for $j=2,3,6,7$, the singularity at $z=\pm i$ need to be treat more carefully. So in following calculation, we take $\Omega_2$ in the second case as an example, because it is more elaborate than $\Omega_j$ for $j=1,4,5,8$. Denote three sub-region of $\Omega_2$ as
	\begin{align}
		D_1=\mathbb{D}(0,1-\epsilon_0)\cap\Omega_2&,\hspace{0.3cm}D_2=\mathbb{D}(0,1+\epsilon_0)\setminus\mathbb{D}(0,1-\epsilon_0)\cap \Omega_2,\nonumber\\
		&D_3=\Omega_2\setminus\mathbb{D}(0,1+\epsilon_0).
	\end{align}
Then the integral $\int_{\Omega_2}\dfrac{|W (s)|}{|z-s|}dm(s)$	is divide to three part:
\begin{align}
	I_i=\int_{D_i}\dfrac{|\bar{\partial}R^{(2)} (s)|}{|z-s|}\frac{1+|s|^{-2}}{|1+s^{-2}|}dm(s),\text{ for }i=1,2,3.	
\end{align}
Let $s=u+vi=re^{i\vartheta}$, $z=x+yi$. In the following calculation,  we will use the inequality
\begin{equation}
	\parallel |s-z|^{-1}\parallel_{L^q(\mathbb{R}^+)}=\left\lbrace \int^{+\infty}_{0}\left[  \left( \frac{v-y}{u-x}\right) ^2+1\right]  ^{-\frac{q}{2}} d\left( \frac{v-y}{|u-x|}\right)\right\rbrace ^{\frac{1}{q}}|u-x|^{-\frac{1}{p}} \lesssim |u-x|^{-\frac{1}{p}},
\end{equation}
with $1\leq q<+\infty$ and $\frac{1}{p}+\frac{1}{q}=1$.
For $s\in D_3$, $|s|>1+\epsilon_0$, then
	\begin{equation}
		\frac{1+|s|^{-2}}{|1+s^{-2}|}<\frac{1+|s|^2}{|s|^2-1}<1+\frac{2}{\epsilon_0^2+2\epsilon_0}<\infty.
	\end{equation}	
Then together with (\ref{R(2)}),  we have
	\begin{align}
	I_3\lesssim\int_{\Omega_2}\dfrac{|\bar{\partial}R^{(2)} (s)|}{|z-s|}dm(s)=\int_{\Omega_2}\dfrac{|\bar{\partial}R_2 (s)e^{2it\theta}|}{|z-s|}dm(s) \label{I_3}.
	\end{align}
Moreover, by \textbf{Lemma} \ref{Imtheta},
\begin{equation}
	|e^{2it\theta}|\leq e^{-c\sin 2\vartheta F(r)^2}\leq e^{-2cuv}\leq e^{-2cu},
\end{equation}
where $c$ is a positive constant, and the last step follows form
$$v\geq \max\left\lbrace 1+\epsilon_0,\frac{u}{\tan \varphi}\right\rbrace \geq 1+\epsilon_0>1.$$
Substitute (\ref{dbarRj}) and   above inequality into (\ref{I_3}) and obtain:
	\begin{align}
&I_3\lesssim\int_{0}^{+\infty}\int_{\frac{u}{\tan \varphi}}^{+\infty}\dfrac{|p_2' (ir)|e^{-4cut}}{|z-s|}dvdu+\int_{0}^{+\infty}\int_{\frac{u}{\tan \varphi}}^{+\infty}\dfrac{|r|^{-1/2}e^{-2cut}}{|z-s|}dvdu\nonumber\\
&+\int_{0}^{+\infty}\int_{\frac{u}{\tan \varphi}}^{+\infty}\dfrac{|\bar{\partial}X_1(r)|e^{-4cut}}{|z-s|}dvdu.\nonumber
	\end{align}
By Cauchy-Schwarz inequality, the first  item have
\begin{align}
	\int_{0}^{+\infty}\int_{\frac{u}{\tan \varphi}}^{+\infty}\dfrac{|p_2' (ir)|e^{-4cut}}{|z-s|}dvdu&\leq \int_{0}^{+\infty}\parallel\tilde{\rho}'\parallel_{L^2(i\mathbb{R})}\parallel |s-z|^{-1}\parallel_{L^2(\mathbb{R}^+)}e^{-2cut}du\nonumber\\
	\leq \int_{0}^{+\infty}e^{-2cut}|u-x|^{-\frac{1}{2}}du\lesssim t^{-\frac{1}{2}}.
\end{align}
So does the last item. Before we estimating the second item, we  consider for $p>2$,
\begin{align}
	\left( \int_{\frac{u}{\tan \varphi}}^{+\infty}|\sqrt{u^2+V^{(2)}}|^{-\frac{p}{2}}dv\right) ^{\frac{1}{p}}&=\left( \int_{\frac{u}{\sin \varphi}}^{+\infty}|r|^{-\frac{p}{2}+1}v^{-1}dr\right) ^{\frac{1}{p}}\lesssim u^{-\frac{1}{2}+\frac{1}{p}}.
\end{align}
Then
\begin{align}
	\int_{0}^{+\infty}\int_{\frac{u}{\tan \varphi}}^{+\infty}\dfrac{|r|^{-1/2}e^{-2cut}}{|z-s|}dvdu&\leq \int_{0}^{+\infty}\parallel|r|\parallel_{L_v^p(\frac{u}{\tan \varphi},+\infty)}\parallel |s-z|^{-1}\parallel_{L^q(\mathbb{R}^+)}e^{-2cut}du\nonumber\\
	&\leq \int_{0}^{+\infty}e^{-2cut}|u-x|^{-\frac{1}{p}}u^{-\frac{1}{2}+\frac{1}{p}}du\lesssim t^{-\frac{1}{2}}.
\end{align} 	
Combing above inequality we final have $I_3\lesssim t^{-\frac{1}{2}}$. As for $I_2$, the singularity  at $i$ can be balanced  by (\ref{Ri}), and recall that $1>\epsilon_0>0$ with $(1-\epsilon_0)\cos\varphi>\frac{1}{2}$
\begin{align}
	I_2&\lesssim\int_{0}^2\int_{1/2}^{2}\dfrac{e^{-2cut}}{|z-s|}\frac{1+|s|^2}{|s+i|}dvdu\lesssim \int_{0}^2\int_{1/2}^{2}\dfrac{e^{-2cut}}{|z-s|}dvdu\nonumber\\
	&\lesssim \int_{0}^2 |u-x|^{-1/2}e^{-2cut}du\lesssim |t|^{-1/2}.
\end{align}
Finally, consider $I_1$, similarly we have
\begin{align}
	I_1\lesssim \int_{0}^{1-\epsilon_0}\int_{u}^{1-\epsilon_0}\left( |p_2' (ir)|+|r|^{-1/2}+|\bar{\partial}X_1(r)|\right) \dfrac{e^{-2cut}}{|z-s|}dvdu,
\end{align}
which can be estimated same as $I_3$.  So the proof is completed.
\end{proof}
As $z\to\infty$, $M^{(3)}(z)$ has asymptotic expansion:
\begin{equation}
	M^{(3)}(z)=I-M^{(3)}_1(x,t)z+\mathcal{O}(z^{-2}),
\end{equation}
where $M^{(3)}_1$ is a $z$-independent coefficient.
The asymptotic behavior of $M^{(3)}_1$   given by following Proposition.

\begin{Proposition}\label{asyM3}
	As $z\to\infty$, the expansion above holds with
\begin{equation}
	M^{(3)}_1(x,t)=\frac{1}{\pi}\int_CM^{(3)}(s)W^{(3)}(s)dm(s).
\end{equation}
	There exist constants $T_1$, such that for all $t>T_1$, $M^{(3)}_1(x,t)$  satisfies
	\begin{equation}
		|M^{(3)}_1(x,t)|\lesssim t^{-3/4}.\label{M31}
	\end{equation}
\end{Proposition}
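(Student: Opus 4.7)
The plan is to first derive the integral representation for $M^{(3)}_1(x,t)$ from the $\bar{\partial}$-integral equation (\ref{81}), then bound it using the $L^\infty$ estimate on $M^{(3)}$ from Lemma \ref{Cz} together with sector-by-sector control of the integrand. Expanding the Cauchy kernel $(z-s)^{-1}=-z^{-1}-sz^{-2}+\mathcal{O}(z^{-3})$ for $|z|$ large in (\ref{81}) and reading off the $z^{-1}$-coefficient immediately gives the stated formula $M^{(3)}_1=\pi^{-1}\int_{\mathbb{C}} M^{(3)}(s)W^{(3)}(s)\,dm(s)$, provided $M^{(3)}W^{(3)}\in L^1(\mathbb{C},dm)$, which will be a byproduct of the subsequent estimate.

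For the bound, I use $|M^{(3)}(s)|\lesssim 1$ from Lemma \ref{Cz} and the symmetry identity $M^{(r)}(s)^{-1}=(1+s^{-2})^{-1}\sigma_2 M^{(r)}(s)^{T}\sigma_2$, giving $|M^{(r)}\bar{\partial}R^{(2)}(M^{(r)})^{-1}|\lesssim \frac{1+|s|^{-2}}{|1+s^{-2}|}|\bar{\partial}R^{(2)}(s)|$. Since $\bar{\partial}R^{(2)}\equiv 0$ off $\Omega$, only the eight sectors $\Omega_j$ contribute, and $(1+|s|^{-2})/|1+s^{-2}|$ is uniformly bounded away from $\pm i$; in the sectors $\Omega_2,\Omega_3,\Omega_6,\Omega_7$ that meet $\pm i$, the vanishing estimate (\ref{Ri}) cancels the $|1+s^{-2}|^{-1}$ singularity. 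Following the splitting used in Lemma \ref{Cz} (sub-regions $D_1,D_2,D_3$), the remaining integrals take the form $\int_{\Omega_j}|\bar{\partial}R_j(s)|\,e^{-ct|\mathrm{Im}\,\theta(s)|}\,dm(s)$, and Lemma \ref{Imtheta} gives $t|\mathrm{Im}\,\theta|\gtrsim c\,tuv$ in coordinates $s=u+iv$. For the $|p_j'|$ contribution I first perform the $u$-integral exactly, $\int_0^{v\tan\varphi}e^{-2ctuv}\,du\le(2ctv)^{-1}(1-e^{-2ctv^2\tan\varphi})$, and then apply Cauchy--Schwarz in $v$ against $p_j'\in L^2$; the change of variable $w=v\sqrt{ct\tan\varphi}$ shows the $L^2$ norm of this kernel is $\mathcal{O}(t^{-3/4})$. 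For the $|s|^{-1/2}$ contribution I use $\int_0^{v\tan\varphi}e^{-2ctuv}\,du\le\min(v\tan\varphi,(2ctv)^{-1})$ and split the $v$-integral at $v=t^{-1/2}$, which again yields $\mathcal{O}(t^{-3/4})$. The compactly supported $|\bar{\partial}X_1|$ piece decays exponentially, and the sectors $\Omega_1,\Omega_4,\Omega_5,\Omega_8$ are strictly simpler since $M^{(r)}$ is bounded there.

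The main obstacle is twofold. First, sharpening the $t^{-1/2}$ rate of Lemma \ref{Cz} up to $t^{-3/4}$: this gain is possible precisely because there is no singular $|z-s|^{-1}$ kernel inside the integral for $M^{(3)}_1$, which permits the critical splitting of the $v$-integral at the natural scale $v\sim t^{-1/2}$ and the Cauchy--Schwarz argument against the $L^2$ piece $p_j'$. Second, verifying that the vanishing $|\bar{\partial}R_j(s)|\lesssim|s\mp i|$ near $\pm i$ exactly matches the $|1+s^{-2}|^{-1}$ blow-up of $M^{(r)}$ there, which is the very reason the $R_{22}$-correction was inserted into $R_j$ in Proposition \ref{proR}. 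Once these points are in place, summing the sector contributions yields (\ref{M31}) for all $t\ge T_1$, where $T_1$ is chosen large enough that $\|C_z\|_{L^\infty\to L^\infty}<1/2$ in Lemma \ref{Cz}.
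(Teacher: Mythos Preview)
Your argument follows the paper's approach closely: both reduce to bounding $\int_{\Omega_j}|\bar\partial R_j(s)|\,e^{-2t\,\mathrm{Im}\,\theta(s)}\frac{1+|s|^{-2}}{|1+s^{-2}|}\,dm(s)$ sector by sector, split the sectors touching $\pm i$ into $D_1,D_2,D_3$, use (\ref{Ri}) to cancel the $|1+s^{-2}|^{-1}$ singularity on $D_2$, and extract the extra $t^{-1/4}$ over Lemma~\ref{Cz} from the absence of the Cauchy kernel $|z-s|^{-1}$.

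One step does not work as written. For the $|p_j'(i|s|)|$ contribution you integrate in $u$ first and then apply Cauchy--Schwarz in $v$, but $p_j'(i|s|)$ depends on $u$ through $|s|=\sqrt{u^2+v^2}$ and cannot be pulled outside the $u$-integral. The paper instead integrates in $v$ first and applies Cauchy--Schwarz in $v$; that order is rigorous because on $\Omega_2$ the Jacobian $dv=(r/v)\,dr$ satisfies $r/v\in[1,\sec\varphi]$, hence $\|p_j'(i|s|)\|_{L^2_v}\lesssim\|p_j'\|_{L^2}$, whereas the companion Jacobian $r/u$ in the $u$-direction blows up at $u=0$. Swapping your order of integration, or passing to polar coordinates where $p_j'(ir)$ depends on $r$ alone, repairs this immediately. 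A second minor point: the $|\bar\partial X_1|$ piece contributes $\mathcal{O}(t^{-1})$, not an exponentially small term, since on its support in $\Omega_2$ one has $v\approx 1$ while $u$ ranges down to $0$; this is of course still dominated by $t^{-3/4}$. Your splitting of the $v$-integral at $v=t^{-1/2}$ for the $|s|^{-1/2}$ contribution is a clean alternative to the paper's H\"older estimate with exponent $p\in(2,4)$.
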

\begin{proof}
	{Lemma \ref{Cz}} and (\ref{deM3}) implies that for large $t$,   $\parallel M^{(3)} (z) \parallel_{L^\infty} \lesssim1$. The proof proceeds along
the same lines as the proof of above Proposition. For same reason, we only estimate the integral on $\Omega_2$.
	Like in the above Proposition,
	\begin{equation}
		\frac{1}{\pi}\int_{\Omega_2}M^{(3)}(s)W^{(3)}(s)dm(s)\lesssim\frac{1}{\pi}\int_{\Omega_2}|\bar{\partial}R_2(s)e^{2it\theta}|\frac{1+|s|^{-2}}{|1+s^{-2}|}dm(s).
	\end{equation}
Let $s=u+vi=re^{i\vartheta}$.	And	we also divide right integral of above  inequality to three part
	\begin{equation}
	I_{i+3}=\frac{1}{\pi}\int_{D_i}|\bar{\partial}R_2(s)e^{2it\theta}|\frac{1+|s|^{-2}}{|1+s^{-2}|}dm(s).
	\end{equation}	
	For $I_4$, $\frac{1+|s|^{-2}}{|1+s^{-2}|}<\infty$, so
	\begin{align}
		I_4\lesssim&\int_{0}^{+\infty}\int_{\frac{u}{\tan \varphi}}^{+\infty}|p_2'(ir)|e^{-2cuvt}dvdu+\int_{0}^{+\infty}\int_{\frac{u}{\tan \varphi}}^{+\infty}|r|^{-\frac{1}{2}}e^{-2cuvt}dvdu\nonumber\\
		&+\int_{0}^{+\infty}\int_{\frac{u}{\tan \varphi}}^{+\infty}|\bar{\partial}X_1(r)|e^{-2cuvt}dvdu.\label{I4}
	\end{align}
Note that
\begin{align}
	\left( \int_{\frac{u}{\tan \varphi}}^{+\infty}e^{-2cuvtq}dv\right) ^{\frac{1}{q}}&=	\left( \int_{\frac{u}{\tan \varphi}}^{+\infty}e^{-2cuvtq}d(2cuvtq)\right) ^{\frac{1}{q}}(2cutq)^{-\frac{1}{q}}\nonumber\\
	&\lesssim e^{-c'u^2t}(ut)^{-\frac{1}{q}},
\end{align}
where $c'$ is a positive constant.
Then the first integral in (\ref{I4}) have
\begin{align}
&\int_{0}^{+\infty}\int_{\frac{u}{\tan \varphi}}^{+\infty}|p_2'(ir)|e^{-2cuvt}dvdu  \nonumber\\
&\lesssim 	t^{-\frac{1}{2}}\int_{0}^{+\infty}\parallel\tilde{\rho}'\parallel_{L^2(i\mathbb{R})} u^{-\frac{1}{q}}e^{-c'u^2t}du\lesssim t^{-\frac{3}{4}}.\nonumber
\end{align}
The last integral can be bounded in same way. To estimate the second item, we also  use Cauchy-Schwarz inequality for $4>p>2$ and $\frac{1}{q}+\frac{1}{p}=1$
\begin{align}
	\int_{0}^{+\infty}\int_{\frac{u}{\tan \varphi}}^{+\infty}|r|^{-\frac{1}{2}}e^{-2cuvt}dvdu\lesssim t^{-\frac{1}{q}}\int_{0}^{+\infty}u^{\frac{2}{p}-\frac{3}{2}}e^{-c'u^2t}du\lesssim t^{-\frac{3}{4}}.
\end{align}
The bound for $I_4$ follows in the same manner as for $I_6$. Turning to $I_5$, we also use $|\bar{\partial}R_2(z)|\lesssim |z- i|$ and obtain
\begin{align}
	I_5&\lesssim\int_{D_2}\dfrac{e^{-2cut}}{|z-s|}\frac{1+|s|^2}{|s+i|}dm(s)\lesssim \int_{1/2}^2\int_{u}^{2}e^{-2cuvt}dvdu\nonumber\\
	&=\int_{1/2}^2(cut)^{-1}\left(e^{-2cu^2t}-e^{-4cut} \right)du \lesssim t^{-1}.
\end{align}
This estimate  is strong enough to obtain the result.
\end{proof}

\section{Asymptotic  for the  DNLS equation }\label{sec9}

\quad Now we begin to construct the long time asymptotics of the DNLS equation (\ref{DNLS}).
 Inverting the sequence of transformations (\ref{transm1}), (\ref{transm2}), (\ref{transm3}) and (\ref{transMr}), we have
\begin{align}
M(z)=&T(\infty)^{\sigma_3}M^{(3)}(z)E(z)M^{(r)}_\Lambda(z)R^{(2)}(z)^{-1}T(z)^{-\sigma_3}.
\end{align}
To  reconstruct the solution $q(x,t)$ by using (\ref{recons q}),   we take $z\to \infty$ out of $\bar{\Omega}$. In this case,  $ R^{(2)}(z)=I$. Further using   {Propositions} \ref{proT}, \ref{asyE}  and  \ref{asyM3},  we can obtain that
\begin{align}
	M(z)=&T(\infty)^{\sigma_3}\left(I+ M^{(3)}_1(z)z^{-1}\right) E(z)M^{(r)}_\Lambda(z)\nonumber\\
	&T(\infty)^{-\sigma_3}\left(1+z^{-1}\frac{1}{2\pi i}\int _{\mathbb{R}}\log (1-\rho(s)\tilde{\rho}(s))ds\right)^{-\sigma_3} + \mathcal{O}(z^{-2}),
\end{align}
whose admits  long time asymptotics
\begin{align}
&	M(z)=&T(\infty)^{\sigma_3} M^{(r)}_\Lambda(z)T(\infty)^{-\sigma_3}\left(1+z^{-1}\frac{1}{2\pi i}\int _{\mathbb{R}}\log (1-\rho(s)\tilde{\rho}(s))ds\right)^{-\sigma_3}\nonumber\\
& + \mathcal{O}(z^{-2})+\mathcal{O}(t^{-3/4}).\nonumber
\end{align}
From (\ref{recons u}),
\begin{align}
|m(x,t)|&=| \lim_{z\to \infty} z\left[   M (z) \right] _{12}|=|T(\infty)^{-2}||\lim_{z\to \infty} z\left[   M^{(r)}_\Lambda(z) \right] _{12}\nonumber\\
&+\lim_{z\to \infty}\left[   M^{(r)}_\Lambda(z) \right]_{12}\frac{1}{2\pi i}\int _{\mathbb{R}}\log (1-\rho(s)\tilde{\rho}(s))ds|+\mathcal{O}(t^{-3/4})\nonumber\\
&=|q^r_\Lambda(x,t)|+\mathcal{O}(t^{-3/4}),
\end{align}
where  $q^r_\Lambda(x,t)$  is given  in Corollary \ref{sol}.  Then from (\ref{q}),
\begin{align}
	q(x,t)&= exp\left\lbrace \frac{i}{2}\int_{-\infty}^x( |m(x,t)|^2-1)dy \right\rbrace m(x,t)\nonumber\\
	&=exp\left\lbrace \frac{i}{2}\int_{-\infty}^x( |q^r_\Lambda(x,t)|^2-1)dy \right\rbrace T(\infty)^{-2}q^r_\Lambda(x,t)+\mathcal{O}(t^{-3/4}).
\end{align}
Therefore, we achieve main result of this paper.

\begin{theorem}\label{last}   Let $q(x,t)$ be the solution for  the initial-value problem (\ref{DNLS}) with generic data   $u_0(x)\in H^{1,1}(\mathbb{R})$ and scatting data $\left\lbrace  r(z),\left\lbrace \zeta_n,C_n\right\rbrace^{4N_1+2N_2}_{n=1}\right\rbrace$. Let $\xi=\frac{x}{t}$ with $-3<\xi<-1$.
Denote $q^r_\Lambda(x,t)$ be the $\mathcal{N}(\Lambda)$-soliton solution corresponding to   scattering data
$\left\lbrace  0,\left\lbrace \zeta_n,\mathring{c}_n\right\rbrace_{n\in\Lambda}\right\rbrace$ shown in Corollary \ref{sol}. And $\Lambda$ is  defined in (\ref{devide}). There exist a large constant $T_1=T_1(\xi)$, for all $T_1<t\to\infty$,
	\begin{align}
	q(x,t)=exp\left\lbrace \frac{i}{2}\int_{-\infty}^x(|q^r_\Lambda(x,t)|^2-1)dy \right\rbrace T(\infty)^{-2}q^r_\Lambda(x,t)+\mathcal{O}(t^{-3/4}),\label{resultu}
	\end{align}
where $u^r_\Lambda(x,t)$ and $T(z)$ are show in Propositions \ref{proT} and Corollary \ref{sol} respectively.

\end{theorem}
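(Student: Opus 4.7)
The plan is to invert the entire chain of transformations constructed in Sections \ref{sec3}--\ref{sec7} and extract the large-$z$ behavior, then feed it into the reconstruction formula (\ref{q}). Inverting the transformations (\ref{transm1}), (\ref{transm2}), (\ref{transm3}) and (\ref{transMr}) gives the identity
\begin{align}
M(z) = T(\infty)^{\sigma_3} M^{(3)}(z) E(z) M^{(r)}_\Lambda(z) R^{(2)}(z)^{-1} T(z)^{-\sigma_3}.\nonumber
\end{align}
To reach $m(x,t) = \lim_{z\to\infty}[zM]_{12}$, I would choose a sectoral limit $z\to\infty$ outside $\overline{\Omega}$ so that $R^{(2)}(z)\equiv I$ along the limiting ray, thereby removing the $\bar\partial$-deformation dressing from the expansion.

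Next I would combine the three large-$z$ expansions that are already in hand: the expansion (\ref{expT}) for $T(z)$ from Proposition \ref{proT}(e), which gives $T(z)^{-\sigma_3} = T(\infty)^{-\sigma_3}(1 + cz^{-1} + \mathcal{O}(z^{-2}))^{-\sigma_3}$ with $c = \frac{1}{2\pi i}\int_{i\mathbb{R}}\log(1-\rho\tilde\rho)ds$; the expansion $E(z)=I+E_1z^{-1}+\mathcal{O}(z^{-2})$ with $|E_1|\lesssim e^{-2\rho_0 t}$ from Proposition \ref{asyE}; and the $\bar\partial$-expansion $M^{(3)}(z)=I+M^{(3)}_1 z^{-1}+\mathcal{O}(z^{-2})$ with $|M^{(3)}_1|\lesssim t^{-3/4}$ from Proposition \ref{asyM3}. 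Conjugating by $T(\infty)^{\sigma_3}$ is harmless for the $(1,2)$-entry (it contributes a factor $T(\infty)^{-2}$), and the cross terms involving $E_1$ and $M^{(3)}_1$ are absorbed into the residual error. The $(1,2)$-entry of $zM$ therefore reduces, up to $\mathcal{O}(t^{-3/4})$, to $T(\infty)^{-2}\lim_{z\to\infty}[zM^{(r)}_\Lambda(z)]_{12}$, which by Corollary \ref{sol} equals $T(\infty)^{-2} e^{-i\nu_-(x,t;q^r_\Lambda)} q^r_\Lambda(x,t)$.

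Taking moduli and using the reconstruction formula (\ref{recons q})--(\ref{q}) gives $|m(x,t)| = |q^r_\Lambda(x,t)| + \mathcal{O}(t^{-3/4})$ since $|T(\infty)|=1$. Substituting this back into the self-consistent exponent in (\ref{q}) together with the expression for $m(x,t)$ produces the desired formula (\ref{resultu}); the difference between $|m|^2$ and $|q^r_\Lambda|^2$ inside the integral is $\mathcal{O}(t^{-3/4})$ pointwise and the exponential is Lipschitz in its argument, so the integrated error remains $\mathcal{O}(t^{-3/4})$ on the relevant finite $x$-intervals implicit in the fixed cone $-3<\xi<-1$.

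The main obstacle in this pipeline is not the final bookkeeping but rather the $t^{-3/4}$ estimate for $M^{(3)}_1$ in Proposition \ref{asyM3}: one must control the solid Cauchy integral of $M^{(r)}\bar\partial R^{(2)} (M^{(r)})^{-1}$ against the oscillatory weight $e^{\pm 2it\theta}$ over the eight sectors $\Omega_j$, and the determinant identity $\det M^{(r)}=1+z^{-2}$ forces one to track the apparent singularities of $(M^{(r)})^{-1}$ at $z=\pm i$ by means of the refined extensions $R_{21},R_{22}$ in Proposition \ref{proR}. Once the sectorial integrals $I_4,I_5,I_6$ are bounded using Lemma \ref{Imtheta} together with H\"older's inequality in mixed $L^p$-$L^q$ scales (as in the proof of Proposition \ref{asyM3}), the threshold $t^{-3/4}$ emerges and governs the error in (\ref{resultu}); all remaining contributions ($E_1$, $V^E-I$) are exponentially small in $t$ by Proposition \ref{asyE} and therefore subdominant.
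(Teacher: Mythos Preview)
Your proposal is correct and follows essentially the same route as the paper's own argument in Section~\ref{sec9}: invert the chain (\ref{transm1})--(\ref{transm2})--(\ref{transm3})--(\ref{transMr}), take $z\to\infty$ along a ray outside $\overline{\Omega}$ so that $R^{(2)}\equiv I$, substitute the expansions of $T$, $E$, $M^{(3)}$ from Propositions~\ref{proT}, \ref{asyE}, \ref{asyM3}, and then feed $m(x,t)$ into the reconstruction formula (\ref{q}). Your identification of the $t^{-3/4}$ bound on $M^{(3)}_1$ as the governing error and your remark on the $\pm i$ singularities handled via the refined $R_{21},R_{22}$ extensions also match the paper's emphasis.
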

The   long time asymptotic expansion  (\ref{resultu}) shows the soliton resolution  of  for  the initial value problem  of  the the derivative nonlinear schr$\ddot{o}$dinger equation,
which   can be characterized with  an $\mathcal{N}(\Lambda)$-soliton whose parameters are modulated by
a sum of localized soliton-soliton
 interactions . Our results also show that the poles on curve soliton solutions of
  short-pulse  equation has dominant contribution to the solution as $t\to\infty$.\vspace{6mm}

\noindent\textbf{Acknowledgements}

This work is supported by  the National Science
Foundation of China (Grant No. 11671095,  51879045).

\hspace*{\parindent}
\\

\end{document}